\documentclass[11pt]{article}
\usepackage[T1]{fontenc}
\usepackage[utf8]{inputenc}
\usepackage{mathpazo}
\usepackage[a4paper,margin=1in]{geometry}
\usepackage{amsmath,amssymb,amsthm}
\usepackage{bbm}
\usepackage{thm-restate}
\usepackage{graphicx}
\usepackage{float}
\usepackage{booktabs}
\usepackage{enumitem}
\usepackage{array}
\usepackage{tabularx}
\usepackage{ragged2e}
\usepackage{xcolor}
\usepackage{tikz}
\usepackage{changepage}
\usepackage[numbers]{natbib}
\usepackage[colorlinks=true,linkcolor=blue,citecolor=blue,urlcolor=blue]{hyperref}
\usepackage[nameinlink,noabbrev]{cleveref}

\definecolor{darkRed}{HTML}{EF3B2C}
\definecolor{darkBlue}{HTML}{2166AC}

\newtheorem{theorem}{Theorem}[section]
\newtheorem{lemma}[theorem]{Lemma}
\theoremstyle{definition}
\newtheorem{definition}[theorem]{Definition}
\newenvironment{Theorem}{\begin{theorem}}{\end{theorem}}
\newenvironment{Lemma}{\begin{lemma}}{\end{lemma}}
\newenvironment{Definition}{\begin{definition}}{\end{definition}}
\crefname{theorem}{Theorem}{Theorems}
\Crefname{theorem}{Theorem}{Theorems}
\crefname{lemma}{Lemma}{Lemmas}
\Crefname{lemma}{Lemma}{Lemmas}
\crefname{definition}{Definition}{Definitions}
\Crefname{definition}{Definition}{Definitions}

\title{Stable Boundaries of Opinion Dynamics in Heterogeneous Spatial Complex Networks}
\author{%
Mats Bierwirth\\
Department of Computer Science, ETH Z\"urich, Switzerland\\
\texttt{mbierwirth@student.ethz.ch}
\and
Johannes Lengler\\
Department of Computer Science, ETH Z\"urich, Switzerland\\
\texttt{johannes.lengler@inf.ethz.ch}}
\date{}

\begin{document}
\maketitle

\begin{abstract}
We investigate majority-vote opinion dynamics on Geometric Inhomogeneous Random Graphs (GIRGs), a powerful model for spatial complex networks. In contrast to classic coarsening dynamics where a single opinion typically achieves global consensus, our simulations reveal that sufficiently large, localized opinion domains do not disappear. Instead, they stabilize, leading to a persistent coexistence of competing opinions. To understand the mechanism behind this arrested coarsening, we develop and analyze a tractable mean-field model of the interface between two opinion domains. Our main theoretical result rigorously establishes the existence of a stable, non-trivial limiting distribution for the interface profile in a mean-field analysis. This demonstrates that the boundary between opinions is stationary, providing a mathematical explanation for how complex network geometry can support robust opinion diversity in social systems.
\end{abstract}

\noindent\textbf{Keywords:} Opinion Dynamics; Majority-Vote Model; Geometric Inhomogeneous Random Graphs (GIRGs); Complex Networks; Stable Interfaces; Arrested Coarsening; Mean-Field Analysis; Social Influence

\section{Introduction}
\noindent\textbf{Emergence of Stable Boundaries in Spatial Opinion Dynamics.} The diffusion of ideas, the adoption of innovations, and the formation of political consensus are all driven by social influence within populations structured by complex networks. A growing body of evidence suggests that many real-world networks, from online social platforms to offline communities, can be modelled by assuming a latent geometric space where proximity increases the connection probability~\cite{boguna2010sustaining,bringmann2019geometric,blasius2024external,blasius2024robust,dayan2024expressivity,komjathy2024polynomial}. This underlying geometry gives rise to many properties observed in real social networks, including high clustering and stronger communities than the degree distribution would predict~\cite{bringmann2019geometric,blasius2024robust,kaufmann2024sublinear,kaufmann2025assortativity}. Naturally, those structural properties shape the dynamics of social processes in the network.

A fundamental social mechanism is conformism: individuals tend to adopt the opinion held by the majority of their peers~\cite{Flache2017}. This principle is captured by the simple and intuitive majority-vote dynamics, a process where agents iteratively update their state to match their local environment~\cite{Galam2002,Krapivsky2003,Castellano2009} and maintain their state in case of a tie. This work begins with a simple yet foundational question: in a world divided into two competing opinions, what determines the fate of a localized enclave of one opinion surrounded by the other? To investigate this, we performed simulations of a sequential majority process on Geometric Inhomogeneous Random Graphs (GIRGs), a state-of-the-art model for spatial complex networks~\cite{Bringmann2017,bringmann2019geometric}. The simulations then lead to a hypothesis that we rigorously verify in a mean-field approximation of the process and the graph models.

Our simulations reveal a dichotomy that forms the central puzzle of this paper. When an initial, localized domain of one opinion--say, ``blue''--is small, it is quickly eroded by the surrounding ``red'' majority and vanishes, leading to global consensus. This outcome aligns with the classical expectation of coarsening dynamics~\cite{bray2002theory}. However, if the initial blue domain is sufficiently large, the dynamics are markedly different. The domain initially shrinks and its boundaries become smoother, but the process halts. The system settles into a stable configuration where a persistent, ball-like cluster of the blue opinion coexists indefinitely with the red majority. This phenomenon of arrested coarsening suggests that interfaces between opinion domains in these complex networks can be remarkably stable. The central question of this paper is, therefore: what is the underlying mechanism that governs this stability?\smallskip

\noindent\textbf{Context: From Coarsening Physics to Complex Networks.} The evolution of boundaries between competing phases is a classic topic in statistical physics~\cite{bray2002theory}. In models such as the Voter Model on regular Euclidean lattices, where vertices randomly adopt the state of one of their neighbours, the system typically evolves to minimize the length of the interface separating domains, a process known as coarsening~\cite{Dornic2001}. On any finite graph, this process will continue until eventually one domain has been completely eliminated, resulting in a global consensus where all agents share the same state. In many graphs, such convergence happens very quickly~\cite{cooper2010multiple,lyons2017probability,cox1989coalescing}. For majority dynamics as studied in this paper, consensus is not guaranteed, but has also been shown to emerge quickly on many graphs~\cite{gartner2018majority,mossel2014majority}. Our observation represents a significant departure from this classical picture and points toward a mechanism that is intrinsic to the underlying network. While stable coexistence of two opinions for majority dynamics trivially occurs in simple spatial graphs like lattices, it was unclear whether complex networks with long-range edges would show the same pattern. 

The network in our study is the Geometric Inhomogeneous Random Graph (GIRG) model~\cite{bringmann2019geometric}. GIRGs have emerged as a powerful and realistic framework for modelling complex networks~\cite{jorritsma2020not,komjathy2024polynomial,komjathy2023four,blasius2024robust,blasius2024external,kaufmann2025assortativity}. They rely on two ingredients. First, they are embedded in a geometric space, with connection probability decaying with distance. Second, vertices are endowed with heterogeneous weights, drawn from a power-law distribution, which gives rise to a scale-free degree distribution and the existence of highly connected ``hub'' vertices. The combination of those two features gives rise to many other emerging properties of real-world networks: they have clustering and communities~\cite{bringmann2019geometric}, ultra-small distances~\cite{bringmann2025average}, they are navigable~\cite{bringmann2017greedy} and  compressible~\cite{bringmann2019geometric}. They show a remarkably rich phase diagram for infection processes~\cite{komjathy2024polynomial,komjathy2023four} and rumour spreading~\cite{kaufmann2026rumour}, and the performance of several graph algorithms on GIRGs has been shown to match closely that on real-world networks~\cite{blasius2024external,cerf2024balanced}.
They also include the popular model of Hyperbolic Random Graphs~\cite{Krioukov2010} as special case~\cite{bringmann2025average}. While many spreading processes like infection models~\cite{jorritsma2020not,komjathy2024polynomial}, rumour spreading~\cite{kaufmann2026rumour}, first-passage percolation~\cite{komjathy2020explosion,komjathy2021penalising}, and bootstrap percolation~\cite{Koch2016} have been analysed for GIRGs, to the best of our knowledge this is the first paper studying a competitive dynamics between two opinions. \smallskip

\noindent\textbf{Mathematical Contribution: A Tractable Mean-Field Model of the Interface.} It is not hard to understand why a too small ball of diverging opinion disappears. We give a brief heuristic argument in Section~\ref{sec:experiments} that a boundary is unstable when its curvature becomes too small. This also explains the experimentally observed phenomenon that a large box of blue opinion will slightly shrink into a ball before stabilizing. However, the mathematically more challenging part is to understand why a sufficiently large ball stabilizes.

To gain mathematical traction on this question, we analyse the process in a simplified, continuous setting. Since stability of the interface is only achieved for sufficiently large balls, we first consider the macroscopic limit of an infinitely large ball of one opinion. In this limit, the curvature of the interface approaches zero, and the boundary converges to a straight line separating two half-spaces, one initially all-red and the other all-blue. This approach of studying a planar interface is a powerful technique in the statistical mechanics of phase separation and allows a detailed analysis of the interface profile and its dynamics. In a second step we then show how the results for this macroscopic limit transfers to the case of balls with large but finite radius.

We model the state of this idealized system by a function $f(z)$, which represents the probability that a vertex located at a signed distance $z$ from the interface holds the blue opinion. The evolution of $f(z)$ over time is governed by an operator derived from the majority dynamics. In each step of our mean-field process, a vertex' neighbourhood is effectively resampled from a hypothetical population whose red-/blue-assignment follows the probability distribution of the previous step. The number of red and blue neighbours of a vertex are then independent Poisson random variables, redrawn in each round. This is a common and effective approximation for sparse random graphs where the presence of edges is largely independent, but it is a substantial simplification for spatial models. The expectation of the two Poisson random variables is then obtained by integrating the current opinion profile $f$ against the connection kernel of the GIRG model. This ``mean-field assumption'', formally stated in \Cref{def:mean-field} of our technical analysis, decouples local spatial correlations. While this ignores the fact that neighbours of a red vertex are themselves more likely to be red, it makes the system analytically tractable. The evolution of the system is then described by an update operator $\mathcal T$ derived from the dynamics.\smallskip

\noindent\textbf{Our Results.} Our main theoretical result is concerned with the limiting distribution $f^* = \lim_{i\to \infty} \mathcal T^if$ obtained by applying the operator $\mathcal T$ repeatedly. In principle, a mean-field dynamics as described above could lead to two types of limiting distributions: the constant function $\hat f \equiv 1/2$ is a fixed point of $\mathcal T$, so that would be a candidate for $f^\star$, which would then indicate that the phase boundary is not stable. However, we show that this is not the case in the macroscopic limit and that the limiting distribution is of a different type. Our main theoretical result, presented in \Cref{thm:main}, demonstrates that the limiting distribution $f^\star(z)$ between halfspaces is bounded away from $1/2$ for sufficiently large $z$. In other words, the initially red region maintains a red majority forever, and likewise for the blue region. The existence of this stable, non-constant solution provides a rigorous mathematical explanation (in a simplified setting) for the persistence of the interface observed in our simulations. 

Our proof proceeds by constructing a class of ``valid'' functions that satisfy certain structural properties (monotonicity and symmetry) and that are subsolutions of $\mathcal T$, i.e., the operator $\mathcal T$ maps them pointwise into more extreme functions, meaning that the absolute distance from $1/2$ increases pointwise. We then show that any valid subsolution of $\mathcal T$ is a pointwise lower bound for the distance from $1/2$ of the true solution $f^\star$. Finally, for sufficiently large average degree in the underlying GIRG we show the existence of a non-trivial valid subsolution of $\mathcal T$ by an explicit construction, thereby guaranteeing that both opinions survive in their respective majority regions.

In a second step, we transfer the results to the case of balls of finite, but growing radius $r$, which is more subtle. Due to the finite curvature we can not expect the ball to persist indefinitely, as we formally show in Theorem~\ref{thm:non_convergence}. However, we show that the speed by which the ball erodes shrinks with the size of the ball. Note that the speed here is measured additively, i.e., if we start with a ball of radius $r = \omega(1)$ then even after time $t=\omega(1)$ the local opinion will still dominate in a ball of radius $r-o(1)$. This means that the speed of erosion approaches zero as $r\to \infty$. In the discussion we argue that this also explain arrested coarsening in the discrete setting of actual graphs. A vertex in a GIRG has typical geometric distance $\Omega(1)$ from all other vertices. Hence, in the discrete graph setting a ball of local opinion cannot shrink with arbitrarily small speed, and speed $o(1)$ in the mean field approximation naturally corresponds to speed zero in the discrete setting, meaning stability. \smallskip

\noindent\textbf{Summary.} In summary, this paper provides the first analytical evidence for the stability of opinion domains in the majority dynamics model on Geometric Inhomogeneous Random Graphs. By combining direct simulation with a rigorous mean-field analysis, we show that opinion forming in complex networks may deviate from the classic picture of opinion coarsening, with a limiting robust coexistence instead of global consensus. Our results shed light on the mechanisms that can support opinion diversity and the formation of stable ideological clusters in spatially-embedded social systems. Furthermore, they provide a new analytical framework for studying interface phenomena on complex networks, bridging concepts from statistical physics with the modern theory of random graphs.

A summary of the results has been presented at the 14th International Conference on Complex Networks and their Applications in Binghamptom, NY, US~\cite{bierwirth2025stable}. The associated proceedings paper contains an outline of the results, but without proofs. Moreover, it only contains results for the macroscopic limit where both opinions form halfspaces, whereas now we also show how those results transfer to balls of finite sizes. To extend the proof to this domain, we have switched the norm of the underlying geometric space from the maximum norm $\ell_\infty$ to the Euclidean norm $\ell_2$ because this makes the space rotationally invariant.

\section{Network Model and Opinion Spreading}\label{sec:definitions}
\noindent We start with the definition of Geometric Inhomogeneous Random Graphs (\emph{GIRG}) and state a few basic properties that will be useful in our analysis. 
\begin{Definition}[GIRG~\cite{bringmann2019geometric}]
    Let $d \in \mathbb{N}$, $\tau>2$ and depending on the previous $k$ be fixed constants and let $\mathcal{X}$ be a $d$-dimensional cube of volume $n \in \mathbb{N}$ with torus topology centered at the origin. Distances $\lVert x_u - x_v\rVert$ are measured with respect to the $\ell_2$ norm on this torus. A Geometric Inhomogeneous Random Graph $\mathrm{G} = (\mathrm{V},\mathrm{E})$ on $n$ vertices is obtained by the following three-step procedure.
    \begin{enumerate}[label=(\alph*)]
        \item Each vertex $v \in \mathrm{V}$ independently draws a weight $w_v$ from the power-law distribution on $\mathcal{D} = [1,\infty)$ with density
        $\rho(w) = (\tau-1) w^{-\tau}$ for $w \geq 1$.
        \item Each vertex $v \in \mathrm{V}$ draws independently a uniform random position $x_v\in \mathcal{X}$.
        \item A vertex pair $u, v \in \mathrm{V}$ forms an edge if and only if $\lVert x_u-x_v\rVert^d \le k \,w_u\,w_v$. 
    \end{enumerate}
\end{Definition}
\noindent Compared to~\cite{bringmann2019geometric}, we restrict to the zero-temperature case\footnote{In general, a temperature parameter $T = 1/\alpha$ makes edges appear with probability $\min\{1,w_u\, w_v / \lVert x_u - x_v \rVert_\infty^d\}^{\alpha}$. 
At zero temperature ($\alpha = \infty$), this becomes a deterministic threshold rule, which is analytically more tractable while still preserving the essential geometric and heterogeneous structure of GIRGs.}, 
and we explicitly model the \emph{density parameter} $k$ that was hidden in $\Theta$-notation. 
Equivalently to our use of $k$, we could also increase the density of vertices from $1$ to~$k$.

The \emph{ball of influence} is the region in which a vertex connects to all other vertices independently of their weights. It is a region that the vertex $v$ dominates.

\begin{Definition}[Ball of Influence]
    Let $\mathrm{G} = (\mathrm{V},\mathrm{E})$ be a \emph{GIRG} on $n$ vertices. The \emph{ball of influence} $I(v)$ of a vertex $v$ is the $\ell_2$-ball around $x_v$ of radius $r_{I(v)}:=k^{1/d} \, w_v^{1/d}$. Any vertex $u$ in $I(v)$ is a neighbour of $v$ regardless of $w_u$.
\end{Definition}
\noindent Beyond the ball of influence, vertices may still connect if their weight is sufficiently large. We call neighbours from the ball of influence \emph{near}-neighbours or $N^\mathrm{near}(v)$ for short, and other neighbours \emph{far}- neighbours or $N^\mathrm{far}(v)$ for short.

The following theorem gives the expected degree of a vertex of given weight. It is one of the fundamental properties of the model and will later serve as a key ingredient in our mean-field approximation. The following refines a result from \citet{bringmann2019geometric}.

\begin{Theorem} \label{thm:NumberOfNeighbours} Let $\mathrm{G} = (\mathrm{V},\mathrm{E})$ be a GIRG on $n$ vertices and  $v \in \mathrm{V}$. Condition on the event that $v$ has weight $w_v$. Then
\begin{align*}
    \mathbb{E}[|N(v)|\mid \,w_v ] = \frac{\pi^{d/2}}{\Gamma\left(\frac{d}{2}+1\right)} k \, w_v \, \left( 1 + \frac{1}{\tau-2}\right),
\end{align*}
where the first term is contributed by the near-neighbours and the second from the far-neighbours.
\end{Theorem}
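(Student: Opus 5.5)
The plan is to compute the expectation directly by linearity over the other vertices, integrating over each vertex's independent position and weight, and then to split the resulting integral into the near and far contributions. Write $V_d := \pi^{d/2}/\Gamma(\tfrac d2+1)$ for the volume of the unit $\ell_2$-ball in $\mathbb{R}^d$. Fix $v$ with weight $w_v$ and position $x_v$. For any other vertex $u$ we have
\[
\mathbb{E}[|N(v)|\mid w_v] = \sum_{u\neq v}\Pr\bigl[\lVert x_u-x_v\rVert^d\le k\,w_u\,w_v\bigr].
\]
Conditional on $w_u=w$, the edge event is exactly $x_u$ lying in the $\ell_2$-ball of radius $(k w_v w)^{1/d}$ around $x_v$. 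Since $x_u$ is uniform on $\mathcal X$ (volume $n$), this probability is $V_d\,k\,w_v\,w/n$, as long as the ball fits in the torus. Summing over the vertices, with density one per unit volume, gives
\[
\mathbb{E}[|N(v)|\mid w_v] = \int_1^\infty V_d\,k\,w_v\,w\,\rho(w)\,dw = V_d\,k\,w_v\,\mathbb{E}[W].
\]
Since $\tau>2$, we have $\mathbb{E}[W]=\int_1^\infty (\tau-1)w^{1-\tau}dw = \frac{\tau-1}{\tau-2} = 1+\frac{1}{\tau-2}$.

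The split into the two terms comes from the ball of influence. Every $u$ has $w_u\ge 1$, so the ball of radius $(k w_v w_u)^{1/d}$ always contains $I(v)$. The expected number of vertices in $I(v)$ is $V_d\,k\,w_v$, and these are the near-neighbours. The remaining contribution is
\[
V_d\,k\,w_v\int_1^\infty (w-1)\rho(w)\,dw = V_d\,k\,w_v\bigl(\mathbb{E}[W]-1\bigr) = \frac{V_d\,k\,w_v}{\tau-2},
\]
and it comes from vertices outside $I(v)$ whose weight is large enough to connect anyway. These are the far-neighbours, which gives the stated decomposition.

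The only real obstacle is the finite-size bookkeeping, and there are two issues.
\begin{enumerate}[label=(\roman*)]
\item There are $n-1$ rather than $n$ other vertices, which introduces a factor $(n-1)/n$.
\item On the torus, a ball of radius $(k w_v w)^{1/d}$ that is comparable to the side length $n^{1/d}$ is truncated. This happens only for $w \gtrsim n/(k w_v)$.
\end{enumerate}
I would handle both by a short estimate. The truncated part of the integral is at most $V_d\,k\,w_v\int_{n/(kw_v)}^\infty w\rho(w)\,dw = O\bigl(w_v\,(n/(kw_v))^{2-\tau}\bigr)$, which vanishes as $n\to\infty$ because $\tau>2$. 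Equivalently, one works in the infinite-volume (Poisson point process of intensity one on $\mathbb{R}^d$) limit that the later mean-field analysis uses, and there the computation above is exact. I would state the identity in that limit, or up to a $(1+o(1))$ factor, and remark that this is the sense in which it refines the $\Theta(w_v)$ bound of \citet{bringmann2019geometric}.
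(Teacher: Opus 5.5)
Your proof is correct. It computes the same double integral $\int_{\mathbb{R}^d}\int_1^\infty \mathbbm{1}\bigl(\lVert x-x_v\rVert^d\le k w_v w\bigr)\rho(w)\,dw\,dx$ as the paper, but integrates in the opposite order.

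The paper handles the two terms separately (Lemmas~\ref{lem:neigh_ball} and~\ref{lem:neigh_far}):
\begin{itemize}
\item The near term is simply $\mathrm{Vol}(I(v))=V_d k w_v$.
\item The far term integrates over spherical shells of radius $r>r_I$. Each shell has surface area $\frac{2\pi^{d/2}}{\Gamma(d/2)}r^{d-1}$ and is weighted by the weight-tail probability $\Pr[w_u\ge r^d/(kw_v)]=(r^d/(kw_v))^{1-\tau}$. Evaluating $\int_{r_I}^\infty r^{d(2-\tau)-1}\,dr$ then gives $V_d k w_v/(\tau-2)$.
\end{itemize}

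You instead integrate over position first for each fixed neighbour weight $w$, which gives a ball of volume $V_d k w_v w$, and then integrate over $w$. This yields the total as $V_d k w_v\,\mathbb{E}[w_u]$ in one line. It also explains the constant $1+\frac{1}{\tau-2}=\frac{\tau-1}{\tau-2}$ as the mean weight. The near/far split then comes from writing $w=1+(w-1)$, i.e.\ ball of influence plus an annulus of volume $V_d k w_v(w-1)$.

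Your route is shorter and more transparent. The paper's shell computation is more laborious, but it produces the distance profile $(r^d/(kw_v))^{1-\tau}$ of far-neighbour connections. That weight-tail form is what reappears later, e.g.\ in the inner integrals of Lemma~\ref{lem:advantage_blue}.

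Your finite-size remarks go beyond the paper. The paper silently assumes unit density and integrates $r$ up to $\infty$, ignoring both the $(n-1)/n$ factor and wrap-around on the torus. Strictly speaking, the stated equality is exact only in that infinite-volume (Poisson) idealization. For $n$ vertices on the torus it holds only up to a $1+o(1)$ factor for fixed $w_v$, so stating it that way, as you propose, is the correct reading.
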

\noindent We split the proof into two lemmas, corresponding to the two contributions to the degree: near neighbours and far neighbours. 
\begin{Lemma} \label{lem:neigh_ball}
    Let $\mathrm{G} = (\mathrm{V},\mathrm{E})$ be a \emph{GIRG} on $n$ vertices and  $v \in \mathrm{V}$. Condition on the event that $v$ has weight $w_v$, the expected number of neighbours of $v$ from the ball of influence is
    \begin{align*}
        \mathbb{E}\bigl[|N^\mathrm{near}(v)|\mid w_v\bigr] = \frac{\pi^{d/2}}{\Gamma\left(\frac{d}{2}+1\right)}  k \, w_v.
    \end{align*}
\end{Lemma}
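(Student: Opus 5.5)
The plan is to compute the expectation by linearity over the other $n-1$ vertices, using only that positions are uniform on the torus and independent of weights. Fix $v$ and condition on $w_v$ and on $x_v$. Because the torus $\mathcal X$ is translation invariant, the answer cannot depend on $x_v$, so we may take $x_v=0$. Then
\[
\mathbb{E}\bigl[|N^\mathrm{near}(v)|\mid w_v\bigr]=\sum_{u\neq v}\Pr\bigl[x_u\in I(v)\mid w_v\bigr].
\]
Every $u\in I(v)$ satisfies $\lVert x_u-x_v\rVert^d\le k\,w_v\le k\,w_uw_v$, since $w_u\ge 1$. So $u$ is a near-neighbour exactly when $x_u\in I(v)$, and this event does not involve $w_u$ at all.

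Next I compute the probability for a single $u$. Since $x_u$ is uniform on a cube of volume $n$, independently of $w_v$, we get $\Pr[x_u\in I(v)]=\mathrm{vol}(I(v)\cap\mathcal X)/n$. As long as the radius $r_{I(v)}=(k w_v)^{1/d}$ is at most half the side length $n^{1/d}$, the ball does not wrap around the torus. It then has the ordinary Euclidean volume
\[
\frac{\pi^{d/2}}{\Gamma\left(\frac{d}{2}+1\right)}\,r_{I(v)}^d=\frac{\pi^{d/2}}{\Gamma\left(\frac{d}{2}+1\right)}\,k\,w_v,
\]
using the standard formula for the volume of the $d$-dimensional $\ell_2$-ball. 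Multiplying by the number of other vertices gives $\frac{n-1}{n}\cdot\frac{\pi^{d/2}}{\Gamma(d/2+1)}k\,w_v$.

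The main obstacle is bookkeeping rather than mathematics, and it has two parts. First, the factor $(n-1)/n$: I will read the statement, as is customary in GIRG papers, as an asymptotic identity for $n\to\infty$. Alternatively, the paper's convention of density-one vertices amounts to a Poisson point process of intensity $1$, and then the Mecke formula gives exactly $\int_{I(v)}1\,dx=\mathrm{vol}(I(v))$. I will adopt the Poisson/density-one viewpoint so that the identity holds exactly. Second, the torus wrap-around: I will handle it by assuming $k\,w_v\le 2^{-d}n$, which holds for all weights of interest, since $w_v=o(n)$ with high probability for $\tau>2$ and $k$ is fixed. For larger $w_v$ the true value is only smaller, capped at $n$, and the formula should be understood in the regime where the ball fits inside the torus.
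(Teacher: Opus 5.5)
Your proposal is correct and follows the same route as the paper. The ball of influence is an $\ell_2$-ball of radius $(kw_v)^{1/d}$, so its volume is $\frac{\pi^{d/2}}{\Gamma(d/2+1)}\,k\,w_v$; every vertex in it is adjacent to $v$ whatever its weight, and unit vertex density turns this volume into the expected count. Your handling of the $(n-1)/n$ factor (via the density-one/Poisson reading) and of torus wrap-around (requiring $k w_v \le 2^{-d} n$) makes precise what the paper's one-line density argument assumes without saying. Note also that the last displayed line of the paper's proof carries a stray factor $\frac{1}{\tau-2}$, which is a typo; your computation gives the value actually stated in the lemma.
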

\begin{proof}
    With respect to $\lVert .\rVert_2$, the ball of influence $I(v)$ is a $d$-dimensional ball $B{r_I}$ of radius $r_I$.
 The volume of such a hypercube is
    \begin{align*}
        \mathrm{Vol}(B_{r_I}) = \frac{\pi^{d/2}}{\Gamma\left(\frac{d}{2}+1\right)} \, r_I^d,
    \end{align*}
    where $\Gamma$ represents Eulers gamma function. 
    \noindent By definition of $r_I$ all vertices inside $B_{r_I}$, regardless of their weight, connect to $v$. Since $\mathrm{G}$ distributes $n$ vertices in a space of volume $n$, the density of the expected number of vertices per volume is one. Thus the ball of influence will contain one vertex for each volume of space contained in $B_{r_I}$. Thus
    \begin{align*}
        \mathbb{E}\bigl[N^\text{near}(v)\,\bigm|\,w_v\bigr] &= \text{Vol}(B_{r_I}) = \frac{\pi^{d/2}}{\Gamma\left(\frac{d}{2}+1\right)} k \, w_v \, \frac{1}{\tau-2},
    \end{align*}
    concluding the proof.
\end{proof}

\begin{Lemma} \label{lem:neigh_far}
    Let $\mathrm{G} = (\mathrm{V},\mathrm{E})$ be a GIRG on $n$ vertices and  $v \in \mathrm{V}$. Condition on the event that $v$ has weight $w_v$, the expected number of far neighbours of $v$ is
    \begin{align*}
        \mathbb{E}\bigl[N^\mathrm{far}(v)\bigm|\,w_v\bigr] &=  \frac{\pi^{d/2}}{\Gamma\left(\frac{d}{2}+1\right)} k \, w_v \, \frac{1}{\tau-2}.
    \end{align*}
\end{Lemma}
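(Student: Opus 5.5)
The plan is to compute the expected number of far neighbours by integrating over the weight of the potential neighbour $u$, using that positions are uniform with unit density. Fix $v$ with weight $w_v$, and let $c_d := \pi^{d/2}/\Gamma(\frac d2+1)$ denote the volume of the unit $\ell_2$-ball. A vertex $u$ of weight $w_u$ is adjacent to $v$ exactly when $x_u$ lies in the ball of radius $(k w_u w_v)^{1/d}$ around $x_v$. That ball has volume $c_d\, k\, w_u w_v$.

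A far neighbour is one that is adjacent but lies outside $I(v)$, whose volume is $c_d\, k\, w_v$. So, conditional on $w_u$, the expected contribution of $u$ is the volume of the annulus between radii $r_{I(v)}$ and $(k w_u w_v)^{1/d}$, divided by $n$. This equals $c_d\, k\, w_v (w_u-1)/n$; it is nonnegative since $w_u\ge 1$.

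Summing over the $n-1\approx n$ other vertices, or working in the limit $n\to\infty$ with unit vertex density, gives
\begin{align*}
\mathbb{E}\bigl[N^\mathrm{far}(v)\bigm| w_v\bigr] = c_d\, k\, w_v \int_1^\infty (w-1)(\tau-1)w^{-\tau}\,dw .
\end{align*}
Since $\tau>2$, the first moment of the weight distribution is finite, $\mathbb{E}[w]=\frac{\tau-1}{\tau-2}$. Hence the integral equals $\frac{\tau-1}{\tau-2}-1=\frac{1}{\tau-2}$, which gives the claimed formula. This matches the second term in \Cref{thm:NumberOfNeighbours}, in the same way that \Cref{lem:neigh_ball} gives the first.

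The main obstacle is the finite torus. A ball of radius $(k w_u w_v)^{1/d}$ may exceed the torus of volume $n$, so the annulus volume is then capped rather than exact. Likewise, $n-1$ rather than $n$ other vertices contribute. I would handle this in one of two ways. The first is to state the identity in the limit $n\to\infty$, i.e.\ for the infinite-volume (Poisson) version used in the mean-field analysis. The second is to bound the truncation error by $\int_{w\gtrsim n/(k w_v)} w\,\rho(w)\,dw = O\bigl((n/(k w_v))^{2-\tau}\bigr)=o(1)$ for fixed $w_v$. I expect the intended reading, consistent with \Cref{lem:neigh_ball}, is the idealized unit-density computation, with these boundary effects ignored.
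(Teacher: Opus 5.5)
Your proof is correct, but it evaluates the underlying double integral in the opposite order from the paper.

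The paper fixes the distance $r>r_{I(v)}$ of a potential neighbour and computes the Pareto tail $\Pr[w_u\ge r^d/(kw_v)]=(r^d/(kw_v))^{1-\tau}$. It then integrates this against the sphere surface $\frac{2\pi^{d/2}}{\Gamma(d/2)}r^{d-1}$ from $r_{I(v)}$ to $\infty$. In that route $\tau>2$ is needed for $\int_{r_I}^\infty r^{d(2-\tau)-1}\,dr$ to converge, and the constant is recovered via $\Gamma(\frac d2+1)=\frac d2\,\Gamma(\frac d2)$.

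You instead fix the neighbour's weight $w_u$, read off the far region as an annulus of volume $c_d\,k\,w_v(w_u-1)$, and average over $\rho$. Here $\tau>2$ enters only as finiteness of the mean $\frac{\tau-1}{\tau-2}$ of the weight distribution.

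Your version is shorter and more conceptual. It shows at once that near and far neighbours together number $c_d\,k\,w_v\,\mathbb{E}[w]$ in expectation. This explains the factor $1+\frac{1}{\tau-2}=\frac{\tau-1}{\tau-2}$ in Theorem~\ref{thm:NumberOfNeighbours}. The paper's distance-first form is the one reused later, e.g.\ in Lemma~\ref{lem:advantage_blue}, where a distance bound is converted into a minimal neighbour weight $(r_I+3z_v/2)^d/(kw_v)$.

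Your treatment of the finite torus is more careful than the paper's, which silently integrates $r$ up to $\infty$ at unit density. One small correction: the truncation error carries the prefactor $c_d\,k\,w_v$ in front of your integral $\int_{w\gtrsim n/(kw_v)} w\,\rho(w)\,dw$. This does not change the $o(1)$ conclusion for fixed $w_v$ and $k$.
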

\begin{proof}
    Let $u$ be a far neighbour of $v$ with $r = \lVert x_u-x_v\rVert > r_I$. Recall that $u$ is a neighbour of $v$ if and only if $w_u  \geq r^d / (k \, w_v)$. 
    Thus the probability that a random vertex $u$ at distance $r$ from $v$ is a (far) neighbour of $v$ is
    \begin{align*}
        \Pr\left[ u\sim v \mid r \right] = \int_{\frac{r^d}{k \, w_v}}^{\infty} \Pr\left[w_u = w \right] \,dw.
    \end{align*}
    By definition, $\Pr\left[w_u = w \right]$ is drawn from a power-law distribution with known density. Thus
    \begin{align*}
    \Pr\left[ u\sim v\mid r  \right] &=  \int_{\frac{r^d}{k \, w_v}}^{\infty} (\tau-1)\,w^{-\tau} \,dw 
    = \left[-w^{1-\tau}\right]_\frac{r^d}{k \, w_v}^\infty 
    = \frac{r^{d \, {(1-\tau)}}}{k^{1-\tau} \, w_v^{1-\tau}}.
    \end{align*}
    For each fixed $r$, the space at distance exactly $r$ from $v$ forms the surface of a $d$-ball. Applying its volume definition and the same same volume argument from Lemma~\ref{lem:neigh_ball}, 
    \begin{align*}
        \mathbb{E}\bigl[|N^\text{far}s(v)|\bigm|\,w_v\bigr] &=\int_{r_I}^\infty  \frac{2\, \pi^{d/2}}{\Gamma\left(\frac{d}{2}\right)} r^{d-1}\,  \Pr\left[ u\sim v\mid r \right]\,dr \\
        &= \frac{2\, \pi^{d/2} \, k^{\tau-1}\, w_v^{\tau-1}}{\Gamma\left(\frac{d}{2}\right)\, d \, (\tau-2)} \left[-r^{d\,(2-\tau)}\right]_{r_I}^{\infty} \\
         &= \frac{\pi^{d/2} \, k^{\tau-1}\, w_v^{\tau-1}}{\Gamma\left(\frac{d}{2}+1\right) \, (\tau-2)} \left[-r^{d\,(2-\tau)}\right]_{r_I}^{\infty}
    \end{align*}
    the number of far neighbours can be expressed in terms of the probability $\Pr\left[ u\sim v\mid r  \right]$. The exponent $d(2-\tau)$ is negative, since $\tau>2$ and he integral evaluated at $\infty$ will therefore vanish. Plugging in $r_I = k^{1/d} \, w_v^{1/d}$ yields
    \begin{align*}
        \mathbb{E}\bigl[N^\text{far}(v)\bigm|\,w_v\bigr] &=  \frac{\pi^{d/2}}{\Gamma\left(\frac{d}{2}+1\right)} k \, w_v \, \frac{1}{\tau-2},
    \end{align*}
    concluding the proof.
    \end{proof}

\noindent On GIRGs we will consider the following sequential majority dynamics.
\begin{Definition}[Opinion Spreading]
Let $\mathrm{G} = (\mathrm{V},\mathrm{E})$ be an undirected graph. An \emph{opinion configuration} at time $t \in \mathbb{N}$ is a function $c_t : V \to \{-1, 1\}$, where $c_t(v)$ denotes the opinion of vertex $v \in V$ at time $t$. For given initial configuration $c_0$, the \emph{Opinion Spreading} process evolves as follows:  
at each time step $t \geq 1$, a vertex $v \in V$ is chosen uniformly at random. This vertex then converts to the majority opinion of its neighbourhood at time step $t-1$. The opinion of all other vertices remain unchanged. That is,
\begin{align*}
c_t(v) = 
\begin{cases}
    \mathrm{sign} \left( \sum\limits_{u \in N(v)} c_{t-1}(u) \right) & \text{if }  \sum\limits_{u \in N(v)} c_{t-1}(u) \neq 0, \\
    c_{t-1}(v) & \text{otherwise},
\end{cases}
\end{align*}
where $N(v)$ denotes the set of neighbours of $v$ in $G$. In particular, in the case of a tie, the opinion of $v$ remains unchanged.
\end{Definition}
\noindent It is known that the Opinion Spreading process always reaches a stable configuration $c^*$ in expected time at most $|V|\cdot|E|$, see~\cite{mossel2014majority}.

\section{Results}
\subsection{Experimental Observations}\label{sec:experiments}
\noindent To motivate the subsequent theoretical analysis, we conducted simulations of the Opinion Spreading process on GIRGs generated with the \texttt{libgirgs-all} library~\cite{GeneratingGirgs}. The experiments used $n=10\,000$ vertices in $d=2$ dimensions, that is a torus of side length $100$, and an average degree of $20$, while varying the degree exponent $\tau>2$. This parameter strongly influences the network structure. For small values of $\tau$, the graph tends to contain many high-weight vertices with long-range connections, whereas larger $\tau$ yields more localised networks with fewer hubs and fewer long edges. 

\begin{figure}[ht] 
    \centering
    \includegraphics[width=0.24\textwidth]{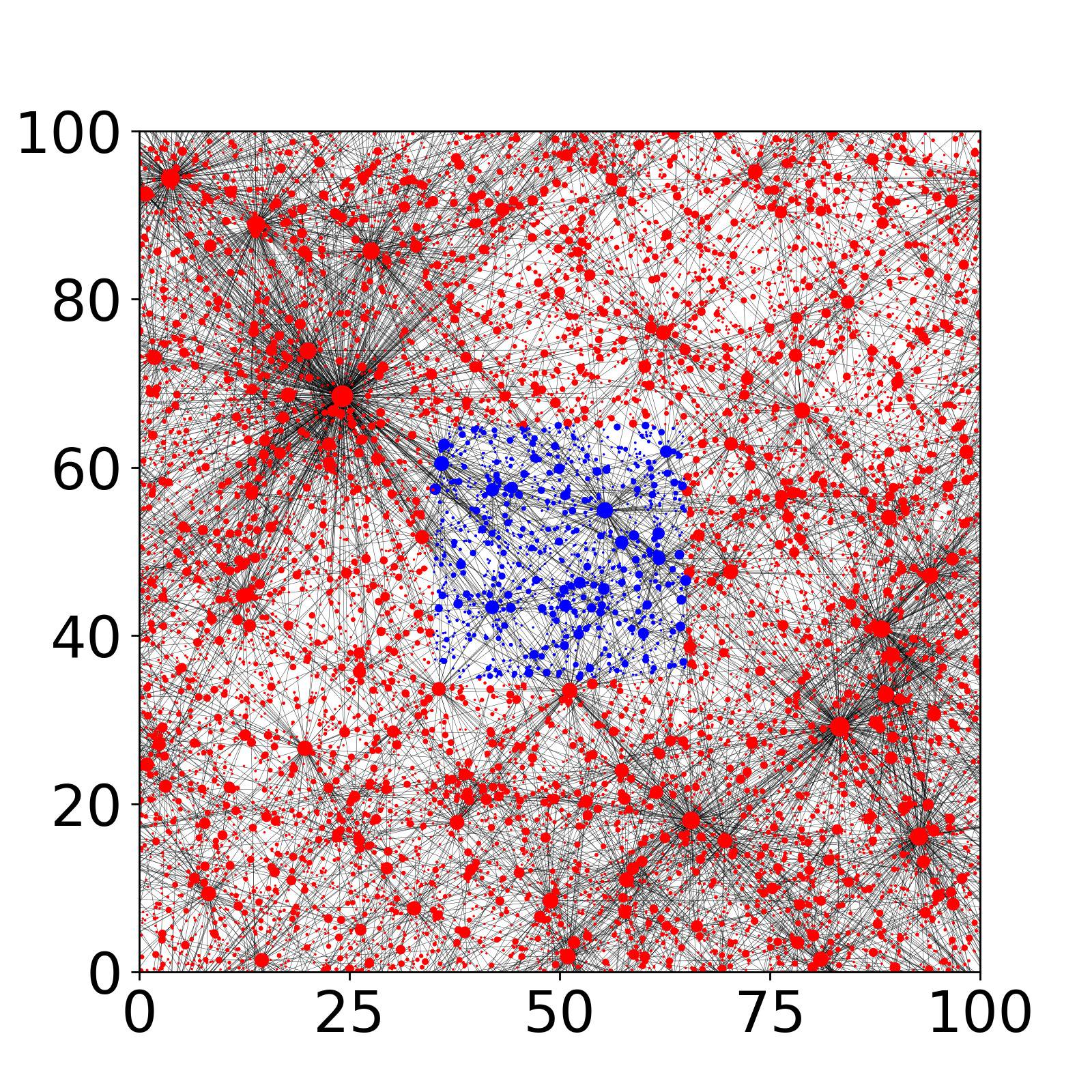} \includegraphics[width=0.24\textwidth]{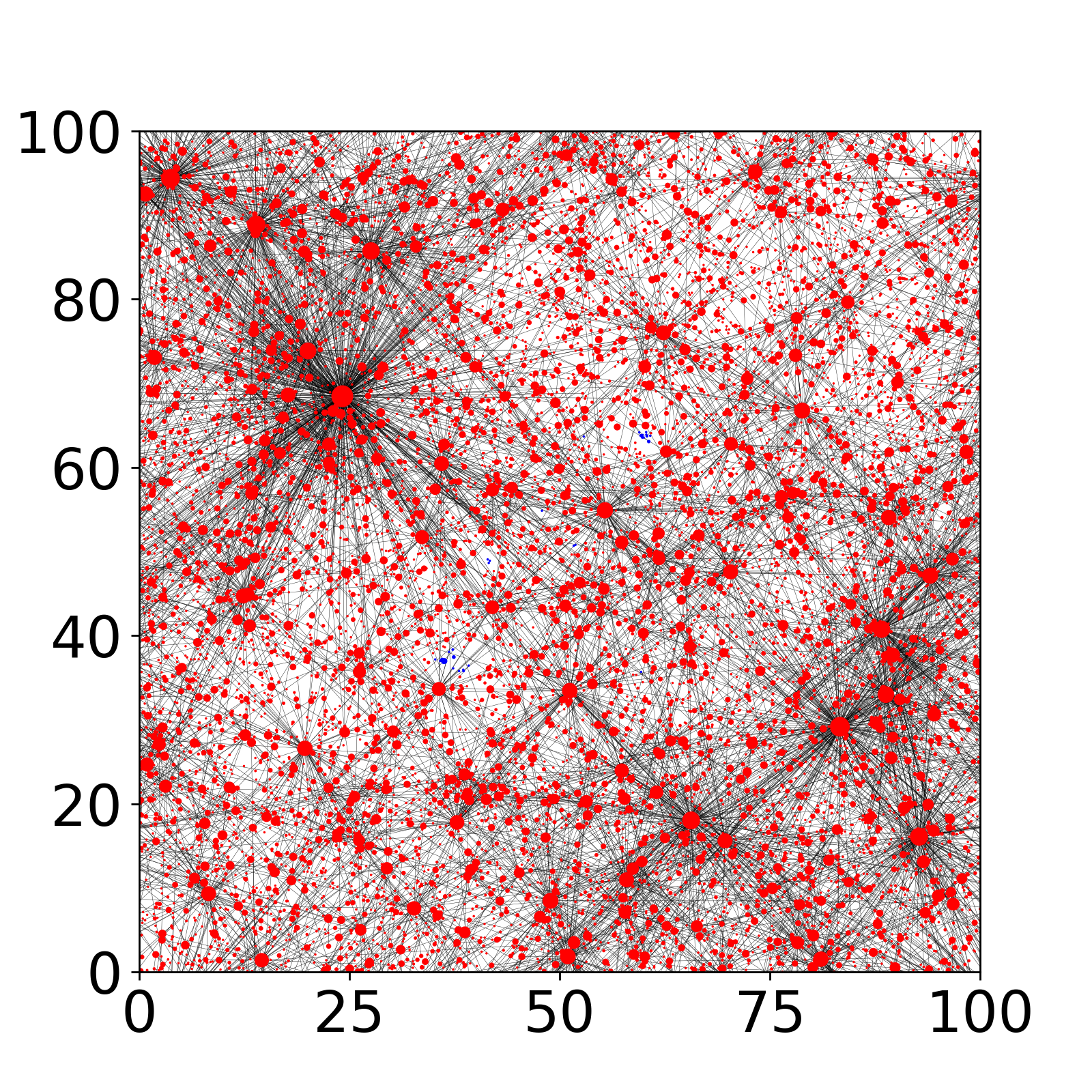}
    \includegraphics[width=0.24\textwidth]{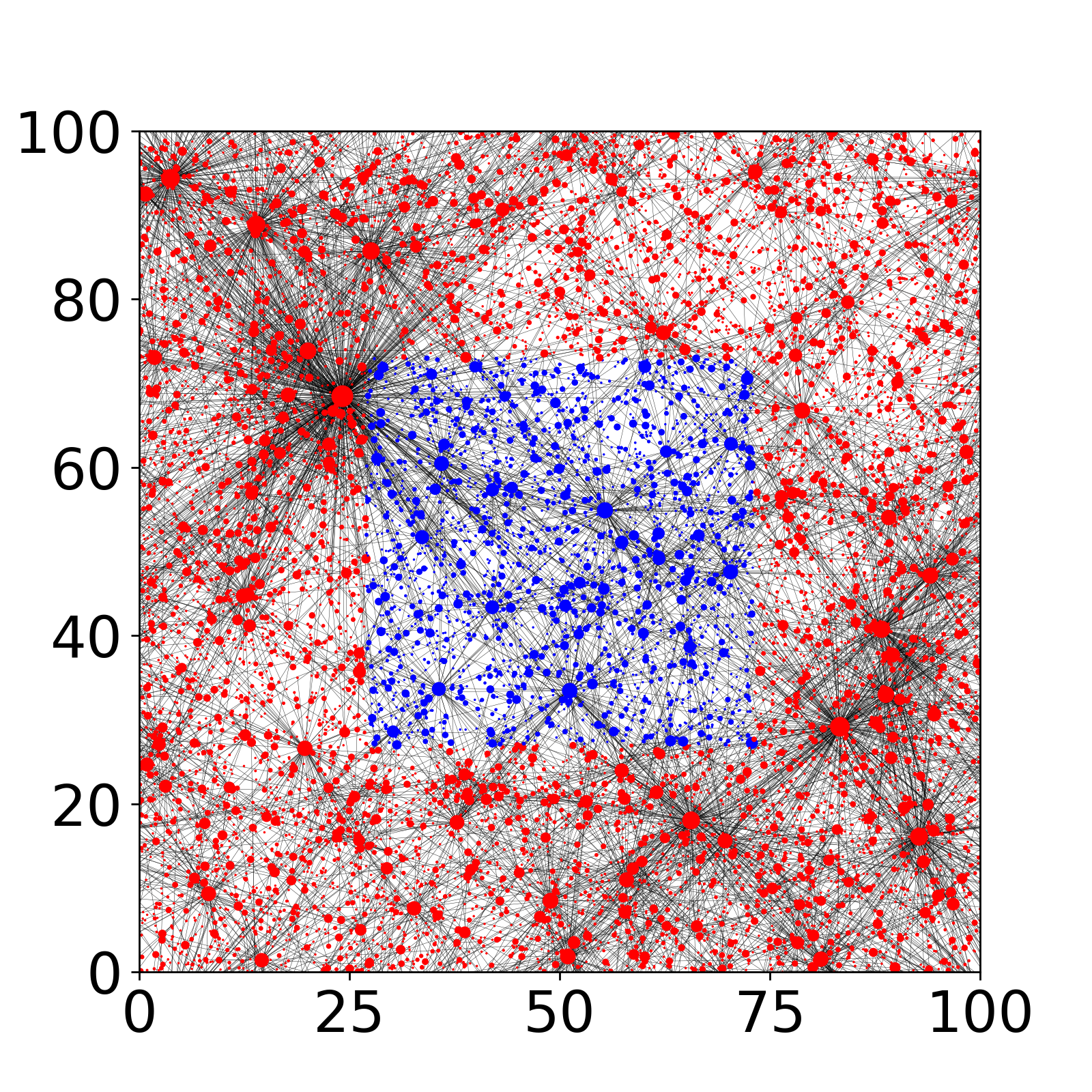} \includegraphics[width=0.24\textwidth]{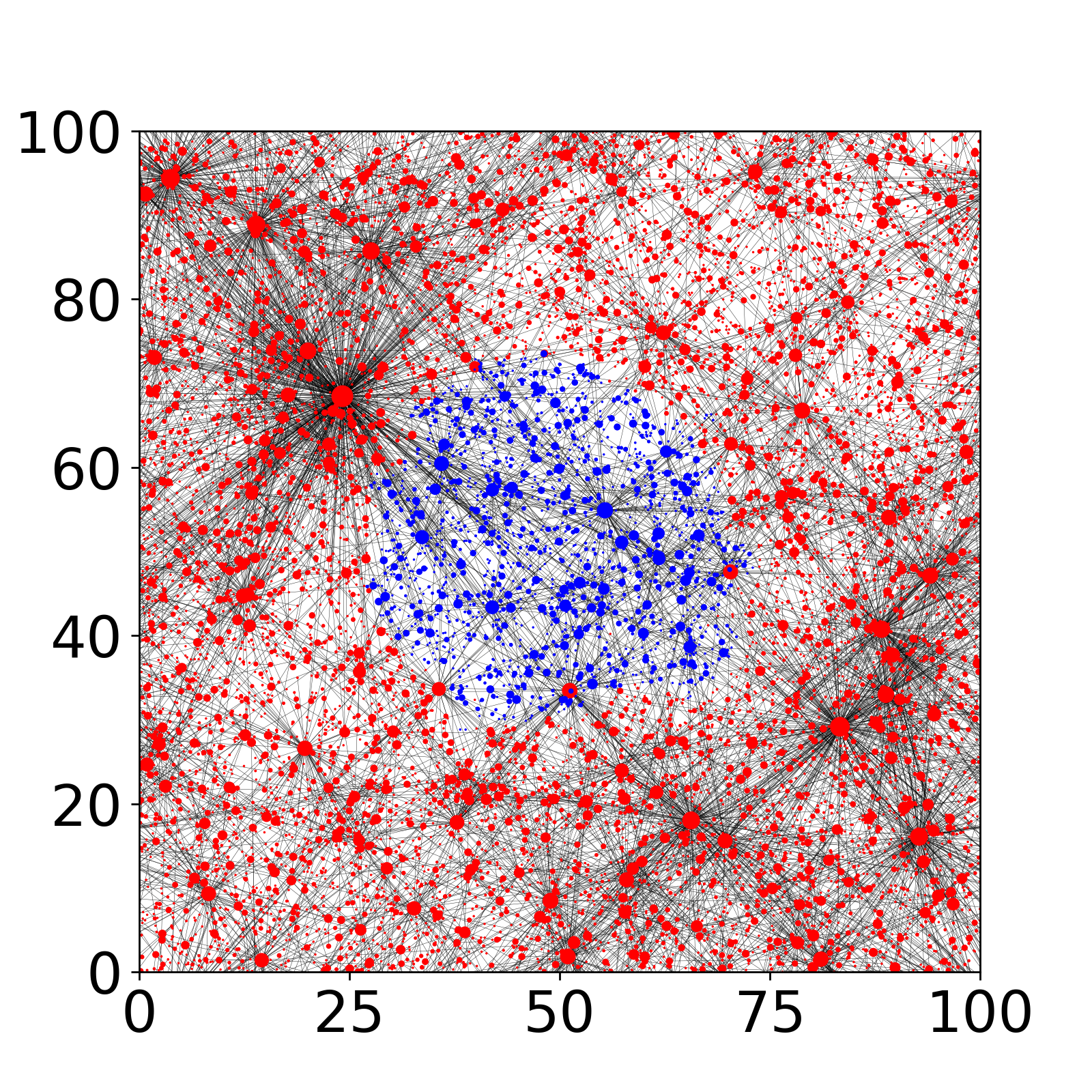}
    \caption{Opinion spreading with $\tau=2.15$. 
    Panels 1,2: small square initial configuration ($c_0$ left, final configuration $c^*$ right), only red survives with minor exceptions. 
    Panels 3,4: large square initial configuration ($c_0$ left, $c^*$ right), both opinions survive.}
    \label{fig:square-combined}
\end{figure}
\noindent The initial opinion configuration was chosen to contain a region of blue vertices ($+1$) in the shape of an axis-aligned square of side length $s$, centered at the origin, with all remaining vertices red ($-1$). 
When $s$ was small, the region of blue vertices did not survive. The red opinion expanded inward until the stable configuration $c^*$ assigned value $-1$ to all but a few small isolated components, which can never flip once formed. This effect was especially pronounced in networks with many high-weight vertices. 
The left two images of Figure~\ref{fig:square-combined} illustrate such a run: the blue region visible at $t=0$ (left) has completely disappeared in the final configuration $c^*$ (right).

For sufficiently large $s$, the blue region contracted during the early stages of the process, but the system then reached a stable configuration $c^*$ in which both opinions survived. The boundary of the blue region changed shape during this evolution. Vertices located at corners, surrounded by red neighbours in three quadrants, were highly unstable and flipped early, while vertices along flat edges with a more balanced neighbourhood were more likely to remain blue. The result after convergence was a rounded, approximately ball-shaped region of blue vertices. The left two images of Figure~\ref{fig:square-combined} show such an outcome, with the initially square set $c_0$ (left) evolving into a rounded persistent region in $c^*$ (right).
\begin{figure}[ht]
    \centering
    \includegraphics[width=0.7\textwidth]{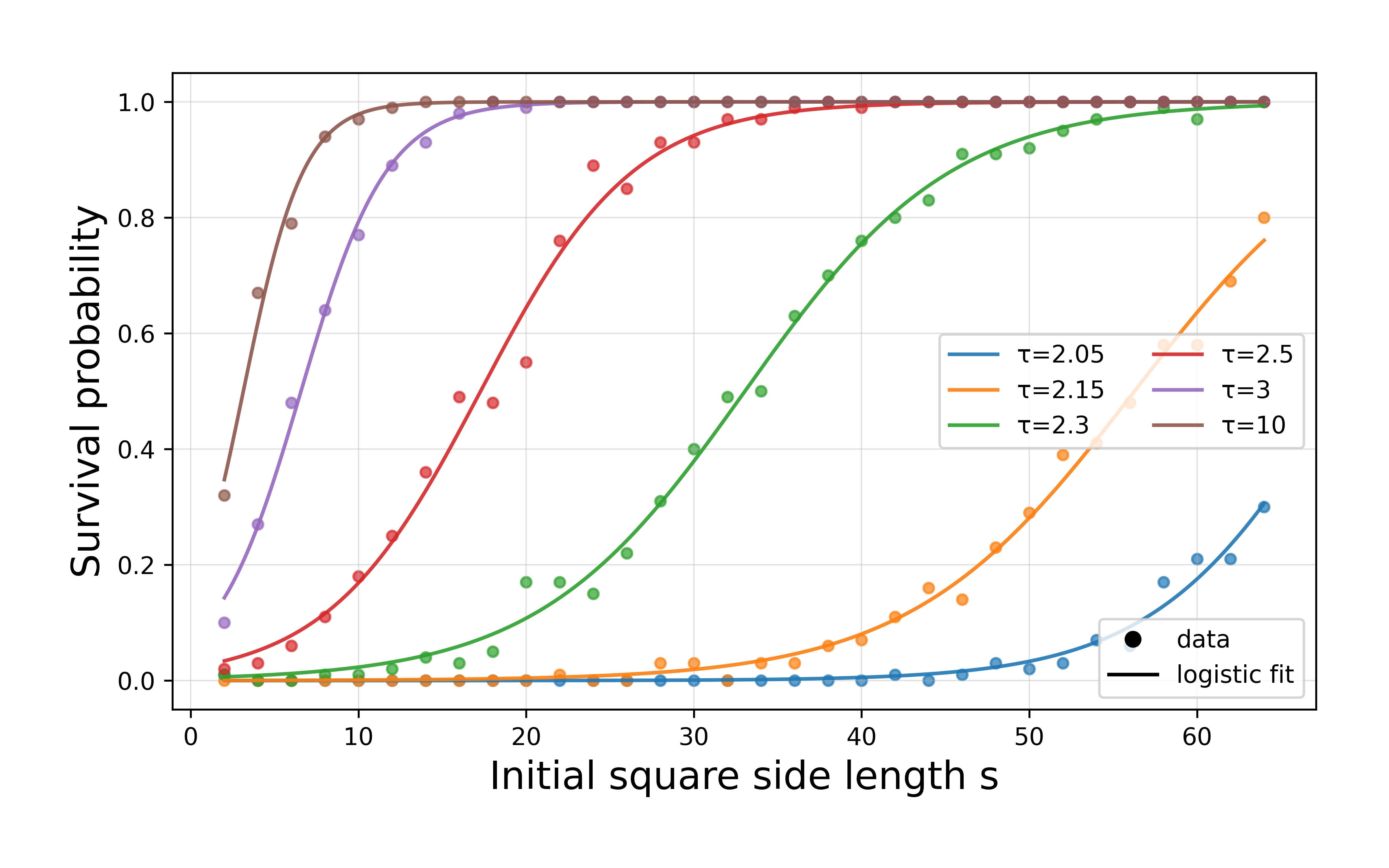}
    \vspace*{-10pt}
    \caption{Survival probability of a square initial configuration as a 
    function of its side length $s$, for various values of $\tau$. Dots 
    show simulation results (100 runs per data point), and solid curves 
    show logistic fits. The critical size decreases with $\tau$.}
    \label{fig:survival-curves}
\end{figure}
\noindent Systematic variation of $\tau$ showed a clear effect on the critical size needed for survival. As $\tau$ increased, survival became possible at smaller scales. Figure~\ref{fig:survival-curves} illustrates this behaviour: for each value of $\tau$, the survival probability of a square initial configuration rises sharply from near zero to near one as the side length $s$ grows, with the transition point shifting to smaller $s$ for larger $\tau$. In all cases, surviving regions gradually evolved towards rounded shapes. Heuristically, this fits with the observation that boundaries of high curvature tend to shrink: small balls disappear, while larger regions can contract into shapes with lower curvature everywhere. 
Our main theoretical result addresses this by showing (in a simplified mean-field setting) that the process indeed stabilizes in the limit of vanishing curvature.

\subsection{The Mean-Field Approximation}\label{sec:meanfield}
\subsubsection{Framework}
\noindent To analyse the Opinion Spreading process on GIRGs, we employ a mean-field approximation. The key idea is to replace the random, but fixed neighbourhood of each vertex by an independent sample from the same distribution that is re-drawn each round. This leads to a deterministic update operator acting on functions $f$ that describe the probability of holding opinion $+1$.

\begin{Definition}[Mean-Field Assumption]\label{def:mean-field}
Let $\mathrm{G} = (\mathrm{V},\mathrm{E})$ be a GIRG on $n$ vertices and large enough density parameter $k$. Each vertex $v \in V$ has a weight $w_v$ and a position $x_v \in \mathcal{X}\subseteq \mathbb{R}^d$. The \emph{Mean-Field Assumption} posits that, conditioned on the type $(w_v,x_v)$ of a vertex $v$, the opinions of its neighbours are independent. Each neighbour holds opinion $+1$ with probability $f(w_u,x_u)$, where $f : \mathcal{D}\times \mathcal{X} \;\longrightarrow\; [0,1]$ is a function describing the current distribution of opinions. Moreover, the number and type of neighbours are redrawn for each update.

This assumption neglects local spatial dependencies induced by geometry, but retains heterogeneity in vertex types and spatially varying connection densities.
\end{Definition}
\noindent This assumption induces a natural one-step evolution rule: given the current distribution $f$, we can compute the updated distribution of opinions in the next round by evaluating the probability that a vertex adopts opinion $+1$ under the mean-field model. Recall that we will study the case for sufficiently large average degree, which is controlled by the parameter $k$. 

\begin{Definition}[Mean-Field Update Operator]\label{def:update-operator}
Let $f:\mathcal{D}\times\mathcal{X}\to[0,1]$ be a function and let $\mathbbm{1}\big((w,x),(w',x')\big)$ denote the indicator of the event that vertices of types $(w,x)$ and $(w',x')$ are adjacent. For a vertex $v$ of type $(w,x)$ define
\begin{align*}
\lambda(w) &:= \mathbb{E}[|N(v)|] = \frac{\pi^{d/2}}{\Gamma\left(\frac{d}{2}+1\right)} k \, w_v \, \left( 1 + \frac{1}{\tau-2}\right), \\
\lambda_{1,f}(w,x) &:= \int f(w',x') \, \mathbbm{1}\big((w,x),(w',x')\big)\, d\eta(w',x'), \\
\lambda_{-1,f}(w,x) &:= \lambda(w)-\lambda_{1,f}(w,x), \\ 
\mu_f(w,x) &:= \lambda_{1,f}(w,x)-\lambda_{-1,f}(w,x),
\end{align*}
where $\eta$ denotes the density of existence of a vertex of type $(w,x)$. Note that $\lambda(w)$ is simply the expected number of neighbours and does not depend on $f$ or~$x$. 

Under the mean-field assumption, the numbers of neighbours with opinions $\pm 1$ are independent Poisson random variables with these means. We define the \emph{mean-field update operator} $\hat{\mathcal{T}}$ by
\begin{align*}
    (\hat{\mathcal{T}}f)(w,x) := \Pr\!\left[ \mathrm{Po}\big(\lambda_{1,f}(w,x)\big) - \mathrm{Po}\big(\lambda_{-1,f}(w,x)\big) \geq 0 \right].
\end{align*}
where we adopt the convention that ties count as $+1$; this differs from the sequential rule, but the exact tie event has vanishing probability for large degrees and can thus be ignored.
Moreover, for sufficiently large $k$ we can use a Gaussian approximation of the above Skellam distribution via the central limit theorem,
\begin{align}
    (\hat{\mathcal{T}}f)(w,x) \;\approx\; (\mathcal{T}f)(w,x) \; := \;  \Phi\!\Big(\tfrac{\mu_f(w,x)}{\sqrt{2\,\lambda(w)}}\Big), \label{eq:1}
\end{align}
where $\Phi$ denotes the standard normal distribution function.
\end{Definition}
\noindent In the following we will work with the operator $\mathcal{T}$ instead of $\hat{\mathcal{T}}$, meaning that our results assume sufficiently large $k$.
\begin{Definition}[Advantage]\label{def:advantage}
For a vertex $v$ of type $(w,x)$ and a function $f$, the quantity $\mu_f(w_v,x_v)$ defined in Definition~\ref{def:update-operator} is called the \emph{advantage} of $v$ under $f$. It measures the expected difference between the numbers of neighbours of $v$ holding opinions $+1$ and $-1$. We use the notation $\mu^\mathcal{R}_f(w_v,x_v)$ to denote the advantage of $v$ in the region $\mathcal{R}$.
\end{Definition}

\noindent Next we define what it means for an opinion to survive in the mean-field.
\begin{Definition}[Survival in the Mean-Field]\label{def:mf-robust-survival}
Let $f_0$ be some initial configuration. Define $f^t = \mathcal{T}^t f_0$ and $f^\star :=  \lim_{t\to\infty} f_t$. We say that both opinions $t$-\emph{survive} in the mean-field if there exist a constant $\varepsilon$, such that for all $w\in \mathcal D$, there are positions $x_{-1}$, $x_{+1}$ that satisfy
\begin{align*}
  f_t(w,x_{+1})\ge\tfrac12+\varepsilon \quad\text{and}\quad f_t(w,x_{-1})\le\tfrac12-\varepsilon.
\end{align*}
If this holds for $f^\star$, we will drop the $t$ in $t$-survival and simply say that both opinions survive.
\end{Definition}

\noindent Although vertices in a GIRG are embedded in the full cube $\mathcal X \subset \mathbb R^d$, our analysis will for the most part only depend on their signed distance from the boundary of a half-space. We therefore write $z := x[0]\in\mathbb R$ for the first coordinate of $x\in\mathcal X$, and restrict attention to functions of the form $f:\mathcal D\times\mathbb R \to [0,1];\ (w,z)\mapsto f(w,z)$. 

To estimate the advantage, we will often have to compare different regions of the neighbourhood of a vertex. The following definition will be useful for that.

\begin{Definition}[Complement] \label{def:complement-vertex}
    Let $x \in \mathcal{X}$ be a position with coordinates $(x[0],x[1],\dots,x[d-1])$. 
    We define its \emph{complement} $x^\dagger$ as the reflection of $x$ across the hyperplane $\{z=0\}$. 
    Formally,
    $x^\dagger[0] := -\,x[0]$ and 
$x^\dagger[i] := x[i]$ for all $i=1,\dots,d-1$.
\end{Definition}
\noindent The next definition will be at the heart of the proof. We will show later that any valid subsolution of $\mathcal T$ is a bound for the limiting survival function $f^*$ on half-spaces. 

\begin{Definition}[Valid Function]\label{def:f_valid}
A function $f:\mathcal D\times\mathbb R \to [0,1]$ is called \emph{valid} if it satisfies Conditions 1--3 below, and we call $f$ a \emph{valid subsolution of $\mathcal{T}$} if additionally Condition 4 holds.
\begin{enumerate}[label=\arabic*., ref=\thedefinition.\arabic*]
    \item \textbf{Symmetry:} $f(w,z)+f(w,-z)=1$ for all $(w,z)$; in particular $f(w,0)=\tfrac{1}{2}$.\label{def:f_valid_sym}
    \item \textbf{Monotonicity in $z$:} For each fixed $w$, the map $z\mapsto f(w,z)$ is monotone increasing. \label{def:f_valid_mon_z}
    \item \textbf{Monotonicity in $w$:} For each fixed $z\ge k^{1/d}$, the map $w\mapsto f(w,z)$ is monotone increasing on $[1,z^d/k]$. \label{def:f_valid_mon_w}
    \item \textbf{Subsolution:} $f(w,z)\le (\mathcal T f)(w,z)$ for all $(w,z)$ with $z\geq 0$. \label{def:f_valid_sub}
\end{enumerate}
\end{Definition}

\noindent Note that symmetry Condition~\ref{def:f_valid_sym} together with the subsolution Condition~\ref{def:f_valid_sub} implies the reverse inequality $f(w,z)\ge (\mathcal T f)(w,z)$ for negative $z \le 0$, as the update operator on $f$ is also symmetric.

\subsection{The Survival Theorem for Half-spaces} 
\noindent Looking at half-spaces is motivated by our experimental observation. Areas of high local curvature tend to flatten over time, either vanishing fully, or reaching stability once the curvature becomes negligible. In the limit, all shapes of negligible curvature locally look identical to half-spaces, which makes them a well-suited proxy for more complicated shapes.

We call the initialization $f_0(w,z) := \mathbbm{1}\{z\geq 0\}$, corresponding to the opinion configuration $c_0$ that assigns $-1$ to all vertices to the left of the half-space and $+1$ to the vertices on the right, the half-space initialization. Our main result of this section is to prove its mean-field survival.

\begin{restatable}[Survival Theorem]{Theorem}{survival}\label{thm:main}
For the half-space initialization $f_0$,
both opinions survive in the mean-field.
\end{restatable}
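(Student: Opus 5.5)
The argument splits into a comparison principle and a construction, following the outline in the introduction. The comparison principle says that any valid subsolution $g$ of $\mathcal T$ satisfies $g\le f^\star$ on $z\ge 0$ (and $g\ge f^\star$ on $z\le 0$ by symmetry). The construction exhibits a non-trivial valid subsolution for large $k$, i.e.\ one with $g(w,z)\ge \tfrac12+\varepsilon$ for some $z$ and every $w$. Together these give survival: take $x_{+1}$ with first coordinate $z$ and $x_{-1}$ with first coordinate $-z$, and apply the symmetry Condition~\ref{def:f_valid_sym}.

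\textbf{Step 1: monotonicity of $\mathcal T$ and convergence.} Since $\lambda(w)$ does not depend on $f$, we have $\mu_f = 2\lambda_{1,f}-\lambda$. Hence $\mathcal T$ is pointwise order preserving: $f\le g$ implies $\mathcal Tf\le\mathcal Tg$, because $\Phi$ is increasing. The operator also commutes with the reflection $z\mapsto -z$, $f\mapsto 1-f$, and it preserves monotonicity in $z$ via a translation argument. Starting from $f_0=\mathbbm 1\{z\ge0\}$, I show that $f_t$ stays symmetric and monotone in $z$. Next I claim that $f_1\le f_0$ on $z\ge0$, which holds trivially because $f_0=1$ there. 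Order preservation then gives that $f_t$ is decreasing in $t$ on $z\ge0$ and increasing on $z\le 0$. The ordering on the two half-lines has to be handled together, since $\mathcal T$ integrates over both sides; I will do this with the symmetric partial order "$f\preceq g$ iff $f\le g$ on $z\ge0$ and $f\ge g$ on $z\le0$". For symmetric monotone functions, $\mathcal T$ preserves $\preceq$: comparing $\lambda_{1,f}$ and $\lambda_{1,g}$ at a point $z\ge0$, we pair each neighbour position $x'$ with its complement $x'^\dagger$ (Definition~\ref{def:complement-vertex}). The point $x'$ with $|z'|$ on the near side of $z$ is at least as likely to be a neighbour as its mirror image. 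So the decrease on the negative side is dominated by the increase on the positive side. By monotone convergence, $f^\star=\lim f_t$ exists.

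\textbf{Step 2: comparison.} Let $g$ be a valid subsolution. We have $g\preceq f_0$ trivially. If $g\preceq f_t$, then $g\preceq \mathcal T g\preceq \mathcal T f_t=f_{t+1}$, using Condition~\ref{def:f_valid_sub} and the $\preceq$-monotonicity of Step 1. The latter is where symmetry and the monotonicity Conditions~\ref{def:f_valid_mon_z} and \ref{def:f_valid_mon_w} of $g$ are used. Passing to the limit gives $g\preceq f^\star$.

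\textbf{Step 3: construction, the main obstacle.} I would try a plateau-type function. Set $g(w,z)=\tfrac12+\delta$ for $z\ge z_0$, with a ramp $g(w,z)=\tfrac12+\delta\min\{1,z/z_0\}$ in between, extended antisymmetrically. One could take $z_0$ comparable to $k^{1/d}$, the influence radius of weight $1$, and $\delta$ a small constant. To check $g\le\mathcal Tg$, write $\mu_g(w,z)=2\delta(\,\cdot\,)$ times the signed excess of neighbour measure on the positive side, weighted by the ramp. By the complement pairing, this excess is non-negative. For $z>0$ it is of order $\lambda(w)\cdot c(z,w)$, where $c(z,w)$ is the fraction of the neighbourhood measure lying in the slab $(-z,z)$ or beyond. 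It is at least a constant for $z\ge z_0$ with $w$ not too large, because near-neighbours dominate. The condition then becomes $\Phi(2\delta c\lambda/\sqrt{2\lambda})\ge \tfrac12+\delta\min\{1,z/z_0\}$. For large $k$ the argument $\delta c\sqrt{2\lambda}$ is large, which handles the plateau. The delicate regimes are the following. First, small $z$ near $0$: there $c(z,w)\approx z/r_I$ up to constants, and we need $\Phi'(0)\cdot c\sqrt{\lambda}\gtrsim z/z_0$. This holds if $\sqrt{k}$ beats the geometric constants, i.e.\ for $k$ large. Second, very large $w$: then the neighbourhood is almost symmetric around the vertex, because far-neighbours and the huge ball dominate, and $c$ is small, of order $z/r_I(w)$. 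This is compensated by $\lambda(w)\propto w$ giving $\sqrt{\lambda}\propto \sqrt{w}$ while $r_I\propto w^{1/d}$. The regime works directly for $d\le2$; in general I expect one must let the plateau height depend on $w$ in a way compatible with Condition~\ref{def:f_valid_mon_w}, or use the far-neighbour tail computed in Lemma~\ref{lem:neigh_far}. Getting these uniform-in-$w$ estimates on the advantage is where I expect the bulk of the work.
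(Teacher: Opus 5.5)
Your Steps 1--2 reproduce the paper's comparison principle: the complement pairing of part (i) of Lemma~\ref{lem:T-monotone}, iterated as in Theorem~\ref{thm:valid-lb}. Only symmetry of $g$ and $f_t$ is used there, not Conditions~\ref{def:f_valid_mon_z}--\ref{def:f_valid_mon_w}. Your monotone-convergence argument for the existence of $f^\star$ supplies a point the paper takes for granted.

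The gap is Step 3, which is the real content of the theorem. You never construct a subsolution, and the $w$-independent ramp you propose cannot work in general. Your large-weight scaling is inverted. With $c\approx z/r_I(w)$ you get $\mu_g/\sqrt{2\lambda}\approx \delta\, z\,(kw)^{1/2-1/d}$. This is non-decreasing in $w$ for $d\ge 2$, where large weights are harmless and your ramp with $z_0\le k^{1/d}$, small $\delta$ and large $k$ is plausibly completable. It decays precisely for $d=1$.

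In $d=1$ the failure is structural. Take any symmetric $f$ and a vertex at height $z>0$. After the pairing $x'\leftrightarrow x'^\dagger$, the uncancelled neighbours of each weight $w'$ lie in an interval of length at most $2z$. Hence $|\mu_f(w,z)|\le 2z$ while $\lambda(w)\to\infty$, so $(\mathcal T f)(w,z)\to\tfrac12$ as $w\to\infty$ at every fixed $z$. No subsolution can stay $\ge\tfrac12+\delta$ at a fixed $z$ for all $w$. The survival witness must therefore move outward with $w$, which Definition~\ref{def:mf-robust-survival} permits. Changing the plateau height, the fix you suggest, does not help.

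The missing idea is the paper's self-consistent construction, which removes the regime analysis altogether. By Lemma~\ref{lem:advantage_red}, the neighbourhood outside $\mathcal B$ contributes non-negatively. By Lemma~\ref{lem:advantage_blue}, the cap $I(v)\cap\{z\ge r_I+z_v/2\}$ minus its mirror image contributes at least $\underline{\mu}(w_v,z_v)\propto z_v^{(d+1)/2}r_I^{(d-1)/2}\bigl(1-(1+\tfrac{z_v}{2r_I})^{d(1-\tau)}\bigr)\bigl(f(1,r_I)-\tfrac12\bigr)$. Thanks to monotonicity in $z$ and $w$, this depends on $f$ only through its weight-one value at the influence radius.

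The paper then \emph{defines} $f(w,z):=\Phi\bigl(\underline{\mu}(w,\min\{z,r_I(w)\})/\sqrt{2\lambda(w)}\bigr)$, so $f\le\mathcal Tf$ holds by construction. The only consistency condition is the scalar equation $\delta^*=\Phi\bigl(c\sqrt{k}\,(\delta^*-\tfrac12)\bigr)$ at $(1,k^{1/d})$, which has a root $\delta^*>\tfrac12$ for large $k$ (Lemma~\ref{lem:f_fixpoint}). The resulting ramp has length $r_I(w)=(kw)^{1/d}$ and plateau value $\Phi\bigl(\Theta(\sqrt{kw})(\delta^*-\tfrac12)\bigr)\ge\delta^*$. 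This is why Condition~\ref{def:f_valid_mon_w} only asks for monotonicity on $[1,z^d/k]$, and why the witnesses sit at $z_0(w)=(kw)^{1/d}$. To make your Step 3 work uniformly in $d$, scale the ramp length with $r_I(w)$ rather than $k^{1/d}$, and actually carry out the advantage lower bound.
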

\noindent To prove Theorem~\ref{thm:main}, we first collect some easy facts regarding $\mathcal T$.

\begin{Lemma}[Continuity of $\mathcal{T}$]
\label{lem:T-con}
Let \(f:\mathcal D\times\mathbb R^d\to[0,1]\) and fix \(W\in\mathcal D\). Then the map
\begin{align*}
x\mapsto (\mathcal T f)(W,x)
\end{align*}
is continuous.
\end{Lemma}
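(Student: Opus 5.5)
The plan is to reduce continuity of $x\mapsto(\mathcal T f)(W,x)$ to continuity of the advantage $x\mapsto \mu_f(W,x)$. Since $(\mathcal T f)(W,x)=\Phi\bigl(\mu_f(W,x)/\sqrt{2\lambda(W)}\bigr)$, $\Phi$ is continuous, and $\lambda(W)>0$ does not depend on $x$, this is enough. Moreover $\mu_f = 2\lambda_{1,f}-\lambda(W)$, so it suffices to show that $x\mapsto\lambda_{1,f}(W,x)$ is continuous. In fact I will show it is Lipschitz-type continuous, uniformly in $f$.

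Write $\lambda_{1,f}(W,x)=\int_{\mathcal D}\int f(w',x')\,\mathbbm 1\{\lVert x-x'\rVert^d\le kWw'\}\,dx'\,\rho(w')\,dw'$, using that the vertex density is one per unit volume. For each fixed $w'$ the inner set is the Euclidean ball $B(x,R_{w'})$ with $R_{w'}=(kWw')^{1/d}$. For two positions $x,y$ I will bound
\begin{align*}
|\lambda_{1,f}(W,x)-\lambda_{1,f}(W,y)| \le \int_{\mathcal D} \mathrm{Vol}\bigl(B(x,R_{w'})\,\triangle\, B(y,R_{w'})\bigr)\,\rho(w')\,dw',
\end{align*}
using only $0\le f\le 1$ and measurability of $f$. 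The symmetric difference of two balls of radius $R$ whose centres are at distance $\delta$ is contained in the shell between radii $R-\delta$ and $R+\delta$ around $x$. Its volume is therefore at most $c_d\bigl((R+\delta)^d-(R-\delta)_+^d\bigr)$, and for $\delta\le 1$ this is at most $C_d\,\delta\,(R+1)^{d-1}$.

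The main obstacle is the integrability in $w'$, because $R_{w'}^{d-1}\propto w'^{(d-1)/d}$ is unbounded. Integrating the shell bound against $\rho$, I need $\int_1^\infty w'^{(d-1)/d}w'^{-\tau}\,dw'<\infty$. This holds because $(d-1)/d-\tau<1-2=-1$ when $\tau>2$. Alternatively, I can use the cruder bound $\mathrm{Vol}(B(x,R)\triangle B(y,R))\le 2\,\mathrm{Vol}(B_R)\propto w'$. This is also integrable, since $\tau>2$ is exactly what makes the expected degree in \Cref{thm:NumberOfNeighbours} finite. Then dominated convergence applies: for fixed $w'$ the symmetric difference volume tends to $0$ as $y\to x$, and it is dominated by the integrable function $2\,\mathrm{Vol}(B_{R_{w'}})\rho(w')$. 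Either route gives $\lambda_{1,f}(W,y)\to\lambda_{1,f}(W,x)$. On the torus the same argument works as long as $\delta$ is small, since balls wrap around but the symmetric-difference estimate is local.

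Finally, I compose: $\mu_f(W,\cdot)$ is continuous, so $(\mathcal T f)(W,\cdot)=\Phi(\mu_f(W,\cdot)/\sqrt{2\lambda(W)})$ is continuous. The modulus of continuity is even independent of $f$, which may be useful later when passing to limits $\mathcal T^t f_0$.
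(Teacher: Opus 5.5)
Your proof is correct. Your fallback route is essentially the paper's argument: there you integrate out the position using $0\le f\le 1$ and then apply dominated convergence in $w'$ with dominating function $2\,\mathrm{Vol}(B_{R_{w'}})\rho(w')$. The paper also reduces to continuity of $\mu_f(W,\cdot)$. It then applies dominated convergence directly on the joint space of weights $w$ and positions $y$, to the integrand $(2f(w,y)-1)\bigl(\mathbbm{1}(\lVert x_n-y\rVert^d\le kWw)-\mathbbm{1}(\lVert x-y\rVert^d\le kWw)\bigr)$.

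Your main route is genuinely different. The shell bound $\mathrm{Vol}\bigl(B(x,R)\triangle B(y,R)\bigr)\le C_d\,\delta\,(R+1)^{d-1}$ gives an explicit Lipschitz estimate for $\lambda_{1,f}(W,\cdot)$, and hence for $(\mathcal T f)(W,\cdot)$, with a constant independent of $f$. That is more than the bare continuity the paper proves. It is also precisely the estimate $|\partial_z\lambda_{1,f}|=O(w^{(d-1)/d})$ that the paper later uses, in sketched form, in the proof of Theorem~\ref{thm:non_convergence}.

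Integrating out the position first also makes the domination step cleaner. A pointwise dominating function in $(w,y)$ has to cover the set $\{\lVert x-y\rVert\le (kWw)^{1/d}+1\}$. The paper's choice $\{\lVert x-y\rVert^d\le kWw+1\}$ is not uniform in $w$ for $d\ge 2$, because $((kWw)^{1/d}+\delta)^d-kWw$ grows like $\delta\,(kWw)^{(d-1)/d}$. The corrected set is still integrable for $\tau>2$, so this is only a minor repair of the paper's proof, but your formulation avoids the issue altogether.
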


\begin{proof}
Since $\Phi$ is continuous and \(\lambda(W)\) does not depend on \(x\), it is enough to show that
\[
x\mapsto \mu_f(W,x)
\]
is continuous. Let \(x_n\to x\). By definition,
\begin{align*}
    \mu_f(W,x_n)-\mu_f(W,x)
    &=
    \int_{\mathcal D}\int_{\mathbb R^d} \bigl(2f(w,y)-1\bigr)\\
    &\hspace{20pt}\cdot
    \left(
    \mathbbm{1}\bigl(\|x_n-y\|^d\le kWw\bigr)
    -
    \mathbbm{1}\bigl(\|x-y\|^d\le kWw\bigr)
    \right)
    dy\,\rho(w)\,dw.
\end{align*}
For fixed \(w\) and \(y\), the two indicators converge pointwise as \(n\to\infty\), except possibly on the boundary
\[
\|x-y\|^d=kWw.
\]
For each fixed \(w\), this boundary is a sphere in \(\mathbb R^d\). The integrand thus converges almost everywhere. Moreover, since \(0\le f\le 1\), the integrand is bounded in absolute value by
\[
2\,\mathbbm{1}\bigl(\|x-y\|^d\le kWw+1\bigr)
\]
for all sufficiently large \(n\). This is integrable, because
\[
\int_{\mathcal D}\int_{\mathbb R^d}\mathbbm{1}\bigl(\|x-y\|^d\le kWw+1\bigr)\,dy\,\rho(w)\,dw
=
\Theta(1)\int_{\mathcal D}(kWw+1)\rho(w)\,dw<\infty,
\]
where the last inequality uses \(\tau>2\). By dominated convergence,
\[
\mu_f(W,x_n)\to \mu_f(W,x).
\]
Applying continuity of \(\Phi\) yields
\[
(\mathcal T f)(W,x_n)\to (\mathcal T f)(W,x).
\]
Hence \(x\mapsto (\mathcal T f)(W,x)\) is continuous.
\end{proof}

\begin{Lemma}[Symmetry Preservation of $\mathcal{T}$]\label{lem:T-sym}
If $f: D\times\mathbb R \to [0,1]$ is symmetric, then $(\mathcal{T} f)$ also is.
\end{Lemma}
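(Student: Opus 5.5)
The plan is to show directly that $(\mathcal T f)(w,z)+(\mathcal T f)(w,-z)=1$. Because $\Phi(a)+\Phi(-a)=1$ and $\lambda(w)$ does not depend on position, this reduces to the odd-symmetry identity
\begin{align*}
\mu_f(w,-z) = -\mu_f(w,z).
\end{align*}
Here, as in the rest of this section, a function of $z$ is read as a function of $x\in\mathbb R^d$ that depends only on the first coordinate $x[0]=z$. The statement "$\mathcal Tf$ is symmetric" therefore means $(\mathcal Tf)(w,x)+(\mathcal Tf)(w,x^\dagger)=1$, together with the fact that $\mathcal Tf$ again depends only on $x[0]$.

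First, I rewrite the advantage using the definitions $\lambda_{-1,f}=\lambda-\lambda_{1,f}$ and $\lambda(w)=\int \mathbbm{1}\big((w,x),(w',x')\big)\,d\eta(w',x')$, where the latter is exactly the content of Theorem~\ref{thm:NumberOfNeighbours}. This gives
\begin{align*}
\mu_f(w,x)=\int_{\mathcal D}\int_{\mathbb R^d}\bigl(2f(w',y)-1\bigr)\,\mathbbm{1}\bigl(\|x-y\|^d\le k w w'\bigr)\,dy\,\rho(w')\,dw'.
\end{align*}
This is the same integral representation already used in the proof of Lemma~\ref{lem:T-con}.

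Next, I apply the change of variables $y\mapsto y^\dagger$ from Definition~\ref{def:complement-vertex}, evaluating at $x^\dagger$. The reflection is a linear isometry of $\ell_2$ with Jacobian of absolute value $1$, so Lebesgue measure is preserved and $\|x^\dagger-y^\dagger\|=\|x-y\|$. Symmetry of $f$ gives $2f(w',y^\dagger)-1=-(2f(w',y)-1)$. Substituting yields $\mu_f(w,x^\dagger)=-\mu_f(w,x)$.

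A translation $y\mapsto y+v$ with $v[0]=0$ likewise preserves both the integrand and the measure. Hence $\mu_f(w,x)$ depends only on $x[0]$, so $\mathcal Tf$ is again a function of $(w,z)$.

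Finally, the identity $\Phi(-a)=1-\Phi(a)$ gives $(\mathcal Tf)(w,-z)=1-(\mathcal Tf)(w,z)$, which is the required symmetry.

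The only genuinely delicate step is the domain of integration. On the finite torus $\mathcal X$, centred at the origin, the reflection $y\mapsto y^\dagger$ is still a measure-preserving isometry of the torus metric, because the cube is symmetric about $\{z=0\}$. The halfspace analysis, however, implicitly works in $\mathbb R^d$, where everything is clean. I would state the argument in $\mathbb R^d$, as in Lemma~\ref{lem:T-con}, and remark that the torus case is identical by the symmetry of the cube.

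Integrability is not an issue. The integrand is bounded by $\mathbbm{1}(\|x-y\|^d\le kww')$, which is integrable because $\tau>2$, as already checked in Lemma~\ref{lem:T-con}. This justifies both the splitting $\lambda_{-1,f}=\lambda-\lambda_{1,f}$ into a single integral and the change of variables.
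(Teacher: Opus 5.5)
Your proposal is correct and follows the paper's route: you reduce symmetry of $\mathcal T f$ to the oddness $\mu_f(w,-z)=-\mu_f(w,z)$ and then apply $\Phi(-a)=1-\Phi(a)$. Your reflection change of variables $y\mapsto y^\dagger$, together with the identity $\lambda(w)=\int \mathbbm{1}\,d\eta$ and the transverse translation invariance, proves the oddness of $\mu_f$ that the paper only asserts.
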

\begin{proof}
    By symmetry of $\Phi$
    \begin{align*}
        (\mathcal{T}f)(w,x)  =   \Phi\!\left(\frac{\mu_f(w,x)}{\sqrt{2\,\lambda(w)}}\right) = 1- \Phi\!\left(\frac{-\mu_f(w,x)}{\sqrt{2\,\lambda(w)}}\right).
    \end{align*}
    Since $f$ is symmetric, $\mu_f(w,x)$ also is. Thus
    \begin{align*}
        (\mathcal{T}f)(w,x) = 1- \Phi\!\left(\frac{\mu_f(w,-x)}{\sqrt{2\,\lambda(w)}}\right) = 1- (\mathcal{T}f)(w,-x),
    \end{align*}
    which proves the statement.
\end{proof}

\begin{Lemma}[Monotonicity of $\mathcal T$]\label{lem:T-monotone}
Define $f,g: D\times\mathbb R \to [0,1]$, then these two monotonicity statements hold for $\mathcal{T}$:
\begin{enumerate}[label=\roman*., ref=\roman*]
\item  If $f$ and $g$  are symmetric and satisfy $f(w, z)\le g(w, z)$ for all $w$ and $z\ge 0$ then $(\mathcal T f)(w, z) \le (\mathcal T g)(w, z)$ for all $w$ and $z\ge 0$. \label{lem:T_mon_1}
\item If $f$ and $g$ satisfy $f(w, z) \le g(w, z)$ everywhere, then also $(\mathcal T f)(w, z) \le (\mathcal T g)(w, z)$ everywhere. \label{lem:T_mon_2}
\end{enumerate}

\end{Lemma}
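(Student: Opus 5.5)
Both parts reduce to one observation: $\mathcal T$ acts on $f$ only through the advantage $\mu_f$, which depends linearly on $f$ with a nonnegative kernel. Since $\Phi$ is increasing and $\lambda(w)>0$ does not depend on $f$, it suffices in each case to compare $\mu_f(w,z)$ with $\mu_g(w,z)$ at the relevant points.

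For part~\ref{lem:T_mon_2}, I would write the advantage as
\[
\mu_f(w,x)=\int \bigl(2f(w',x')-1\bigr)\,\mathbbm 1\big((w,x),(w',x')\big)\,d\eta(w',x').
\]
This follows from $\lambda_{-1,f}=\lambda-\lambda_{1,f}$, together with the identity $\lambda(w)=\int \mathbbm 1\,d\eta$ from Theorem~\ref{thm:NumberOfNeighbours}. Then $f\le g$ everywhere gives a pointwise inequality of the integrands, since the indicator is nonnegative. Hence $\mu_f\le\mu_g$ everywhere, and applying the increasing map $\mu\mapsto\Phi(\mu/\sqrt{2\lambda(w)})$ yields $\mathcal T f\le\mathcal T g$ everywhere.

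For part~\ref{lem:T_mon_1}, the hypothesis only holds for $z\ge 0$. For $z<0$, symmetry gives $f(w,z)=1-f(w,-z)\ge 1-g(w,-z)=g(w,z)$, so the integrand comparison fails on the negative half. I would instead pair each point $x'$ with its complement $x'^\dagger$ from Definition~\ref{def:complement-vertex}. Write $h=g-f$. Then $h(w',x')\ge 0$ for $x'[0]\ge 0$ and $h(w',x'^\dagger)=-h(w',x')$ by symmetry. Splitting the integral into the half-spaces $x'[0]\ge0$ and $x'[0]<0$, and substituting $x'\mapsto x'^\dagger$ in the second (a measure-preserving reflection, with $\eta$ uniform in position), gives
\[
\mu_g(w,x)-\mu_f(w,x)=2\int_{x'[0]\ge 0} h(w',x')\Bigl(\mathbbm 1\big((w,x),(w',x')\big)-\mathbbm 1\big((w,x),(w',x'^\dagger)\big)\Bigr)\,d\eta(w',x').
\]

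The key step, and the only real obstacle, is to show that the bracket is nonnegative whenever $x[0]\ge0$ and $x'[0]\ge0$. Adjacency is the threshold rule $\|x-x'\|^d\le k\,w\,w'$, so it suffices to show $\|x-x'\|\le\|x-x'^\dagger\|$. The two distances differ only in the first coordinate, and $|x[0]-x'[0]|\le x[0]+x'[0]=|x[0]-x'^\dagger[0]|$ because both first coordinates are nonnegative. So whenever $x'^\dagger$ is adjacent to $x$, so is $x'$, and the bracket lies in $\{0,1\}$.

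With $h\ge0$ on this region, the difference $\mu_g-\mu_f$ is nonnegative for $z=x[0]\ge 0$. Applying the increasing map $\mu\mapsto\Phi(\mu/\sqrt{2\lambda(w)})$ then gives $(\mathcal T f)(w,z)\le(\mathcal T g)(w,z)$ for $z\ge0$. One caveat is that on the torus the reflection argument is only valid away from the antipodal boundary. I would either work in $\mathbb R^d$, as the half-space analysis implicitly does, or note that the relevant neighbourhoods are negligible compared with the side length $n^{1/d}$.
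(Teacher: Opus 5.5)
Your proposal is correct and follows the paper's proof essentially step by step: you reduce to comparing advantages, you handle part (i) by folding the negative half-space onto the positive one via $x'\mapsto x'^\dagger$ and using $\|x-x'\|\le\|x-x'^\dagger\|$ for $x[0],x'[0]\ge 0$, and you prove part (ii) by pointwise comparison of the integrands. Your bookkeeping is slightly cleaner than the paper's. You use the correct sign when splitting the integral and a non-strict inequality in part (ii), and your torus caveat is a fair point that the paper leaves implicit.
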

\begin{proof}
    We prove the two claims separately. By definition of $\mathcal T$ it suffices to show that $\mu_f(w,x) \le \mu_g(w,x)$ for all $w$ and $z\ge 0$. We write
    \begin{align*}
\mu_g(w,x) - \mu_f(w,x) &= \int 2(g(w',x')-f(w',x')) \, \mathbbm{1}\big((w,x),(w',x')\big)\, d\eta(w',x') \\
&= \int_{z'\ge0} 2(g(w',x')-f(w',x')) \, \mathbbm{1}\big((w,x),(w',x')\big)\, d\eta(w',x') \\
&\hspace{10pt}-\int_{z'\le0} 2(g(w',x')-f(w',x')) \, \mathbbm{1}\big((w,x),(w',x')\big)\, d\eta(w',x').
\end{align*}
By symmetry,
\begin{align*}
    \mu_g(w,x) - \mu_f(w,x) &= \int_{z'\ge0} 2(g(w',x')-f(w',x')) \, \mathbbm{1}\big((w,x),(w',x')\big)\, d\eta(w',x') \\
&\hspace{10pt}+\int_{z'\ge0} 2(f(w',x')-g(w',x')) \, \mathbbm{1}\big((w,x),(w',x'^\dagger)\big)\, d\eta(w',x'^\dagger).
\end{align*}
Since $z\ge 0$, $\lVert x-x'\rVert \le \lVert x-x'^\dagger\rVert $ for all $x'$ with $z'\ge 0$, so in particular,
\begin{align*}
    \mathbbm{1}\big((w,x),(w',x'^\dagger)\big) =1  \implies \mathbbm{1}\big((w,x),(w',x')\big) =1
\end{align*}
for all $w$. Therefore
\begin{align*}
    \mu_g(w,x) - \mu_f(w,x) &\ge \int_{z'\ge0} 2(g(w',x')-f(w',x')) \, \mathbbm{1}\big((w,x),(w',x')\big)\, d\eta(w',x') \\
&\hspace{10pt}+\int_{z'\ge0} 2(f(w',x')-g(w',x')) \, \mathbbm{1}\big((w,x),(w',x')\big)\, d\eta(w',x'^\dagger) \\
&\ge 0. 
\end{align*}
This proves claim~\ref{lem:T_mon_1}. Claim~\ref{lem:T_mon_2}, follows immediately from the definition
\begin{align*}
\mu_g(w,x) - \mu_f(w,x) &= \int 2(g(w',x')-f(w',x')) \, \mathbbm{1}\big((w,x),(w',x')\big)\, d\eta(w',x') \\
&> 0,
\end{align*}
since $g(w',x')-f(w',x') >0$ everywhere.
\end{proof}

\noindent Applying the symmetry and monotonicity recursively, we obtain the following result.

\begin{Theorem}[Comparison Principle for $\mathcal{T}$]\label{thm:valid-lb}
Let $f_0$ be the half-space initialization and $f_t := \mathcal T^t f_0$. Furthermore let $f$ be a valid subsolution of $\mathcal T$ with $f(w,z)\leq f_0(w,z)$ on $z \in [0, \infty)$. Then  $f(w,z)\le f_t(w,z)$ for all $t$, $w$ and $z\ge0$. By symmetry also $f(w,-z)\ge f_t(w,-z)$.
\end{Theorem}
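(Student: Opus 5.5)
The plan is induction on $t$, using the symmetry preservation of $\mathcal T$ (Lemma~\ref{lem:T-sym}) together with the first monotonicity statement of Lemma~\ref{lem:T-monotone}, part~\ref{lem:T_mon_1}. That statement only needs an inequality on the half-line $z\ge 0$, provided both functions are symmetric.

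\emph{Base case.} For $t=0$ the claim $f(w,z)\le f_0(w,z)$ on $z\ge 0$ is exactly the hypothesis. We must also note that $f_0$ is symmetric, at least almost everywhere. Indeed, $f_0(w,z)+f_0(w,-z)=1$ for all $z\neq 0$. At $z=0$ we have $f_0=1$ rather than $1/2$, but $\{z=0\}$ has measure zero. So it does not affect $\mu_{f_0}$, which is an integral against $\eta$. Hence the proof of Lemma~\ref{lem:T-monotone}\ref{lem:T_mon_1} goes through unchanged for the pair $(f,f_0)$, since only the values of the advantages matter there. Alternatively, we can replace $f_0$ by the symmetric modification $\tilde f_0$ with $\tilde f_0(w,0)=1/2$. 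This gives $\mathcal T\tilde f_0=\mathcal T f_0=f_1$, and $f\le\tilde f_0$ still holds on $z\ge0$ because $f(w,0)=1/2$ by symmetry of $f$. By Lemma~\ref{lem:T-sym}, every $f_t$ with $t\ge1$ is symmetric.

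\emph{Inductive step.} Assume $f\le f_t$ on $z\ge 0$, with $f$ and $f_t$ symmetric (for $t=0$, use $\tilde f_0$). Lemma~\ref{lem:T-monotone}\ref{lem:T_mon_1} gives $(\mathcal Tf)(w,z)\le(\mathcal T f_t)(w,z)=f_{t+1}(w,z)$ for $z\ge0$. The subsolution Condition~\ref{def:f_valid_sub} gives $f(w,z)\le(\mathcal Tf)(w,z)$ for $z\ge0$. Chaining these yields $f\le f_{t+1}$ on $z\ge0$. Since $f_{t+1}$ is symmetric by Lemma~\ref{lem:T-sym}, the induction continues. Only symmetry of $f$ and Condition~\ref{def:f_valid_sub} are used; the monotonicity conditions of validity are not needed here.

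\emph{Negative half-line.} For $z\ge0$, symmetry of $f$ and of $f_t$ gives
\[
f(w,-z)=1-f(w,z)\ge 1-f_t(w,z)=f_t(w,-z).
\]

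The only delicate point is the base case, where $f_0$ fails exact symmetry at $z=0$. I expect the measure-zero argument above to settle it, since $\mathcal T$ depends on its argument only through the integrals $\mu$.
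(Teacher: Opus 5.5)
Your proof is correct and uses the same ingredients as the paper: the subsolution inequality (Condition~\ref{def:f_valid_sub}), Lemma~\ref{lem:T-monotone}.\ref{lem:T_mon_1} and Lemma~\ref{lem:T-sym}. The only difference is organisation: you run a one-step induction $f\le\mathcal Tf\le\mathcal Tf_t=f_{t+1}$, whereas the paper writes the chain $f\le\mathcal Tf\le\mathcal T^tf\le\mathcal T^tf_0$. Your treatment of the failure of symmetry of $f_0$ at $z=0$ (via the null set $\{z=0\}$ or the modification $\tilde f_0$) fills a point the paper skips; however, the same defect makes the negative-side claim false at the single point $t=0$, $z=0$, where it reads $\tfrac12\ge f_0(w,0)=1$, so that part should be stated for $t\ge1$ or with $\tilde f_0$ in place of $f_0$.
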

\begin{proof}
By Condition~\ref{def:f_valid_sym}, $f \leq \mathcal{T} f$. By Lemmas~\ref{lem:T-monotone} and~\ref{lem:T-sym}, 
\begin{align*}
    f \leq \mathcal{T} f \leq \mathcal{T}^t f \leq \mathcal{T}^t f_0 = f_t. \qedhere
\end{align*}
\end{proof}

\noindent With these facts, the aim of the remaining proof is to construct a valid subsolution $f$ of $\mathcal{T}$. The most crucial step in this construction will be  obtaining a pointwise lower bound on the \emph{advantage} of a vertex $v$ under $f$. This bound will then allow us to define $f$ via the right-hand side of the defining equation of the update Operator~\eqref{eq:1}.

\subsubsection{Geometric Advantage Partitioning}
\noindent We partition the ambient space into two regions and compute the contribution to the advantage from each separately.
\begin{Definition}[Partitioning]
Let $x \in \mathcal{X}$ and let $x^\dagger$ denote its complement (Definition~\ref{def:complement-vertex}). Let 
\begin{align*}
    \mathcal{B}_{\geq 0} &:= \{\,x \in \mathcal{X} : z \geq r_I+z_v/2\,\} \cap I(v),
\end{align*}
and let
\begin{align*}
\mathcal{B} &:= \mathcal{B}_{\geq 0} \cup  \mathcal{B}_{\geq 0}^{\dagger},\\
\mathcal{R} &:= \mathcal{X} \setminus \mathcal{B}.
\end{align*}
Then $\mathcal{B}$ and $\mathcal{R}$ form a partition of $\mathcal{X}$.  A visualization of this partition can be found in Figure~\ref{fig:spacePartitioning}.
\begin{figure}[ht] 
\centering
\begin{tikzpicture}[scale=0.3, transform shape, every node/.style={transform shape=false}]
  \def\R{5} 

  \fill[darkRed!70] (-9,-8) -- (-9,8) -- (9,8) -- (9,-8) -- (-9,-8) -- cycle;

\begin{scope}
  \clip (3,0) circle[radius=\R];
  \fill[darkBlue] (-20,-20) rectangle (20,20);
  \begin{scope}
    \fill[darkRed!70] (-2,-5) -- (-2,5) -- (6.5,5) -- (6.5,-5) -- (-2,-5) -- cycle;
  \end{scope}
\end{scope}

\begin{scope}
  \clip (-3,0) circle[radius=\R];
  \fill[darkBlue] (-20,-20) rectangle (20,20);
  \begin{scope}
    \fill[darkRed!70] (2,-5) -- (2,5) -- (-6.5,5) -- (-6.5,-5) -- (2,-5) -- cycle;
  \end{scope}
\end{scope}
\draw[darkRed!70, line width=1.2pt] (-3,0) circle[radius=\R];

  \draw[thick] (0,-8) -- (0,8);


  \draw[dotted, thick] (3,0) -- (5.5,4.3301);
  \draw[dotted, thick] (0,0) -- (3,0);
  \node[above] at (5,0.8) {$r_{I}$};
  \node[above] at (1.5,0) {$z_v$};
  \draw[dotted, thick] (0,-6) -- (6.5,-6);
  \draw (6.5,-8) -- (6.5,3.5707);
  \node[below] at (3.25,-6) {$r_{I}+z_v/2$};

  \draw[dotted, thick] (0,6) -- (8,6);
  \draw (8,8) -- (8,0);
  \node[above] at (4,6) {$r_{I}+z_v$};
  
  \draw[black, thick] (3,0) circle[radius=\R];

  \fill[black] (3,0) circle[radius=0.2];
  \node[anchor=north] at (3,0) {$v$};
  \fill[black] (-3,0) circle[radius=0.2];
  \node[anchor=north] at (-3,0) {$v^\dagger$};
\end{tikzpicture}

\caption{The two dimensional space $\mathcal{X}$ partitioned into the two regions $\mathcal{R}$ and $\mathcal{B}$.}
\label{fig:spacePartitioning}
\end{figure}
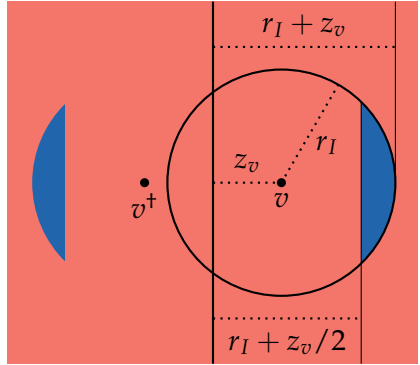
\end{Definition}
\noindent We start by bounding the advantage of the red region.
\begin{Lemma} \label{lem:advantage_red}
    The \emph{advantage} $\mu_f^\mathcal{R}(w_v, x_v)$ of the red region is non-negative for any valid function $f$ and  any vertex $v$ with $0\leq z_v$.
\end{Lemma}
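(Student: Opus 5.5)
The plan is to write the advantage of the red region as an integral over $\mathcal R$ and pair each point $x'$ with its reflection $x'^\dagger$. By Definition~\ref{def:complement-vertex} and the definition of $\mathcal B$, the region $\mathcal B$ is reflection-invariant, so $\mathcal R$ is reflection-invariant as well. Writing $\mathcal R_{\ge 0}:=\mathcal R\cap\{z'\ge 0\}$, I would split
\begin{align*}
\mu_f^{\mathcal R}(w_v,x_v)=\int_{\mathcal R_{\ge0}}\bigl(2f(w',z')-1\bigr)\Bigl(\mathbbm{1}\bigl((w_v,x_v),(w',x')\bigr)-\mathbbm{1}\bigl((w_v,x_v),(w',x'^\dagger)\bigr)\Bigr)\,d\eta(w',x').
\end{align*}
This identity uses the symmetry Condition~\ref{def:f_valid_sym}, $2f(w',-z')-1=-(2f(w',z')-1)$, together with the fact that $\eta$ is invariant under the reflection, since positions are uniform and weights are independent of position. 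The hyperplane $z'=0$ has measure zero and can be ignored.

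Next I sign the two factors separately. By the monotonicity Condition~\ref{def:f_valid_mon_z} and $f(w',0)=\tfrac12$, we have $2f(w',z')-1\ge 0$ whenever $z'\ge0$. For the second factor, the argument from the proof of Lemma~\ref{lem:T-monotone} applies. Since $z_v\ge0$ and $z'\ge0$, we have $\|x_v-x'\|\le\|x_v-x'^\dagger\|$: the two distances differ only in the first coordinate, where $|z_v-z'|\le z_v+z'$. The adjacency rule $\|x_u-x_v\|^d\le k\,w_uw_v$ is monotone in distance, so adjacency to $x'^\dagger$ implies adjacency to $x'$. The difference of indicators is therefore non-negative, the integrand is pointwise non-negative, and $\mu_f^{\mathcal R}\ge0$.

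The main obstacle I expect is bookkeeping rather than a hard estimate. I need to make sure the pairing stays inside $\mathcal R$, i.e.\ that $x'\in\mathcal R$ if and only if $x'^\dagger\in\mathcal R$. This follows directly from $\mathcal B=\mathcal B_{\ge0}\cup\mathcal B_{\ge0}^\dagger$ being closed under $\dagger$, since $\dagger$ is an involution. I also need to check that the torus topology does not break the reflection argument. I would handle this by assuming $n$ is large and the relevant neighbourhoods are far from the wrap-around, or equivalently by working in $\mathbb R^d$ as the half-space setting implicitly does. Note that monotonicity in $w$ (Condition~\ref{def:f_valid_mon_w}) is not needed here.
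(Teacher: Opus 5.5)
Your proposal is correct and follows the same route as the paper's proof. The steps match: reflection-invariance of $\mathcal R$ inherited from $\mathcal B=\mathcal B_{\ge0}\cup\mathcal B_{\ge0}^\dagger$, pairing $x'$ with $x'^\dagger$ via the symmetry condition, and the comparison $\lVert x_v-x'\rVert\le\lVert x_v-x'^\dagger\rVert$ for $z_v,z'\ge0$. You also state explicitly that $2f(w',z')-1\ge0$ for $z'\ge0$ (from monotonicity in $z$ and $f(w',0)=\tfrac12$), which the paper uses only tacitly when it concludes $\mu_f^{\mathcal R}\ge0$.
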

\begin{proof}
    By symmetry of $\mathcal{B}$ across the hyperplane, $\mathcal{R}$ must also be. Unfolding the advantage definition
    \begin{align*}
     \mu_f^\mathcal{R}(w_v, x_v) &= \int_{\mathcal R}(2f(w,z)-1)\,\mathbbm{1}\big((w_v,x_v),(w,x)\big)\,d\eta(w,x)\\
     &= \int_{\mathcal R_{\ge 0}}(2f(w,z)-1)\,\mathbbm{1}\big((w_v,x_v),(w,x)\big)\,d\eta(w,x) \\
     & \hspace{10pt}+\int_{\mathcal R_{\ge 0}^\dagger}(2f(w,z)-1)\,\mathbbm{1}\big((w_v,x_v),(w,x)\big)\,d\eta(w,x) \\
     &= \int_{\mathcal R_{\ge 0}}(2f(w,z)-1)\,\mathbbm{1}\big((w_v,x_v),(w,x)\big)\,d\eta(w,x) \\
     & \hspace{10pt}-\int_{\mathcal R_{\ge 0}}(2f(w,z)-1)\,\mathbbm{1}\big((w_v,x_v),(w,x^\dagger)\big)\,d\eta(w,x).
\end{align*}
Since $z_v\ge 0$, $\lVert x_v-x\rVert \le \lVert x_v-x^\dagger\rVert $  for all $x$ with $z\ge 0$, so in particular
\begin{align*}
    \mathbbm{1}\big((w_v,x_v),(w,x^\dagger)\big) =1  \implies \mathbbm{1}\big((w_v,x_v),(w,x)\big) =1
\end{align*}
for all $w$. Therefore 
\begin{align*}
    \mu_f^\mathcal{R}(w_v, x_v) \ge 0. \qedhere
\end{align*}
\end{proof}

\noindent Next we bound the advantage of the blue region.
\begin{Lemma} \label{lem:advantage_blue}
    The advantage $\mu_f^\mathcal{B}(w_v, z_v)$ of the blue region is lower bounded by
    \begin{align*}
    \mu_f^\mathcal{B}(w_v, x_v)\geq \underline{\mu}_f^\mathcal{B}(w_v, z_v)
    \end{align*}
    for any valid function $f$ and any vertex $v$ with $0\leq z_v\leq r_{I(v)}$, where
    \begin{align*}
        \underline{\mu}_f^\mathcal{B}(w_v, z_v) = \Theta(1) \, z_v^{(d+1)/2} \, r_I^{(d-1)/2} \left(1- \left(1+\frac{{z_v}}{2\,r_I}\right)^{d(1-\tau)}\right)\left(f(1,r_I)-\frac{1}{2}\right)
    \end{align*}
\end{Lemma}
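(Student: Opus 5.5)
The plan is to compute $\mu_f^{\mathcal B}$ exactly as an integral over the cap $\mathcal B_{\ge 0}$ by folding the mirror cap $\mathcal B_{\ge 0}^\dagger$ onto it, as in the proof of \Cref{lem:advantage_red}. I would then show the folded integrand is pointwise non-negative, and finally restrict to a sub-range of weights where it is explicitly bounded below.

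\textbf{Step 1 (folding).} By the symmetry Condition~\ref{def:f_valid_sym}, a point $x^\dagger\in\mathcal B_{\ge 0}^\dagger$ contributes $2f(w,-z)-1=-(2f(w,z)-1)$. This gives
\begin{align*}
\mu_f^{\mathcal B}(w_v,x_v)=\int_{\mathcal B_{\ge 0}}\int_{\mathcal D}\bigl(2f(w,z)-1\bigr)\Bigl(\mathbbm{1}\bigl((w_v,x_v),(w,x)\bigr)-\mathbbm{1}\bigl((w_v,x_v),(w,x^\dagger)\bigr)\Bigr)\rho(w)\,dw\,dx .
\end{align*}
Since $\mathcal B_{\ge 0}\subseteq I(v)$, the first indicator is always $1$. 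For $x\in\mathcal B_{\ge0}$ we have $z\ge 0$, so $2f(w,z)-1\ge 0$ by Conditions~\ref{def:f_valid_sym} and~\ref{def:f_valid_mon_z}. Hence the integrand is non-negative everywhere, and we may discard any part of the domain we like.

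\textbf{Step 2 (distance of the mirror cap and the weight threshold).} For $x\in\mathcal B_{\ge 0}$ we have $\|x_v-x^\dagger\|\ge z_v+z\ge r_I+\tfrac32 z_v\ge r_I(1+\tfrac{z_v}{2r_I})$. Since $r_I^d=k w_v$, adjacency to $x^\dagger$ requires
\[
w\ge \|x_v-x^\dagger\|^d/(k w_v)\ge \bigl(1+\tfrac{z_v}{2r_I}\bigr)^d=:w^\ast .
\]
Restricting to $w\in[1,w^\ast)$, the second indicator vanishes, and the $\rho$-mass of this range is exactly $1-(1+\tfrac{z_v}{2r_I})^{d(1-\tau)}$.

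\textbf{Step 3 (bounding $f$ on the remaining domain).} On the cap we have $z\ge r_I+z_v/2\ge r_I\ge k^{1/d}$. By Condition~\ref{def:f_valid_mon_z}, $f(w,z)\ge f(w,r_I)$. By Condition~\ref{def:f_valid_mon_w}, $f(w,r_I)\ge f(1,r_I)$, which needs $w\le r_I^d/k=w_v$. This holds whenever $w^\ast\le (3/2)^d\le w_v$. For small $w_v$ I would instead apply the monotonicity at height $z$, using $z^d/k\ge (r_I+z_v/2)^d/k=w^\ast w_v\ge w^\ast$, and then pass from $z$ down to $r_I$ at fixed $w=1$. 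So the integrand is at least $2(f(1,r_I)-\tfrac12)$ on $\mathcal B_{\ge0}\times[1,w^\ast)$.

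\textbf{Step 4 (cap volume).} $\mathcal B_{\ge0}$ is a spherical cap of height $h=z_v/2\le r_I/2$ of a $d$-ball of radius $r_I$. Its volume is $\Theta(1)\,h^{(d+1)/2}r_I^{(d-1)/2}$, by integrating cross-sections of radius $\sqrt{2r_I s-s^2}=\Theta(\sqrt{r_I s})$ for $s\in[0,h]$. Multiplying the volume, the weight mass and the integrand bound yields $\underline{\mu}_f^{\mathcal B}$.

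I expect the main obstacle to be Step 3: the domain restriction $w\in[1,z^d/k]$ in Condition~\ref{def:f_valid_mon_w} must be checked carefully so that the monotonicity chain applies to every weight in $[1,w^\ast)$. The cap-volume estimate is routine, but it has to be uniform in the regime $0\le z_v\le r_I$.
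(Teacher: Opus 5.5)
Your proposal is correct and matches the paper's proof. Both arguments fold $\mathcal B_{\ge 0}^\dagger$ onto $\mathcal B_{\ge 0}$ via symmetry, and both keep only weights $w<(1+\tfrac{z_v}{2r_I})^d$, where the mirrored cap cannot connect to $v$. Both then lower bound $f$ by $f(1,r_I)$ through the monotonicity conditions and multiply by a cap volume of order $z_v^{(d+1)/2}r_I^{(d-1)/2}$; the paper gets this volume from an inscribed cone of height $z_v/2$ rather than by integrating cross-sections. In Step 3, your ``small $w_v$'' route (monotonicity in $w$ at height at least $r_I+z_v/2$, whose admissible range always contains $[1,w^\ast)$, then descending in $z$ at $w=1$) works for every $w_v$ and is exactly the paper's chain, so the case split is unnecessary.
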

\begin{proof}
Write $r_I:=r_{I(v)}$ for brevity. The volume of the blue space on the right side of the hyperplane is a cut ball. The intersection of a hyperplane and a ball is either empty, or again a ball, but of diminished dimension. Thus, by construction, $I(v) \cap \{u: z_u = r_I + z_v/2 \}$ is a $(d-1)$-dimensional ball. Applying the Pythagorean theorem, the radius $r$ of this ball can be bounded by
\begin{align*}
    r &= \left(r_I^2-\left(r_I-\frac{z_v}{2}\right)^{2}\right)^{1/2} = \left(r_I \, z_v-\frac{z_v^2}{4}\right)^{1/2} \\
    &\ge \left(\frac{r_I \,z_v}{2}\right)^{1/2},
\end{align*}
where the last step exploits the assumption $r_I \ge z_v$. Using this, we can lower bound the volume of the region $\mathcal{B}_{\geq0}$, by the volume of the $d$-cone $C_{r, z_v/2}^{d}$ with  base $B_r^{d-1}$ and height $z_v/2$. Applying the volume definitions for the cone and the ball
\begin{align*}
     \mathrm{Vol}\left(\mathcal{B}_{\geq0}\right) &\ge \mathrm{Vol}\left(C_{r, z_v/2}^{d}\right) = \frac{z_v}{2\,d}\mathrm{Vol}\left({B_r^{d-1}}\right) = \frac{z_v^{(d+1)/2}}{2^{(d+1)/2}\,d} \frac{\pi^{(d-1)/2}}{\Gamma\left(\frac{d-1}{2}+1\right)} r_I^{(d-1)/2} \\
     &\ge \Theta(1) \, z_v^{(d+1)/2} \, r_I^{(d-1)/2},
\end{align*}
where $\Gamma $ is the Gamma function, a function which extends the factorial operator to the reels.

Now that we have gotten a handle for the volume of the blue region, we can shift our attention to the advantage. Unfolding the advantage definition, splitting the region into its reflexive halves and applying symmetry (Condition~\ref{def:f_valid_sym})
\begin{align*}
     \mu_f^\mathcal{B}(w_v, x_v) &= \int_{\mathcal B}(2f(w,z)-1)\,\mathbbm{1}\big((w_v,x_v),(w,x)\big)\,d\eta(w,x)\\
     &= \int_{\mathcal B_{\ge 0}}(2f(w,z)-1)\,\mathbbm{1}\big((w_v,x_v),(w,x)\big)\,d\eta(w,x) \\
     & \hspace{10pt}+\int_{\mathcal B_{\ge 0}^\dagger}(2f(w,z)-1)\,\mathbbm{1}\big((w_v,x_v),(w,x)\big)\,d\eta(w,x) \\
     &= \int_{\mathcal B_{\ge 0}}(2f(w,z)-1)\,\mathbbm{1}\big((w_v,x_v),(w,x)\big)\,d\eta(w,x) \\
     & \hspace{10pt}-\int_{\mathcal B_{\ge 0}}(2f(w,z)-1)\,\mathbbm{1}\big((w_v,x_v),(w,x^\dagger)\big)\,d\eta(w,x).
\end{align*}
Any vertex in $\mathcal{B}_{\ge 0}$ is adjacent to $v$, since $\mathcal{B}_{\ge 0} \subset I(v)$. Furthermore, all vertices $u \in \mathcal{B}_{\ge 0}^\dagger$ have distance at least $r_I + 3z_v/2$ from $v$. This lets us solve for $w_u$ in the GIRG edge criterion, which gives the necessary condition $w_u \ge (r_I + 3z_v/2)^d/(kw_v)$ for all neighbours of $v$ in $\mathcal{B}_{\ge 0}^\dagger$. Using this observation, we can lower bound $\mu_f^\mathcal{B}(w_v, z_v)$ by
\begin{align*}
    \mu_f^\mathcal{B}(w_v, x_v) &\ge \int_{\mathcal B_{\ge 0}} \int_{1}^\infty (2f(w,z)-1)\, \rho(w)\,dw\,d\eta(x) \\
     & -\int_{\mathcal B_{\ge 0}}\int_{\frac{\left(r_I + 3 z_v/2\right)^d}{k \,w_v}}^\infty (2f(w,z)-1)\, \rho(w)\,dw\,d\eta(x) \\
     &\ge \int_{\mathcal B_{\ge 0}} \int_{1}^\infty (2f(w,z)-1)\, \rho(w)\,dw\,d\eta(x) \\
     & -\int_{\mathcal B_{\ge 0}}\int_{\frac{\left(r_I + z_v/2\right)^d}{k \,w_v}}^\infty (2f(w,z)-1)\, \rho(w)\,dw\,d\eta(x) \\
     &\ge \int_{\mathcal B_{\ge 0}} \int_{1}^{\frac{\left(r_I + z_v/2\right)^d}{k \,w_v}} (2f(w,z)-1)\, \rho(w)\,dw\,d\eta(x).
\end{align*}
By construction of $\mathcal B_{\ge 0}$, all vertices have a $z$-component of at least $r_I + z_v/2$. By monotonicity in $z$ (Condition~\ref{def:f_valid_mon_z}), this lets us pointwise lower bounds  $f(w, z)$ over the integration range of the outer integral
\begin{align*}
    \mu_f^\mathcal{B}(w_v, x_v) \ge \int_{\mathcal B_{\ge 0}} \int_{1}^{\frac{\left(r_I + z_v/2\right)^d}{k \,w_v}} (2f(w,r_I+z_v/2)-1)\, \rho(w)\,dw\,d\eta(x).
\end{align*}
By monotonicity of $f(w, z)$ in $w$ on $[1, z^d/k]$ (Condition~\ref{def:f_valid_mon_w}), $f$ can be pointwise lower bounded over the integration range of the inner integral
\begin{align*}
    \mu_f^\mathcal{B}(w_v, x_v) &\ge \int_{\mathcal B_{\ge 0}} \int_{1}^{\frac{\left(r_I + z_v/2\right)^d}{k \,w_v}} (2f(1,r_I+z_v/2)-1)\, \rho(w)\,dw\,d\eta(x) \\
    & \ge \Theta(1)  \left(f(1,r_I)-\frac{1}{2}\right)\int_{\mathcal B_{\ge 0}} \int_{1}^{\frac{\left(r_I + z_v/2\right)^d}{k \,w_v}} \rho(w)\,dw\,d\eta(x).
\end{align*}
Notice that the integrand of the outer interval is independent of $x$, this lets us bound the expression by applying the previously derived bound for the volume of $\mathcal{B}_{\ge 0}$:
\begin{align*}
    \mu_f^\mathcal{B}(w_v, x_v) &\ge  \Theta(1) \, z_v^{(d+1)/2} \, r_I^{(d-1)/2}\left(f(1,r_I)-\frac{1}{2}\right) \int_{1}^{\frac{\left(r_I + z_v/2\right)^d}{k \,w_v}}\rho(w) \, dw.
\end{align*}
Finally solving the inner integral
\begin{align*}
    \mu_f^\mathcal{B}(w_v, x_v) &\ge\Theta(1) \, z_v^{(d+1)/2} \, r_I^{(d-1)/2}\left(f(1,r_I)-\frac{1}{2}\right) \int_{1}^{\frac{\left(r_I + z_v/2\right)^d}{k \,w_v}} (1-\tau) \, w^{-\tau} \, dw \\
    &\ge \Theta(1) \, z_v^{(d+1)/2} \, r_I^{(d-1)/2}\left(f(1,r_I)-\frac{1}{2}\right) \left[-w^{1-\tau}\right]_1^{\frac{\left(r_I + z_v/2\right)^d}{k \,w_v}} \\
    &\ge  \Theta(1) \, z_v^{(d+1)/2} \, r_I^{(d-1)/2} \left(1- \left(\frac{{2\,r_I+z_v}}{2\, r_I}\right)^{d(1-\tau)}\right)\left(f(1,r_I)-\frac{1}{2}\right) \\
    &\ge  \Theta(1) \, z_v^{(d+1)/2} \, r_I^{(d-1)/2} \left(1- \left(1+\frac{{z_v}}{2\,r_I}\right)^{d(1-\tau)}\right)\left(f(1,r_I)-\frac{1}{2}\right),
\end{align*}
we concludes the proof by defining $\underline{\mu}_f^\mathcal{B}(w_v, z_v)$ to be the right hand side.
\end{proof}
\subsubsection{A Valid Subsolution}
\noindent Before finally defining a valid subsolution of $\mathcal{T}$, we need one more lemma about solutions to the standard normal distribution function $\Phi$ that appears in the definition of $\mathcal{T}$~\eqref{eq:1}. 

\begin{Lemma}\label{lem:f_fixpoint}
    For fixed $y > \sqrt{\pi}$, the equation
    \begin{align*}
        \delta = \Phi\!\left(y\,\left(\delta-\frac{1}{2}\right)\right)
    \end{align*}
    admits a solution $\delta^*= 1/2+\varepsilon$ for some $\varepsilon>0$.
\end{Lemma}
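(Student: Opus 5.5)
The plan is a one-dimensional intermediate value argument applied to the gap function
\[
g(\delta) := \Phi\!\left(y\left(\delta-\tfrac12\right)\right) - \delta, \qquad \delta\in\left[\tfrac12,1\right].
\]
The threshold $\sqrt{\pi}$ in the statement is exactly the value at which the slope of $\delta\mapsto\Phi(y(\delta-\frac12))$ at $\delta=\frac12$ equals $1$. It therefore corresponds to the normalisation
\[
\Phi'(0)=\tfrac{1}{\sqrt{\pi}},\qquad\text{for instance }\Phi(x)=\tfrac12\bigl(1+\operatorname{erf}(x)\bigr).
\]
I will work with this normalisation. All that the argument uses about $\Phi$ is the following: it is continuous and differentiable, it is strictly increasing, $\Phi(0)=\frac12$, $\Phi(x)<1$ for all finite $x$, and $\Phi'(0)=1/\sqrt{\pi}$. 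If one insists on the textbook normal CDF instead, the identical argument goes through with $\sqrt{\pi}$ replaced by $\sqrt{2\pi}$; this amounts only to a rescaling of the constant $y$, which in the later application is a product of $\Theta(1)$ factors anyway.

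\textbf{Step 1 (the trivial root and the slope there).} First, $g(\tfrac12)=\Phi(0)-\tfrac12=0$, so $\delta=\frac12$ is always a solution; this is the constant fixed point $\hat f\equiv\frac12$ mentioned in the introduction. Differentiating gives $g'(\tfrac12)=y\,\Phi'(0)-1=y/\sqrt{\pi}-1$, which is strictly positive precisely because $y>\sqrt{\pi}$. By the definition of the derivative, there is some $\eta>0$ with $g(\delta)>0$ for all $\delta\in(\tfrac12,\tfrac12+\eta)$. Fix one such point $\delta_1$.

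\textbf{Step 2 (sign at the right end).} At $\delta=1$ we have $g(1)=\Phi(y/2)-1<0$, because $\Phi(x)<1$ for every finite $x$.

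\textbf{Step 3 (conclusion).} The function $g$ is continuous on $[\delta_1,1]$, it is positive at $\delta_1$, and it is negative at $1$. The intermediate value theorem therefore yields $\delta^*\in(\delta_1,1)$ with $g(\delta^*)=0$. Writing $\delta^*=\frac12+\varepsilon$, we get $\varepsilon\ge\delta_1-\frac12>0$ (and also $\varepsilon<\frac12$), which is the claimed non-trivial solution.

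The only delicate point is Step 1. The sign of $g$ immediately to the right of $\frac12$ is governed entirely by the comparison of $y\Phi'(0)$ with $1$, so this is where the hypothesis $y>\sqrt{\pi}$, together with the paper's normalisation of $\Phi$, is used. The remaining steps are routine.

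If a quantitative $\varepsilon$ is wanted later, for instance to make the subsolution of the next construction explicit, I would take the largest root. The reason is that $g$ is concave on $[\frac12,1]$, since $\Phi$ is concave on $[0,\infty)$. Hence $g$ has exactly one root in $(\frac12,1)$, and $g\ge0$ on $[\frac12,\delta^*]$. This last fact is the form that will be needed to verify the subsolution condition (Condition~\ref{def:f_valid_sub}).
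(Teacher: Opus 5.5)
Your proof is correct and follows the paper's argument: the gap function $g(\delta)=\Phi(y(\delta-\frac12))-\delta$ on $[\frac12,1]$, positivity just to the right of $\frac12$ from $g'(\frac12)=y/\sqrt{\pi}-1>0$, then $g(1)=\Phi(y/2)-1<0$, then the intermediate value theorem. Your normalisation remark is also justified. The paper's own computation $g'(\delta)=\frac{y}{\sqrt{\pi}}e^{-y^2(\delta-1/2)^2}-1$ implicitly uses $\Phi'(x)=e^{-x^2}/\sqrt{\pi}$, although $\mathcal T$ is defined with the standard normal CDF. For that CDF the threshold must be $\sqrt{2\pi}$: by your concavity observation there is no root above $\frac12$ when $y\le\sqrt{2\pi}$. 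This only rescales the constant in the later application.
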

\begin{proof}
    Set
    \begin{align*}
        g(\delta) &=\Phi\!\left(y\,\left(\delta-\frac{1}{2}\right)\right) -\delta \qquad \text{for }\delta\in\bigl[\tfrac12,1\bigr].
    \end{align*}
    Since $\Phi$ is smooth, $g$ is continuous on $[1/2,1]$ and differentiable on $(1/2,1)$. A direct computation gives
    \begin{align*}
        g'(\delta)=\frac{y}{\sqrt{\pi}}\exp\left(-y^2\left(\delta-\frac12\right)^2\right)-1.
    \end{align*}
    Hence
    \begin{align*}
        g'\left(\tfrac12\right)=\frac{y}{\sqrt{\pi}}-1,
    \end{align*}
which is positive whenever $y>\sqrt{\pi}$. Thus there exists $\varepsilon>0$ such that $g(1/2+\varepsilon)>0$. On the other hand
\begin{align*}
    g(1)=\Phi\!\left(\frac{y}{2}\right)-1<0,
\end{align*}
because $\Phi(t)<1$ for every finite $t$. By the intermediate value theorem there exists
\begin{align*}
    \delta^*\in\left[\tfrac12+\varepsilon,\,1\right)\subset\left(\tfrac12,1\right)
\end{align*}
with $g(\delta^*)=0$, i.e.
\begin{align*}
    \delta^*=\Phi\!\left(y\left(\delta^*-\frac{1}{2}\right)\right),
\end{align*}
as required.
\end{proof}
\noindent This leads us to a function that is a valid subsolution of $\mathcal T$.
\begin{Lemma} \label{lem:valid_f}
    There exists a value $ \delta^* =1/2+\varepsilon$ for some $\varepsilon>0$ such that the function
    \begin{align*}
        f(w_v, z_v)= \begin{cases}
                        \delta^* &\text{ for $w_v=1$ and $z_v=r_{I(v)}$}\\\Phi\!\left(\frac{\underline{\mu}_f^\mathcal{B}\left(w_v, \min\{z_v, r_{I(v)}\}\right)}{\sqrt{ 2 \, \lambda(w_v)}}\right) &\text{ for $z_v\geq 0$} \\
                        1- f\left(w_v, -z_v\right) & \text{otherwise}
                    \end{cases} 
    \end{align*}
    is a valid subsolution of $\mathcal T$ for  some $k \in \Theta(1) $.
\end{Lemma}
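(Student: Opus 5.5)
The definition of $f$ is self-referential: $\underline{\mu}_f^\mathcal{B}$ depends on $f$ only through the single number $f(1,r_I)-\tfrac12$. Here $r_I=k^{1/d}$ is the influence radius of a weight-$1$ vertex. I would therefore first fix this number and only afterwards verify Conditions 1--4 of Definition~\ref{def:f_valid}.

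\textbf{Step 1: fixing $\delta^*$.} Evaluate $\underline{\mu}^\mathcal{B}$ at $w_v=1$, $z_v=r_I$. All factors then become explicit powers of $k$:
\begin{align*}
\underline{\mu}^\mathcal{B}(1,r_I) &= c\,r_I^{d}\bigl(1-(3/2)^{d(1-\tau)}\bigr)\bigl(\delta-\tfrac12\bigr) = c'\,k\,\bigl(\delta-\tfrac12\bigr),\\
\sqrt{2\lambda(1)} &= \Theta(\sqrt{k}).
\end{align*}
The fixed-point equation at that point is therefore $\delta=\Phi\bigl(y(\delta-\tfrac12)\bigr)$ with $y=\Theta(\sqrt k)$. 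Choosing the constant $k$ large enough gives $y>\sqrt\pi$. Lemma~\ref{lem:f_fixpoint} then provides $\delta^*=\tfrac12+\varepsilon$, and with $f(1,r_I):=\delta^*$ the first two cases of the definition agree at $(1,r_I)$. Once $\delta^*$ is fixed, $\underline{\mu}^\mathcal{B}$ is an explicit nonnegative function of $(w_v,z_v)$, so $f$ is well defined. Two further remarks: the case $z_v=0$ gives $\Phi(0)=\tfrac12$, and the third case of the definition gives symmetry (Condition 1).

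\textbf{Step 2: Conditions 2 and 3.} Set $u=z_v/r_I$ with $u\in[0,1]$. Then
\[
\frac{\underline{\mu}^\mathcal{B}}{\sqrt{\lambda(w)}} \propto r_I^{d}\,u^{(d+1)/2}\bigl(1-(1+u/2)^{d(1-\tau)}\bigr)\big/\sqrt{w}\;\propto\;\sqrt{kw}\;h(u),
\]
where $h$ is increasing in $u$. For fixed $w$ this is increasing in $z_v$ up to $r_I$, and it is constant afterwards because of the $\min$. Reflection then gives monotonicity on all of $\mathbb R$, which is Condition 2. For Condition 3, fix $z$ and let $w\in[1,z^d/k]$. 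Then $r_I(w)\le z$, so the $\min$ equals $r_I$ and $u=1$. The value is $\Theta(\sqrt{kw})\,h(1)$, which is increasing in $w$.

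\textbf{Step 3: the subsolution property (main obstacle).} For $z_v\ge0$, Lemma~\ref{lem:T-monotone} reduces $f\le\mathcal Tf$ to the inequality $\mu_f(w_v,x_v)\ge\underline{\mu}^\mathcal{B}(w_v,\min\{z_v,r_I\})$. If $z_v\le r_I$, this follows by splitting $\mu_f=\mu_f^\mathcal{R}+\mu_f^\mathcal{B}$ and applying Lemma~\ref{lem:advantage_red} and Lemma~\ref{lem:advantage_blue}. These lemmas apply because Step 2 shows $f$ is valid.

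The delicate point is that Lemma~\ref{lem:advantage_blue} involves $f(1,r_I)$ with $r_I=r_{I(v)}$ depending on $w_v$, while we only control $f(1,k^{1/d})=\delta^*$. For $w_v\ge1$ we have $r_{I(v)}\ge k^{1/d}$, so monotonicity in $z$ gives $f(1,r_{I(v)})\ge\delta^*$, and the bound only improves. I would state the definition of $\underline{\mu}^\mathcal{B}$ with $\delta^*-\tfrac12$ in place of $f(1,r_I)-\tfrac12$ to make this rigorous.

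If $z_v>r_I$, I would compare with the point $z'=r_I$. Monotonicity of $\mathcal T$ applied to the valid $f$ should give $\mu_f(w_v,z_v)\ge\mu_f(w_v,r_I)$: translating $v$ away from the hyperplane exchanges neighbours of lower $f$ for ones of higher $f$. I would prove this by a reflection argument in the hyperplane halfway between the two positions, in the same style as Lemma~\ref{lem:advantage_red}. I expect this comparison, together with bookkeeping of the $\Theta(1)$ constants so that one $k$ works uniformly in $w_v$, to be the hardest part. Finally, the exceptional point $(1,r_I)$ holds with equality $\delta^*=\Phi(\cdot)\le\mathcal Tf$.
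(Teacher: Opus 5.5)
Your proposal is correct. Steps~1--2, and the case $0\le z_v\le r_{I(v)}$ of Step~3, match the paper's proof. You obtain $\delta^*$ from Lemma~\ref{lem:f_fixpoint} with $y=\Theta(\sqrt k)$, rightly using $\lambda(1)$ where the paper writes $\lambda(k^{1/d})$. You read off Conditions~1--3 from the explicit form $\Theta(\sqrt{kw})\,h(z_v/r_{I(v)})$, and you get the subsolution inequality from Lemmas~\ref{lem:advantage_red} and~\ref{lem:advantage_blue}.

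The genuine difference is the case $z_v>r_{I(v)}$. The paper uses the chain $\underline{\mu}_f^{\mathcal B}(w_v,\min\{z_v,r_{I(v)}\})\le\underline{\mu}_f^{\mathcal B}(w_v,z_v)\le\mu_f(w_v,z_v)$. Its second step applies Lemma~\ref{lem:advantage_blue} outside the hypothesis $z_v\le r_{I(v)}$. If $\underline{\mu}_f^{\mathcal B}$ is read as the explicit formula, that step even fails for large $z_v$: the formula grows like $z_v^{(d+1)/2}$, while $\mu_f\le\lambda(w_v)$. Your route $\mu_f(w_v,z_v)\ge\mu_f(w_v,r_{I(v)})\ge\underline{\mu}_f^{\mathcal B}(w_v,r_{I(v)})$ stays inside the range where Lemma~\ref{lem:advantage_blue} holds, so it repairs this step.

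The comparison is easier than you expect. Note also that it is not an instance of Lemma~\ref{lem:T-monotone}, and the reduction of $f\le\mathcal Tf$ to an advantage inequality is just monotonicity of $\Phi$. By translation invariance of the connection kernel, with $z'=x'[0]$,
\begin{align*}
\mu_f(w,x+h e_0)-\mu_f(w,x)=\int 2\bigl(f(w',z'+h)-f(w',z')\bigr)\,\mathbbm{1}\big((w,x),(w',x')\big)\,d\eta(w',x')\ge 0
\end{align*}
for $h\ge0$. This uses only Condition~2 on all of $\mathbb R$; your reflection argument also works.

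Finally, your concern about $f(1,r_{I(v)})$ versus $\delta^*$ resolves with equality. Because of the $\min$, $f(1,z)=\delta^*$ for every $z\ge k^{1/d}$. So replacing $f(1,r_I)-\tfrac12$ by $\delta^*-\tfrac12$ in $\underline{\mu}_f^{\mathcal B}$ changes nothing, and this is what the paper tacitly does when it checks Condition~3.
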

\begin{proof}
We start by selecting $\delta^*$ as the solution to the equation
\begin{align*}
    f(1,k^{1/d}) &=\,\Phi\!\left(\frac{\underline{\mu}_f^\mathcal{B}(1, k^{1/d})}{\sqrt{ 2 \, \lambda(k^{1/d})}}\right) \\
        &= \,\Phi\!\left(\Theta(1) \, k^{1/2} \left(f(1,k^{1/d})-\frac{1}{2}\right)\!\right).
\end{align*}
Such a solution exists by Lemma~\ref{lem:f_fixpoint} whenever
\begin{align*}
        \Theta(1)\, k^{1/2} > \sqrt{\pi},
\end{align*}
where the $\Theta(1)$ hides constants in $d$ and $\tau$.
Rearranging yields, that this condition is satisfied whenever $k$ is a large enough constant in $d$ and $\tau$.

Next we go one by one trough the Conditions of Definition~\ref{def:f_valid}. The first Condition~\ref{def:f_valid_sym} follows from the construction of $f$ for all $z_v\not=0$. For $z_v=0$, $\underline{\mu}_f^\mathcal{B} = 0$ and the expression thus reduces to $f(w_v, 0) = \Phi(0) = 1/2$.
    
    The second Condition~\ref{def:f_valid_mon_z}, follows immediately from $\Phi$ and $\underline{\mu}_f^\mathcal{B}$ being monotone increasing in $z_v \ge 0$. Note that by choice of $\delta^*$, for vertices $v$ with $w_v =1$ and $z_v=r_{(v)}$, 
    \begin{align*}
        \delta^* = f(1, k^{1/d}) = \,\Phi\!\left(\frac{\underline{\mu}_f^\mathcal{B}(1, k^{1/d})}{\sqrt{ 2 \, \lambda(k^{1/d})}}\right) = \Phi\!\left(\frac{\underline{\mu}_f^\mathcal{B}\left(w_v, \min\{z_v, r_{I(v)}\}\right)}{\sqrt{ 2 \, \lambda(w_v)}}\right),
    \end{align*}
    monotonicity at this point is also guaranteed. By symmetry, monotonicity for $z_v < 0$ follows immediately.

    For the third Condition~\ref{def:f_valid_mon_w}, we can exploit the monotonicity of $\Phi$, to focus solely on $\underline{\mu}_f^\mathcal{B}$ and $\lambda$. The restriction $w_v \in [1, z_v^d/k]$ implies that $r_{I(v)} \leq z_v$, then
    \begin{align*}
        f(w_v, r_{I(v)}) &= \Phi\!\left(\frac{\underline{\mu}_f^\mathcal{B}\left(w_v, r_{I(v)}\right)}{\sqrt{ 2 \, \lambda(w_v)}}\right) \\
        &=\Theta(1) \, r_{I(v)}^{d/2}\left(\delta^* -\frac{1}{2} \right) \\
        &= \Theta(1) \, r_{I(v)}^{d/2} = \Theta(1) \, k \, w_v,
    \end{align*}
    which is clearly monotone increasing in $w_v$. Note again, that by choice of $\delta^*$, monotonicity at $w_v=1$ is also guaranteed.

    Since Conditions~\ref{def:f_valid_sym}--\ref{def:f_valid_mon_w} hold, $f$ is valid. We can show Condition~\ref{def:f_valid_sub} by applying Lemmas~\ref{lem:advantage_red} and \ref{lem:advantage_blue}:
    \begin{align*}
        f(w_v, z_v) =&\, \Phi\!\left(\frac{\underline{\mu}_f^\mathcal{B}(w_v, \min\{z_v, r_{I(v)}\})}{\sqrt{ 2 \, \lambda(w_v)}}\right) \\
        \leq&\, \Phi\!\left(\frac{\underline{\mu}_f^\mathcal{B}(w_v, z_v)}{\sqrt{ 2 \, \lambda(w_v)}}\right) \\
         \leq&\, \Phi\!\left(\frac{\mu_f(w_v, z_v)}{\sqrt{ 2 \, \lambda(w_v)}}\right) \\
         =&(\mathcal{T}f)(w_v, z_v).
    \end{align*}
    Thus $f$ is a valid subsolution of $\mathcal T$.
\end{proof}

\noindent We are now ready to prove Theorem~\ref{thm:main}.

\begin{proof}[Proof of Theorem~\ref{thm:main}]
Take $f$, $\delta^*$ as defined in Lemma~\ref{lem:valid_f}. Choose $k$ large enough such that both its requirement in Definition~\ref{def:mean-field} and its requirement in Lemma~\ref{lem:valid_f} are satisfied. Clearly $f(w,z) \leq f_0(w,z)$ on $z \in [0, \infty]$. Then by Theorem~\ref{thm:valid-lb}, $f(w,z) \leq f_t(w,z)$ for all $t$, $w$ and $z \in [0, \infty]$. Define $z_0(w) := k^{1/d} \, w^{1/d}$. It is easily verifiable that  $f(w, z) = f(w, z_0(w))\geq \delta^*$ for all $z\geq z_0(w)$, where 
\begin{align*}
    \delta^* = \tfrac{1}{2}+ \varepsilon.
\end{align*}
By symmetry also 
\begin{align*}
    f(w, -z) \le \tfrac{1}{2}- \varepsilon.
\end{align*}
Taking the limit $\lim_{t\to\infty} f_t$ yields the same bounds for $f^*$. Thus, both opinions survive in the mean-field.
\end{proof}

\subsection{From Half-Spaces to Balls}
\noindent In this section, we will convert our results from the previous section to large Euclidean balls. We will prove that a generalized version of Theorem~\ref{thm:main} is unobtainable in the limit, due to the lacking symmetry of the initial condition. However, $t$-survival can still be guaranteed for an unbounded number of mean-field updates. In other words, in the mean-field the boundary of the ball shrink by $o(1)$ in any constant time as the radius of the ball grows.

\subsubsection{Localization}
\noindent Remember our motivation to study the behaviour of the dynamic for the half-space interface: locally, the boundary of a large ball has negligible curvature and thus looks identical to a half-space. 

Let us start, by 
aligning the machinery we have developed for half-spaces to the new problem of Euclidean balls. Thus again, let $v$ be some vertex of type $(w_v, x_v)$ and let $B_r$ be a large ball of radius $r$. Previously $z_v$ represented the signed distance from $x_v$ to the half-space boundary. The equivalent concept in this setting, the signed distance of $x_v$ to the boundary of $B_r$, could have contributions from all different dimensions of $x_v$. But conveniently we can rotate the space in such a way that $v$ lies on the $z$-axis and that the ball has the half-space $z=0$ as tangent in direction of $z$. 
\begin{Lemma}[Local Coordinates] \label{lem:local_coordinates}
    Let $v$ be some vertex of type $(w_v, x_v)$ and let $B_r$ be a ball of radius $r$. Denote by $p$ the point on the boundary of $B_r$ closest to $x_v$ and $H_p$ as the hyperplane tangential to $B_r$ at $p$. Then there exists a distance preserving transformation $\mathrm{A}_v$ of $\mathcal{R}$, such that $A_v \cdot H_p = \{z=0\}$ and $A \cdot B_r $ is to the right of  $A \cdot H_p$.
\end{Lemma}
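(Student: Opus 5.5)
We construct $\mathrm{A}_v$ explicitly as a composition of a translation and a rotation (an orthogonal map). Both preserve $\ell_2$ distances, and this is exactly why the switch from $\ell_\infty$ to $\ell_2$ was made: rotations are isometries only for the Euclidean norm. Let $c$ be the centre of $B_r$. We treat the generic case $x_v\neq c$, where $p = c + r\,(x_v-c)/\lVert x_v-c\rVert$ is the unique closest boundary point. If $x_v=c$, every boundary point is closest, and we simply fix any one of them as $p$. The tangent hyperplane at $p$ is
\[
H_p=\{y : \langle y-p,\, n\rangle = 0\},
\]
where $n$ denotes the outward unit normal $(p-c)/r$.

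\textbf{Step 1 (translation).} Apply the translation $y\mapsto y-p$. This sends $p$ to the origin and $H_p$ to the linear hyperplane $n^\perp$. The ball becomes $B_r(c-p)=B_r(-r n)$, which lies entirely in the half-space $\{\langle y,n\rangle\le 0\}$.

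\textbf{Step 2 (rotation).} We want the ball to end up on the right, i.e.\ in $\{z\ge 0\}$, which matches the half-space convention in which blue occupies $z\ge 0$. So choose an orthogonal matrix $Q$ with $Q(-n)=e_0$, where $e_0$ is the first coordinate vector. Such a $Q$ exists because the orthogonal group acts transitively on the unit sphere: complete $-n$ to an orthonormal basis and map it to the standard basis, or use a Householder reflection if orientation does not matter.

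Since $Q$ is orthogonal, $Q(n^\perp)=(Qn)^\perp=e_0^\perp=\{z=0\}$. Moreover, $\langle Qy,e_0\rangle=\langle y,-n\rangle\ge 0$ for every point of the translated ball. Hence $\mathrm{A}_v:=Q\circ(\cdot-p)$ maps $H_p$ onto $\{z=0\}$ and maps $B_r$ into $\{z\ge 0\}$; it touches the hyperplane only at the origin.

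\textbf{Step 3 (position of $v$).} As a consistency check, $\mathrm{A}_v x_v = Q(x_v-p)$. The vector $x_v-p$ is parallel to $n$, so $\mathrm{A}_v x_v$ lies on the $z$-axis. Its first coordinate equals minus the signed distance along $n$, i.e.\ plus or minus $\lVert x_v-p\rVert$, which is positive when $x_v$ lies inside the ball. This is exactly the remark preceding the lemma.

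Finally, $\mathrm{A}_v$ is an affine isometry, and isometries preserve the uniform positional density $\eta$. Therefore the GIRG edge rule $\lVert x_u-x_v\rVert^d\le k w_uw_v$ is invariant under $\mathrm{A}_v$, and the advantage computations of the half-space section carry over in local coordinates.

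I expect essentially no obstacle. The only points needing care are the sign convention that puts the ball on the right, the degenerate case $x_v=c$, and the torus topology of $\mathcal X$. For the torus, one has to argue that the ball is small compared to the side length $n^{1/d}$, so that the isometry can be taken locally on $\mathbb R^d$ before projecting back to the torus.
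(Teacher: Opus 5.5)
Your argument is correct and is essentially the paper's proof: translate $p$ to the origin, then apply an orthogonal map sending the unit normal of $H_p$ that points into $B_r$ (the paper's $\vec u$, your $-n$) to $e_0$, so that $H_p$ is mapped onto $e_0^\perp=\{z=0\}$. Beyond the paper, you explicitly check that the ball lands in $\{z\ge 0\}$ and you flag the degenerate case $x_v=c$ and the torus issue, all of which the paper leaves implicit; also, by allowing all of $O(d)$ rather than only $SO(d)$, your argument covers $d=1$, where $SO(1)$ does not act transitively on $S^0$.
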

\begin{proof}
    We define $A_v = A_2 A_1$ as the combination of rotations and translations. Since $\lVert.\rVert_2$ is rotationally invariant, $A$ will be guaranteed to be distance preserving.

    Define $A_1$ to be the translation matrix that maps $p$ to the origin. Let $\vec{u}$ be the normalized normal vector of $H_p$ towards $B_r$, then by the transitivity of the rotation group $SO(d)$ on the sphere $S^{d-1}$, there exists a rotation $A_2 \in SO(d)$ with $A_2 \vec{u} = e_0$. Then for any $x \in \mathcal{X}$,
    \begin{align*}
        x \in A_1H_p \Leftrightarrow \vec{u}\cdot A_1 x = 0 \Leftrightarrow (A_2\vec{u})\cdot (A_2 A_1 x) =0 \Leftrightarrow e_0 (A_2A_1x) = 0 \Leftrightarrow A_2 A_1x [0] = 0.
    \end{align*}
    Thus $A_v H_p =A_2 A_1 H_p =  \{z=0\}$.
\end{proof}

\subsubsection{Non-Stability of Balls in the Mean-Field}
\noindent We define the natural ball initialization
\begin{align*}
    g_0(w, x) := \begin{cases}
        1 & \text{if } x \in B_r \\
        0 & \text{otherwise}
    \end{cases} 
\end{align*}
its evolution under the mean-field $g_{t+1}(w, x) := (\mathcal T g_t)(w,x)$  and its limiting distribution $g^* := \lim_{t\rightarrow \infty} {g_{t}}$. As for the half-space initialization, we are interested in the behaviour of its limiting distribution. Unfortunately, $g^*$ is, under some mild assumption, much less interesting than $f^*$.
\begin{Theorem} \label{thm:non_convergence}
    Fix $B_r$ and a sufficiently large weight cut-off $W \in \Theta(n)$. Consider the mean-field model, in which $\mathcal D$ is truncated at $W$. Then $g^* < 1/2$ everywhere.
\end{Theorem}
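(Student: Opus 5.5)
We prove the statement by showing that the ball strictly shrinks at every step, so that the sequence $g_t$ is pointwise monotone decreasing and its limit lies below $1/2$ everywhere.

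\textbf{Step 1: monotonicity in time.} We first show $g_1\le g_0$ everywhere. Outside $B_r$ this holds because $g_1\le 1$ and $g_0=0$ there would be a problem; so instead we argue via a radial comparison. Both $g_0$ and $\mathcal T$ commute with rotations about the centre of $B_r$, since $\lVert\cdot\rVert_2$ is rotationally invariant (Lemma~\ref{lem:local_coordinates}). Hence every $g_t$ is radial in $x$ and nonincreasing in the distance from the centre. The mechanism is the ``curvature deficit''. Take the local coordinates of Lemma~\ref{lem:local_coordinates} at a point $x$ and reflect across $H_p$. The blue set $B_r$ is strictly contained in the half-space bounded by $H_p$, while the red region contains the complementary half-space. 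So, compared with the half-space situation, the advantage $\mu_{g_0}(w,x)$ is strictly smaller. Using the reflection argument from Lemma~\ref{lem:advantage_red}, we aim to show $\mu_{g_t}(w,x)<0$ at the centre-tangent point for large $w$.

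\textbf{Step 2: the truncation $W\in\Theta(n)$ is the key.} With $\mathcal D$ truncated at $W$ large, a vertex of weight close to $W$ has a ball of influence of volume $\Theta(kW)=\Theta(n)$. This covers essentially the whole torus, so its neighbourhood is dominated by the global volume fractions. Since $\mathrm{Vol}(B_r)<n/2$ (implicitly $B_r$ is a fixed ball in a torus of volume $n$), such hubs see a red majority. Hence $g_t(w,x)<1/2-c$ for all $w$ close to $W$, uniformly in $x$. These hubs are in turn neighbours of every vertex, and their red bias feeds back into all $\mu_{g_t}$. Iterating, we show by induction on $t$ that $\sup_x g_t(w,x)$ is nonincreasing and that the blue region $\{x:g_t(w,x)\ge 1/2\}$ strictly shrinks by a radius decrement bounded below in terms of $r_t$. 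For this we use Lemma~\ref{lem:T-monotone}(\ref{lem:T_mon_2}) to compare $g_{t+1}=\mathcal T g_t\le\mathcal T g_{t-1}=g_t$, once $g_1\le g_0$ is established on the relevant range.

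\textbf{Step 3: passage to the limit.} Since $g_t$ is monotone, $g^*$ exists. By continuity (Lemma~\ref{lem:T-con} together with dominated convergence in $\mu$), $g^*$ is a fixed point of $\mathcal T$ that is radial and nonincreasing in the radius. Suppose $g^*(w,x)\ge 1/2$ somewhere, and let $\rho$ be the largest radius with that property. Apply the strict curvature deficit of Step 1 at that boundary, with the hub contribution from Step 2 giving a strictly negative increment. This yields $\mu_{g^*}<0$ at radius $\rho$, hence $g^*<1/2$ there, a contradiction.

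The main obstacle is Step 1/3: making the deficit strictly quantitative uniformly as the blue radius shrinks, and in particular establishing $g_1\le g_0$ inside $B_r$ (trivially true, since $g_0=1$ there) and outside it. The outside case is the hard part: points just outside the boundary might gain blue. I would handle this by replacing $g_0$ with a supersolution $h\ge g_0$, namely the radial profile of the half-space subsolution shifted outward, for which $\mathcal Th\le h$. I would then run the monotone iteration from $h$ instead, using Lemma~\ref{lem:T-monotone}(\ref{lem:T_mon_2}).
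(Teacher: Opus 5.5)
Your argument relies on two things that are not available. The first is time-monotonicity, and it fails at $t=0$: since $g_1=\Phi(\cdot)>0=g_0$ off $B_r$, the inequality $g_1\le g_0$ is false. Without it, the chain $g_{t+1}=\mathcal T g_t\le\mathcal T g_{t-1}=g_t$ in Step~2 and the monotone-limit and fixed-point reasoning in Step~3 have no basis. Your proposed repair does not work. Any $h\ge g_0$ must be identically $1$ on $B_r$, but the half-space subsolution of Lemma~\ref{lem:valid_f} is strictly below $1$ everywhere. On its blue side it also satisfies $f\le\mathcal T f$, the opposite of what a supersolution needs. Shifting it or taking its radial profile changes neither fact. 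Even a genuine supersolution would only give nested superlevel sets $\{g_{t+1}\ge 1/2\}\subseteq\{g_t\ge 1/2\}$, not a quantified shrinkage.

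The second and main gap is that you never obtain an erosion rate that is uniform in time, and the Step~3 contradiction cannot replace it. The reflection argument of Lemma~\ref{lem:advantage_red} works only because half-space profiles satisfy $f(w,z)+f(w,-z)=1$. A radial fixed point $g^*$ has no such symmetry about the sphere of radius $\rho$. Just inside that sphere $g^*$ may be close to $1$, while just outside it is only slightly below $1/2$, so reflected pairs contribute positively. Nothing in your argument stops this from outweighing the curvature deficit and the hub term; the hubs adjacent to every vertex have total mass only of order $n^{2-\tau}$.

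The paper never analyses $g^*$ directly. It proves by induction that $g_t(w,x)\le f^H_t\bigl(w,z-(t-1)E(W)\bigr)$ for all tangent half-spaces $H$, comparing with the symmetric half-space evolutions rather than with $g$ itself.
\begin{itemize}
\item Let $y$ be the boundary point of $B_r$ farthest from $x$. The half-ball of radius $k^{1/d}$ at $y$ lying outside $H_y$ contains hubs of weight between $\Theta(n)/k$ and $W$, and these are adjacent to $x$.
\item These hubs look red from $H_y$ and deep blue from the antipodal tangent half-space $\overline{H_y}$. This gives a strictly positive advantage deficit $\delta=\Theta(\mathrm{Vol}(B)\,n^{1-\tau}\varepsilon)$ relative to $f_t^{\overline{H_y}}$.
\item A Lipschitz bound on $f_t$ in $z$, uniform in $t$ for $w\le W$, converts $\delta$ into a fixed shift $E(W)>0$.
\end{itemize}
Since $E(W)$ depends neither on $t$ nor on the current radius, a point at depth $z$ drops below $1/2$ after roughly $z/E(W)$ steps and stays there. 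Your ``decrement bounded below in terms of $r_t$'' needs exactly this uniformity. A decrement that degenerates as $r_t$ shrinks would let the blue region converge to a ball of positive radius.
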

\noindent The only additional assumption needed for Theorem~\ref{thm:non_convergence} is the imposition of a global weight cut-off $W$ onto the mean-field model. This is motivated by the underlying finite GIRG. Indeed, in a GIRG on $n$ vertices with weight tail exponent $\tau>2$, the maximum weight is of
order
\begin{align*}
    w \in\Theta\left(n^{\frac{1}{\tau-1}}\right).
\end{align*}
with high probability. On this event, truncating the weight distribution at $W$ does not remove any vertex, hence it does not change the realized graph (when keeping the positions fixed) and therefore does not change the dynamics. Consequently, the truncated mean-field model
can be viewed as the natural mean-field analogue of the finite GIRG on the high-probability event
that no vertex exceeds weight $W$. In the limit of $n\rightarrow\infty$, such a weight cut-off does not significantly change the results established in Section~\ref{sec:definitions}. It also does not falsify any results established in Section~\ref{sec:meanfield}, as these rely only on the properties of vertices with small weight.
\begin{proof}[Proof of Theorem~\ref{thm:non_convergence}]
Let $B_t = \{x : \exists w,\,  g_t(w, x) \ge 1/2\}$ be the set of all positions that violate the statement of the lemma at time $t$. We will show that the set $B_\infty$ is empty.

    We start by showing that for all $t$, all points in $B_t$ must have come from the original ball $B_r$. Let $x \not \in B_r$ and let $H$ be a separating hyperplane that is tangential to $B_r$ at the point that is closest to $x$. Transform this into local coordinates using Lemma~\ref{lem:local_coordinates} such that $B_r$ ends up on the right side of the hyperplane. Then by construction $g_0 \le f_0$ everywhere, for the half-space initialization $f_0$ on these local coordinates. By  Lemma~\ref{lem:T-monotone}.\ref{lem:T_mon_2}, $g_t \le f_t$ for all $t$, so in particular, $g^* \le f^*$. Let $z = x[0]$ in local coordinates. Since $H$ separated $x$ from $B_r$,  $z<0$. By Lemma~\ref{lem:valid_f}, $f^*(\cdot, z)$ is upper bounded in this regime by a value strictly below 1/2. Thus $x \not\in B_t$.

    We have shown that the series $B_t$ can not outgrow $B_r$, next we show that it shrinks. Define $\mathcal H$ to be the set of tangential half-spaces of $B_r$. As before, each $H \in \mathcal{H}$ induces a half-space initialization $f_0^H$ in local coordinates, where $B_r$ sits on the right side. Notice, that each $H$ induces the same sequence of functions just over different local coordinates. To keep thinks simple, we will think of each $f^H$ to be always evaluated over their local coordinate system, and $g$ over the global coordinate system. Whenever we relate the two, you may assume that $z = A_H x$, where $A_H$ is the system induced by $H$. Naturally 
    \begin{align*}
        g_0 \le f_0^H
    \end{align*}
    and by Lemma~\ref{lem:T-monotone}.\ref{lem:T_mon_2}
    \begin{align*}
        g_1 \le f_1^H
    \end{align*}
    for all $H \in \mathcal{H}$. Next we show that each additional mean-field update introduces an additional error into $g$. Specifically, we show that
    \begin{align*}
        g_t(w, x) \le f_t^H\big(w, z-(t-1)E(W)\big)
    \end{align*}
     for all $H \in \mathcal{H}$ and an error $E$ uniform in $w$ and $z$. We show this proposition by induction on $t\ge 2$. Consider any point $x$ in global coordinates and let $y$ be the point on the boundary of $B_r$ farthest away from $x$. Consider the ball of radius $k^{1/d}$ centred at $y$, at least half $B$ of this ball lies on the left of the half-space $H_y \in \mathcal H$ going through $y$. The volume of that half is $\Theta(k)$, furthermore all vertices in it are at most $ \Theta(n^{1/d})$ away from $x$ (this is a global constraint in a $d$-dimensional cube of volume $n$) and thus all vertices of weight at least
    \begin{align*}
        w= \Theta(n)/k
    \end{align*}
    connect to all other vertices in the space. We now compare how much the space of $B$ contributes to the advantage of $x$ in $f_t^{H_y}$ and $f_t^{\overline{H_y}}$, where $\overline{H_y} \in \mathcal{X}$ is the half-space going through the antipodal point of $y$. In $H_y$, $B$ is treated as just to the left of the half-space, and thus by Lemma~\ref{lem:valid_f}, $f_1^{H_y}<1/2$ in the full domain. In $\overline{H_y}$, the region is treated as if $z\ge r$ and thus bounded by Theorem~\ref{thm:main} strictly away from $1/2$ in the opposite direction. We define the error as the contribution difference to the advantage of $B$ for the two perspectives
    \begin{align*}
        \delta =  \mu_{f_t^{\overline{H_y}}}^B(\cdot, x) -\mu_{f_t^{H_y}}^B(\cdot, x) = \Theta(1) \, \mathrm{Vol}(B) \, n^{1-\tau}\, \varepsilon.
    \end{align*}
    By induction hypothesis $g_t$ is upper bounded by shifted version of all functions $f_t^H$ for $H \in \mathcal{H}$, so in particular upper bounded by the minimum of $f_t^{H_y}$ and $f_t^{\overline{H_y}}$
     \begin{align*}
         g_{t+1}(w, x) &=\Phi\!\left(\frac{\mu_{g_t}(w,x)}{\sqrt{2\,\lambda(w)}}\right) \\
         &\le \Phi\!\left(\frac{\mu_{\min\left\{f_t^{H_y}, f_t^{\overline{H_y}}\right\}}\big(w,z-(t-1)E(W)\big)}{\sqrt{2\,\lambda(w)}}\right).
     \end{align*}
     Remembering the region $B$, this can be upper bounded by 
    \begin{align*}
         g_{t+1}(w, x) &\le \Phi\!\left(\frac{\mu_{f_t^{\overline{H_y}}}\big(w,z-(t-1)E(W)\big)-\delta}{\sqrt{2\,\lambda(w)}}\right).
    \end{align*}
    By choice of $H_y$, $\overline{H_y}$, is the half-space where the localized distance to the boundary $z$ is minimized. Naturally $\mu_{f_t^{\overline{H_y}}}$ is the minimizer over all $H$. Thus
    \begin{align*}
         g_{t+1}(w, x) &\le \Phi\!\left(\frac{\min_{H\in \mathcal H}\left\{\mu_{f_t^{H}}\big(w,z-(t-1)E(W)\big)-\delta\right\}}{\sqrt{2\,\lambda(w)}}\right) \\
         &\le \min_{H\in \mathcal H}\left\{\Phi\!\left(\frac{\mu_{f_t}\big(w,z-(t-1)E(W)\big)-\delta}{\sqrt{2\,\lambda(w)}}\right)\right\}
    \end{align*}
    Plugging in the definition of the advantage 
    \begin{align*}
        \mu_{f_t}(w, z) -\delta&= \int_{\mathcal{D}, \mathcal X}\big(2f_t(w',z')-1\big)\,\mathbbm{1}\big((w,z),(w',z')\big)-\delta/n\,d\eta(w',z') \\
        &\le  \int_{\mathcal{D}, \mathcal X}\big(2(f_t(w',z')-\delta/n)-1\big)\,\mathbbm{1}\big((w,z),(w',z')\big)\,d\eta(w',z').
    \end{align*}
    Differentiating $f_{t}$,
    \begin{align*}
        \bigl|\partial_z f_{t}(w,z)\bigr| =
\varphi\!\Big(\tfrac{\mu_{f_{t-1}}(w,z)}{\sqrt{2\lambda(w)}}\Big)\,
\frac{\bigl|\partial_z\mu_{f_{t-1}}(w,z)\bigr|}{\sqrt{2\lambda(w)}}
\le
\frac{\bigl|\partial_z\mu_{f_{t-1}}(w,z)\bigr|}{\sqrt{4\pi\,\lambda(w)}} = \frac{\bigl|\partial_z\lambda_{1,f_{t-1}}(w,z)\bigr|}{\sqrt{\pi\,\lambda(w)}},
    \end{align*}
as $\phi$ is upper bounded by $1/\sqrt{2\pi}$. Shifting $z$
by some offset $h$ only translates the ball of influence of radius $R(w,w')=(kww')^{1/d}$, hence the change in $\lambda_{1,f_t}$
is bounded by the volume of the symmetric difference of two radius-$R$ balls shifted by $h$,
which is at most $O (R^{d-1}|h|)$. Dividing by $|h|$ and integrating over $w'$ gives
\[
|\partial_z\lambda_{1,f_{t-1}}(w,z)|
\;=\;
O\!\left(\int (kww')^{\frac{d-1}{d}}\,\rho(w')\,dw'\right)
=
O\!\left(w^{\frac{d-1}{d}}\right),
\]
since $\int (w')^{(d-1)/d}\rho(w')\,dw'<\infty$ for $\tau>2$. Therefore
$|\partial_z\mu_{f_t}(w,z)| = O\!\big(w^{\frac{d-1}{d}}\big)$, and using $\lambda(w)=\Theta(w)$ yields
\[
|\partial_z f_{t}(w,z)|
=
O\!\left(\frac{w^{\frac{d-1}{d}}}{\sqrt{w}}\right)
=
O\!\left(w^{\frac{d-1}{d}-\frac12}\right),
\]
with constants independent of $t$. Then for all $w \le W$, $f_t(w,z)$ is $L_W$ Lipschitz for $L_W \in O(W^{\frac{d-1}{d}-\frac{1}{2}})$. Thus there exists a function $E(W)$ uniform in $w, z$ and $t$ such that 
\begin{align*}
    f_t(w, z)-\delta/n \le f_t(w, z-E(W)).
\end{align*}
Plugging this back into the bound for $g_t(w, x)$
\begin{align*}
         g_{t+1}(w, x) &\le \min_{H\in \mathcal H}\left\{\Phi\!\left(\frac{\mu_{f_t}\big(w,z-t\,E(W)\big)}{\sqrt{2\,\lambda(w)}}\right)\right\} \\
         &= \min_{H\in \mathcal H}\left\{f_{t+1}\big(w,z-t\,E(W)\big)\right\}
    \end{align*}
which completes the induction. 

We can now prove the result. Let $x \in B_r$ and let $z$ be the distance from $x$ to the boundary of $B_r$. Then for $t_0=z/E(W)$,
\begin{align*}
         g_{t_0+2}(w, x) &\le  f_{t_0+2}\big(w,z-(t_0+1)\,E(W)\big) \\
         &= f_{t_0+2}\big(w,-E(W)\big) \\
         &< 0,
    \end{align*}
and thus $x \not \in B_{t_0+2}$. By the same argument $x$ never again enters the set $B_{t}$ for any $t>t_0+2$. Since this holds for all $x \in B_r$, $B_\infty$ must be empty.
\end{proof}
\noindent The result from \ref{thm:non_convergence} shows that the missing symmetry of the ball initialization erodes any chances of a mean-field fix point $g^*$, in which both opinions survive. This result does not however rule out survival for an unbounded number of steps. In fact, as we will see in the next sections, such a result is achievable.

\subsubsection{Bounding Local Curvature}

\noindent From this point forward, we will only use the local coordinate system of $v$. It is thus convenient to define the new position of all previously defined objects in this new coordinate system. Let $X \subseteq \mathcal{X}$ be any previously defined object, by abuse of notation define $X := A_v X$, where $A_v$ is the transformation from Lemma~\ref{lem:local_coordinates}. This in particular redefines $x_v := Ax_v$ and $p :=Ap = 0$. 

In the new coordinate system, $p$ is still the closest point to $x_v$ on the boundary of $B_r$ as $A_v$ preserves distances. As $x_v-p$ is orthogonal to $H_p$, it holds that $\lVert x_v-H_p \rVert = \lVert x_v-p \rVert = \lVert x_v\rVert = |z_v|$. In particular, if $x_v \in B_r$ then $\lVert x_v-H_p \rVert =z_v$. We can now formalize the intuition from before: from the perspective of $v$, the local boundary of $B_r$ looks identical to $H_p$.
\begin{Lemma} \label{lem:bounding_curvature}
    Let $B_r \subseteq \mathcal{X}$ be a ball of radius $r \in \omega(1)$, let $v\in \mathcal{X}$ and let $H_p$ be the corresponding half-space as defined above. Let $x \in \mathcal{X}$ be some point with 
    \begin{align*}
        r_\mathrm{max} := k^{1/d}\, r^{(1-\varepsilon)/2}\ge \lVert x_v-x\rVert,
    \end{align*}
    for some $\varepsilon>0$. Let $p_x$ be the closest point on the boundary of $B_r$ to $x$ and let $h_x$ be the closest point on $H_p$ to $x$. Then 
    \begin{align*}
        \left|\lVert x-p_x\rVert-\lVert x-h_x\rVert \right|\le \Delta
    \end{align*}
    for $\Delta \in o(1)$. 
\end{Lemma}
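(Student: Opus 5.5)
The plan is to compute both distances in the local coordinate system of Lemma~\ref{lem:local_coordinates}, where everything becomes explicit. In these coordinates $p=0$, $H_p=\{z=0\}$, the ball $B_r$ lies to the right of $H_p$, and its centre is $c=(r,0,\dots,0)$. The vertex satisfies $x_v=(z_v,0,\dots,0)$, because $x_v-p$ is orthogonal to $H_p$. Write $x=(z,y)$ with $z=x[0]$ and $y\in\mathbb R^{d-1}$ the lateral part. Then
\begin{align*}
\lVert x-h_x\rVert = |z|, \qquad \lVert x-p_x\rVert = \bigl|\,r-\lVert x-c\rVert\,\bigr|, \qquad \lVert x-c\rVert=\sqrt{(r-z)^2+\lVert y\rVert^2}.
\end{align*}
Since $\bigl||a|-|b|\bigr|\le|a-b|$, it suffices to bound the signed difference $\bigl|(r-\lVert x-c\rVert)-z\bigr| = \lVert x-c\rVert-|r-z|$ in the case $r-z>0$.

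The key estimate comes from $\sqrt{a^2+b^2}-a\le b^2/(2a)$ for $a>0$. With $a=r-z$ and $b=\lVert y\rVert$ it gives
\begin{align*}
0\le \lVert x-c\rVert-(r-z)\le \frac{\lVert y\rVert^2}{2(r-z)}.
\end{align*}
The lateral offset is controlled by the hypothesis: because $x_v$ lies on the $z$-axis, $\lVert y\rVert\le\lVert x-x_v\rVert\le r_\mathrm{max}=k^{1/d}r^{(1-\varepsilon)/2}$, so $\lVert y\rVert^2\le k^{2/d}r^{1-\varepsilon}$. If moreover $r-z\ge r/2$, the difference is at most $k^{2/d}r^{-\varepsilon}$. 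Since $k\in\Theta(1)$ and $r\in\omega(1)$, this gives $\Delta=k^{2/d}r^{-\varepsilon}\in o(1)$.

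The main obstacle is guaranteeing $r-z\ge r/2$, i.e.\ that $x$ is not near or beyond the centre of $B_r$. This does not follow from the hypotheses as literally stated. If $x_v$ sits at the centre of $B_r$, then moving laterally by $r_\mathrm{max}$ changes the distance to the sphere by about $r_\mathrm{max}$ but leaves the distance to $H_p$ unchanged. I would therefore make explicit the regime in which the lemma is used, namely vertices close to the boundary, and assume $|z_v|\le r/4$. This is harmless because only vertices at bounded distance from the interface matter later.

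Under $|z_v|\le r/4$ we get $|z|\le|z_v|+r_\mathrm{max}\le r/4+o(r)$, since $r_\mathrm{max}=o(r)$. Hence $r-z\ge r/2$ for large $r$, and the estimate above applies, as does the case distinction on the sign of $r-z$ (here $r-z>0$ always holds). Points $x$ outside the ball, with $z<0$, are covered by the same computation because $r-z>r$ there. Combining the cases gives $\bigl|\lVert x-p_x\rVert-\lVert x-h_x\rVert\bigr|\le k^{2/d}r^{-\varepsilon}=o(1)$ uniformly.
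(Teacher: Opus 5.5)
Your proof is correct under your added hypothesis, and your objection to the literal statement is valid. For $\varepsilon\le 1$ the discrepancy can be of order $r_\mathrm{max}$ when $x_v$ is near the centre of $B_r$. This is the regime the paper needs, since otherwise $w_\mathrm{max}<1$; for $\varepsilon>1$ the claim is trivial, because both distance functions are $1$-Lipschitz and agree at $x_v$.

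The paper argues without coordinates. It takes the point $p_{h_x}$ of the sphere closest to $h_x$ and reduces both sign cases to $\lVert h_x-p_{h_x}\rVert$ via the triangle inequality. It then bounds this using the right triangle with vertices $p$, $h_x$ and the centre, obtaining $\sqrt{r^2+r_\mathrm{max}^2}-r\le r_\mathrm{max}^2/(2r)$. This is sound in the case $\lVert x-p_x\rVert\ge\lVert x-h_x\rVert$, where no assumption on $v$ is needed.

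The other case contains every $x$ inside $B_r$. There the paper writes $\lVert x-h_x\rVert-\lVert x-p_x\rVert\le\lVert h_x-p_{h_x}\rVert+\lVert x-p_{h_x}\rVert-\lVert x-p_x\rVert\le\lVert h_x-p_{h_x}\rVert$ ``by choice of $p_x$''. But minimality of $p_x$ gives $\lVert x-p_{h_x}\rVert-\lVert x-p_x\rVert\ge 0$, so the last step goes the wrong way. Your formula shows what that term really is. For $x$ inside $B_r$ the discrepancy is exactly $\lVert x-c\rVert-(r-z)$, which is at most $\lVert y\rVert^2/(2(r-z))$ when $z<r$. Its supremum over $\lVert y\rVert\le r_\mathrm{max}$ is $o(1)$ only if $r-z$ is large compared with $r_\mathrm{max}^2$.

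Your route costs a factor $2$ in the constant, but it treats both signs in one formula and exposes the hypothesis that is genuinely needed. It also gives more than the lemma states. You bound the signed discrepancy $\bigl|(r-\lVert x-c\rVert)-z\bigr|$, which is what Lemma~\ref{lem:speed_of_erosion} actually uses, since it compares signed distances via $z\ge z'-\Delta$. The unsigned statement does not directly cover the thin region between $H_p$ and the sphere. There the signed difference is the sum of the two distances rather than their difference.

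Two small remarks.
\begin{itemize}
\item Your hypothesis can be relaxed to $z_v\le(1-c)r$ for any constant $c>0$, giving $\Delta\le k^{2/d}r^{-\varepsilon}/c$. Vertices outside the ball are harmless, since then $r-z\ge r-r_\mathrm{max}$ for all relevant $x$.
\item The restriction is not automatically harmless downstream. The induction in Lemma~\ref{lem:speed_of_erosion} and the conclusion of Theorem~\ref{thm:speed_of_erosion} range over all $x$, including points deep inside $B_r$, and those would need a separate argument.
\end{itemize}
Neither remark affects your proof of the lemma itself.
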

\begin{proof}  
Let $p_{h_x}$ be the closest point on $B_r$ from $h_x$. We prove the statement by case distinction. Assume $\lVert x-p_x\rVert-\lVert x-h_x\rVert \ge 0$, then by choice of $p_x$ and the triangle inequality over the point $h_x$,
\begin{align*}
    \left|\lVert x-p_x\rVert-\lVert x-h_x\rVert \right| &= \lVert x-p_x\rVert-\lVert x-h_x\rVert \\
    & \le \lVert x-p_{h_x}\rVert-\lVert x-h_x\rVert \\
    & \le \lVert h_x-p_{h_x}\rVert.
\end{align*}
For the second case assume $\lVert x-h_x\rVert - \lVert x-p_x\rVert \ge 0$. Then by triangle inequality over the point $p_{h_x}$ and choice of $p_x$
\begin{align*}
    \left|\lVert x-h_x\rVert-\lVert x-p_x\rVert \right| &= \lVert x-h_x\rVert-\lVert x-p_x\rVert \\
    & \le \lVert h_x-p_{h_x}\rVert + \lVert x-p_{h_x}\rVert - \lVert x-p_x\rVert \\
    & \le \lVert h_x-p_{h_x}\rVert. 
\end{align*}
Define $\Delta = \lVert h_x-p_{h_x}\rVert$ and note that $\left|\lVert x-p_x\rVert-\lVert x-h_x\rVert \right|\le \Delta$. By assumption, $\lVert x_v-x\rVert \le r_\mathrm{max}$, so in particular also $\lVert x_v[1,\dots, d-1]-x[1,\dots, d-1]\rVert \le r_\mathrm{max}$ and $\lVert h_x\rVert \le r_\mathrm{max}$. This lets us construct the right triangle consisting of the points $p$, $h_x$ and the centre of $B_r$. By definition, the distance from the centre of $B_r$ to $p$ is $r$, the distance from $p$ to $h_x$ is at most $r_\mathrm{max}$ and the line from the centre to $h_x$ goes through $p_{h_x}$. This gives the following inequality from the Pythagorean theorem
\begin{align*}
    (r+\Delta)^2 \le r^2+r_\mathrm{max}^2.
\end{align*}
Solving for $\Delta$ and applying the definition of $r_\mathrm{max}$,
\begin{align*}
    \Delta &\le r \left(\sqrt{1+\frac{r_\mathrm{max}^2}{r^2}}-1\right) \\
    &\le r \left(\sqrt{1+r^{-(1+\varepsilon)}}-1\right).
\end{align*}
Then applying the generic bound $\sqrt{1+y} \le 1+y/2$ for $y\ge 0$,
\begin{align*}
    \Delta &\le r \left(\frac{r^{-(1+\varepsilon)}}{2}\right) \\
    &\le \frac{r^{-\varepsilon}}{2} \in o(1),
\end{align*}
for $r \in \omega(1)$ and $\varepsilon >0$.
\end{proof}
\noindent The result from Lemma~\ref{lem:bounding_curvature} implies that the local geometry experienced by $v$ and all its not to far away neighbours match approximatively the geometry of a half-space, given that $r$ is large enough. The choice of $r_\mathrm{max}$ in Lemma~\ref{lem:bounding_curvature} naturally lends itself to a definition of a weight $w_\mathrm{max}$, defined in such a way that for all $w_v \le w_\mathrm{max}$ all neighbours of $v$ with weight at most $w_\mathrm{max}$ are at most $r_\mathrm{max}$ away from $v$. We define $w_\mathrm{max}$ by the largest weight that satisfies this property. Solving for $w_\mathrm{max}$ in the edge criterion yields
\begin{align}\label{eq:w_max}
    w_\mathrm{max}:= \frac{r_\mathrm{max}^{d/2}}{k^{1/2}} = r^{d (1-\varepsilon) /4}.
\end{align}

\begin{Lemma}\label{lem:speed_of_erosion}
    Let $g_0$ be the ball initialization on $B_r$ and let $f_0$ be the half-space initialization, and cut off the weights at maximum $W = w_\mathrm{max}$. Then for all $x, w$ and $t$
    \begin{align*}
        g_t(w, x) \ge f_t(w, z-t\Delta),
    \end{align*}
    where $z$ is the sign distance of $x$ to the boundary of $B_r$ and $\Delta$ is the curvature parameter from Lemma~\ref{lem:bounding_curvature}.
\end{Lemma}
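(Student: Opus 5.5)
We argue by induction on $t$, comparing $g_t$ at every point $x$ with the half-space profile $f_t$ in the local coordinates of that point (Lemma~\ref{lem:local_coordinates}). The base case $t=0$ is immediate. If $x\in B_r$, then $z\ge 0$, so $f_0(w,z)=1=g_0(w,x)$. If $x\notin B_r$, then $g_0(w,x)=0\ge f_0(w,z)$, because $z<0$. Along the way we also need the monotonicity of $f_t$ in $z$. We obtain it by induction as well: $f_0$ is monotone in $z$, and a translate of a $z$-monotone function in the $z$-direction is pointwise comparable to it. Lemma~\ref{lem:T-monotone}.\ref{lem:T_mon_2} then gives $f_{t+1}(w,z)\le f_{t+1}(w,z+h)$ for $h\ge 0$, because $\mathcal T$ commutes with translations.

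For the induction step, assume $g_t(w',x')\ge f_t(w',z'-t\Delta)$ for all $(w',x')$, where $z'$ is the signed distance of $x'$ to $\partial B_r$. Fix $(w,x)$ and pass to the local coordinates $A_x$, in which $H_p=\{z=0\}$ and $x$ lies at height $z$. With the weight cut-off $W=w_{\max}$, every neighbour $x'$ of a vertex of weight $w\le w_{\max}$ satisfies $\|x-x'\|\le r_{\max}$, by the choice of $w_{\max}$ in~\eqref{eq:w_max}. For every such neighbour, Lemma~\ref{lem:bounding_curvature} shows that its true signed distance $z'$ to $\partial B_r$ and its first local coordinate $x'[0]$ (the signed distance to $H_p$) differ by at most $\Delta$. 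Care is needed with the sign convention: $\Delta$ bounds the absolute difference of distances, and both quantities share the same sign except in a strip of width $\Delta$, where the bound still holds. So $z'\ge x'[0]-\Delta$.

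By the monotonicity of $f_t$ in $z$,
\[
g_t(w',x')\ge f_t(w',z'-t\Delta)\ge f_t\bigl(w',x'[0]-(t+1)\Delta\bigr)
\]
on the entire neighbourhood of $x$. Because $\mu$ only integrates $2g_t-1$ against the adjacency indicator, this gives
\[
\mu_{g_t}(w,x)\ge \mu_{\tilde f}(w,x),\qquad \tilde f(w',x'):=f_t\bigl(w',x'[0]-(t+1)\Delta\bigr).
\]
Since $\mathcal T$ commutes with translation in the $z$-direction,
\[
g_{t+1}(w,x)=\Phi\!\left(\frac{\mu_{g_t}(w,x)}{\sqrt{2\lambda(w)}}\right)\ge (\mathcal T\tilde f)(w,z)=f_{t+1}\bigl(w,z-(t+1)\Delta\bigr),
\]
which closes the induction.

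The main obstacle is that Lemma~\ref{lem:bounding_curvature} is stated only for points within distance $r_{\max}$ of $x_v$. We must make sure that the weight cut-off really confines all relevant neighbours to this window. This includes neighbours of large weight and vertices $x$ whose own weight is close to $w_{\max}$. The first thing to try is to verify the edge criterion directly: $\|x-x'\|^d\le kww'\le k w_{\max}^2$ should be at most $r_{\max}^d$. Since~\eqref{eq:w_max} is chosen precisely so that this holds, the remaining task is a careful reconciliation of the exponents in~\eqref{eq:w_max}. If it fails, we shrink $w_{\max}$ by a polynomial factor, which still tends to infinity and leaves $\Delta\in o(1)$. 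A secondary issue is that the torus is finite, but $B_r$ is assumed small relative to $n^{1/d}$, so $r_{\max}$-neighbourhoods do not wrap around.
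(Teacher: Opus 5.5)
Your proposal is correct and follows the paper's proof essentially step for step. Both arguments run by induction on $t$ and pass to the local coordinates of the updated vertex. The weight cut-off confines its neighbours to the $r_{\max}$-window. Lemma~\ref{lem:bounding_curvature} together with the $z$-monotonicity of $f_t$ then gives $g_t\ge f_t(\cdot,x'[0]-(t+1)\Delta)$ on that neighbourhood, and translation invariance of $\mathcal T$ absorbs the shift.

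You also make explicit several points the paper uses only tacitly:
\begin{itemize}
\item the $z$-monotonicity of $f_t$, obtained from Lemma~\ref{lem:T-monotone} and translation invariance;
\item the check $k w_{\max}^2=r_{\max}^d$, which settles your cut-off concern exactly, with no need to shrink $w_{\max}$;
\item the sign issue for points outside $B_r$ but to the right of $H_p$.
\end{itemize}
For that last case, note that the inequality $z'\ge x'[0]-\Delta$ needs $\Delta$ to be a uniform $o(1)$ bound of order $r_{\max}^2/r$. The pointwise quantity $\|h_x-p_{h_x}\|$ from that lemma's proof is marginally too small there: $r-\sqrt{r^2-a^2}$ exceeds $\sqrt{r^2+a^2}-r$. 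This is harmless, since the statement only uses $\Delta=o(1)$.
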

\begin{proof}
    We proof this statement by induction on $t$. Clearly $g_0(w, x) = f_0(w, z)$ for all $w, x$ and $z$ being the signed distance from $x$ to the boundary of $B_r$. 
    
    Thus fix some $t\ge1$ and some vertex $v$ of type $(w_v,x_v)$. Let $z_v$ be the signed distance of $x_v$ to the boundary of $B_r$, then consider the neighbours of $v$. By assumption all weights are at most $w_{max}$ and thus all neighbours of $v$ are inside a ball $B_{r_\mathrm{max}}\!(x_v)$ around $x_v$.  For any $x \in B_{r_\mathrm{max}}(x_v)$, let $z$ be its signed distance from the boundary and let $z'$ be its coordinate in the local coordinate system of $x$, by Lemma~\ref{lem:bounding_curvature}, 
    \begin{align*}
        z \ge z'-\Delta.
    \end{align*}
   Plugging this inequality into the pointwise lower bound of the induction hypothesis, for all neighbours of $v$,
    \begin{align*}
        g_{t-1}(w, x) \ge f_{t-1}(w, z'-t\,\Delta).
    \end{align*}
    Importantly, this bounds the probability for each $(w,x)$ not by their own local coordinate system, but by that of $v$. We can thus lower bound the advantage $\mu_{g_{t-1}}(w_v, x_v)$:
    \begin{align*}
        \mu_{g_{t-1}}(w_v, x_v) &= \int_{\mathcal X}(2g_{t-1} (w,x)-1)\,\mathbbm{1}\big((w_v,x_v),(w,x)\big)\,d\eta(w,x)\\
        &\ge \int_{\mathcal X}(2f_{t-1} (w,z'-t\,\Delta)-1)\,\mathbbm{1}\big((w_v,x_v),(w,x)\big)\,d\eta(w,x).
    \end{align*}
    Shifting the the ambient space by $t \,\Delta$
    \begin{align*}
        \mu_{g_{t-1}}(w_v, x_v) &\ge \int_{\mathcal X}(2f_{t-1} (w,z')-1)\,\mathbbm{1}\big((w_v,x_v),(w,x)+t\,\Delta\big)\,d\eta(w,x)  \\
        &= \int_{\mathcal X}(2f_{t-1} (w,z')-1)\,\mathbbm{1}\big((w_v,x_v-t\,\Delta),(w,x)\big)\,d\eta(w,x)  \\
        &= \mu_{f_{t-1}}(w_v, x_v-t\,\Delta).
    \end{align*}
    Lastly, plugging this advantage bound into the definition of $\mathcal{T}$
    \begin{align*}
        g_{t}(w_v, x_v) &=\Phi\!\left(\frac{\mu_{g_{t-1}}(w_v, x_v)}{\sqrt{ 2 \, \lambda(w_v)}}\right) \\
        &\ge \Phi\!\left(\frac{\mu_{f_{t-1}}(w_v, x_v-t\, \Delta)}{\sqrt{ 2 \, \lambda(w_v)}}\right) \\
        &= f_{t}(w_v, x_v-t\, \Delta),
    \end{align*}
which concludes the induction and the proof.
\end{proof}
\noindent Our main result in this section follows immediately from this lemma, that for growing $r$ in any constant time $t$ the ball of minority opinion shrinks only by an additive $o(1)$.

\begin{Theorem}\label{thm:speed_of_erosion}
    Let $B_r$ be a ball of radius $r=\omega(1)$ and let $\Delta(r) = o(1)$ be the curvature parameter from Lemma~\ref{lem:bounding_curvature}. Cut off the weights at $w_{\mathrm{max}}$ as given in Lemma~\ref{lem:speed_of_erosion}. Then for every $t\in \mathbb{N}$, all points $x$ in the ball of radius $r-t\Delta$ concentric with $B_r$ and all $w\le w_{\mathrm{max}}$ satisfy  
    \begin{align*}
        g_t(w, x) > 1/2.
    \end{align*}
\end{Theorem}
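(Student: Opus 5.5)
The plan is to combine the comparison of Lemma~\ref{lem:speed_of_erosion} with the valid subsolution of Lemma~\ref{lem:valid_f}, transported into the local coordinates of each point.

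Fix $t\in\mathbb N$, a weight $w\le w_{\mathrm{max}}$, and a point $x$ in the ball of radius $r-t\Delta$ concentric with $B_r$. Let $z$ be the signed distance of $x$ to the boundary of $B_r$, with the convention that $z>0$ inside $B_r$. Then $z\ge t\Delta$, and strictly $z>t\Delta$ if $x$ lies in the open ball. Lemma~\ref{lem:speed_of_erosion}, applied with the weight cut-off $W=w_{\mathrm{max}}$, gives
\begin{align*}
g_t(w,x)\ \ge\ f_t(w,z-t\Delta),
\end{align*}
where $f_t$ is the half-space evolution in the local coordinates of $x$ from Lemma~\ref{lem:local_coordinates}. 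It therefore suffices to show that $f_t(w,z')>1/2$ for every $z'>0$; the boundary case $z'=0$ is discussed at the end.

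For this, use Theorem~\ref{thm:valid-lb}. It gives $f_t(w,z')\ge f(w,z')$ for all $z'\ge 0$, where $f$ is the valid subsolution of Lemma~\ref{lem:valid_f}. For $z'>0$ we have $\underline{\mu}_f^{\mathcal B}(w,\min\{z',r_{I}\})>0$, because every factor in Lemma~\ref{lem:advantage_blue} is positive: $z'^{(d+1)/2}>0$, the term $1-(1+z'/2r_I)^{d(1-\tau)}$ is positive since $\tau>1$, and $f(1,r_I)-1/2=\delta^*-1/2=\varepsilon>0$. Applying $\Phi$ gives $f(w,z')>1/2$, and hence $g_t(w,x)>1/2$.

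Before this chain can be used, two points must be checked.

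\begin{enumerate}
\item The half-space results (Theorems~\ref{thm:main} and \ref{thm:valid-lb}, Lemma~\ref{lem:valid_f}) must still hold in the weight-truncated model. The subsolution construction only uses the contributions of vertices near $v$ and lower bounds such as Lemma~\ref{lem:advantage_red}. Truncation removes mass symmetrically from both sides of the hyperplane, so the symmetry argument in Lemma~\ref{lem:advantage_red} still gives a nonnegative contribution. The blue-region bound of Lemma~\ref{lem:advantage_blue} changes only by truncating the inner $w$-integral. For $r$ large, $w_{\mathrm{max}}$ exceeds $(r_I+z_v/2)^d/(kw_v)$ in the relevant range, so that bound is unaffected.
\item The normalization $\lambda(w)$ changes slightly under truncation, by a factor of $1+o(1)$. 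This only perturbs the constant in Lemma~\ref{lem:f_fixpoint}, so one still needs $k$ to be a large constant.
\end{enumerate}

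The main obstacle is the boundary case. A point on the sphere of radius $r-t\Delta$ has $z=t\Delta$, and the bound only yields $f_t(w,0)\ge 1/2$, not a strict inequality. I would handle this in one of two ways. The first is to read the statement for the open ball. The second is to note that $\mathcal T$ is strictly monotone (Lemma~\ref{lem:T-monotone}.\ref{lem:T_mon_2} holds with strict inequality when the functions differ on a set of positive measure in the neighbourhood). The comparison in Lemma~\ref{lem:speed_of_erosion} is strict for $t\ge 1$, because at $t=0$ the ball initialization strictly dominates the shifted half-space profile on the positive-measure region between the sphere and its tangent hyperplane. Either route gives $g_t(w,x)>1/2$ throughout the stated ball.
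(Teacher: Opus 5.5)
Your proposal is correct and is the paper's argument: the paper's proof also chains $g_t(w,x)\ge f_t(w,z-t\Delta)$ from Lemma~\ref{lem:speed_of_erosion} with $f_t(w,z')>1/2$ for $z'>0$. Your derivation of the latter from Theorem~\ref{thm:valid-lb} and the subsolution of Lemma~\ref{lem:valid_f} is more accurate than the paper's citation of Theorem~\ref{thm:main}, whose proof only yields $f_t\ge\delta^*$ for $z\ge z_0(w)$. Your explicit check that the weight truncation is harmless is likewise more careful than the paper.

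You are also right that points with $z=t\Delta$ need a strict inequality, which the paper merely asserts. If you take your second route, be precise about where strict domination occurs at the first step: it is the thin cap $\{x\in B_r: z'<\Delta\}$ near $p$, between the sphere and the shifted plane $\{z'=\Delta\}$. It is not the region outside $B_r$, where both profiles vanish in the relevant neighbourhood. Consequently strictness holds only near the boundary at $t=1$, and then spreads inward by at least $k^{1/d}$ per update. That suffices, since you only need it at depth $t\Delta$.
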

\begin{proof}
Consider any point $x$ in the ball of radius $r-t\Delta$ concentric to $B_r$, and let $z$ be the sign distance of $x$ to the boundary of $B_r$. Then $z\ge t\Delta$. Hence, $g_t(w, x) > f_t(w, z-t\Delta)$ by Lemma~\ref{lem:speed_of_erosion}, and $f_t(w, z-t\Delta) > 1/2$ by Theorem~\ref{thm:main}.
\end{proof}

\section{Discussion}

\noindent In this work, we have investigated the majority-vote opinion dynamics on Geometric Inhomogeneous Random Graphs (GIRGs) to understand how the interplay of latent geometry and degree heterogeneity shape social influence processes \cite{bringmann2019geometric,Flache2017}. Our findings represent a significant departure from the classical understanding of coarsening dynamics observed in standard statistical physics models \cite{bray2002theory,Dornic2001}. While traditional models on regular Euclidean lattices or non-spatial random graphs typically evolve toward a global consensus where one domain is completely eliminated \cite{cooper2010multiple,lyons2017probability, cox1989coalescing,gartner2018majority,mossel2014majority}, we have demonstrated that spatial complex networks can support the long-term coexistence of competing opinions through the formation of stable boundaries. This mirrors the persistence of ideological clusters seen in real-world societies. 

Our mathematical contribution provides a rigorous explanation for this \textit{arrested coarsening} \cite{Castellano2009,bray2002theory} through a tractable mean-field model of the interface. By analysing the macroscopic limit of a planar interface, we established the existence of a stable, non-trivial limiting distribution $f^*(z)$ for the interface profile. Our results show that when the average degree is sufficiently large, the update operator $\mathcal{T}$ reinforces the existing majority on either side of the boundary. 

Furthermore, we extend the results to finite balls by bounding the effect of local curvature. Here the results are more subtle. We show that any finite ball $B_r$ of minority opinion with radius $r = \omega(1)$ will erode \emph{eventually}, but that the speed of erosion is $o(1)$. However, when we return from the mean-field model to the original graph setting, then an erosion speed of $o(1)$ is not possible. Recall that in the GIRG model, the $n$ vertices draw their position uniformly at random in a cube of volume $n$. Hence, for any fixed point on the boundary the closest vertex has expected distance $\Theta(1)$. Therefore, while in the mean-field model erosion of speed $o(1)$ can occur, in the graph setting the boundary of the ball can only withdraw in discrete steps of order $\Omega(1)$. This holds regardless of the size of the ball, even for balls of growing radius $r=\omega(1)$. Hence, a speed of $o(1)$ (or sufficiently small constant speed) in the mean-field naturally translates into speed zero for the corresponding graphs. Thus our results for finite balls in the mean-field are compatible with the experimental results showing that coarsening comes to a halt for graphs. 

Note that due to monotonicity of the process, our result extends to every region that merely \emph{contains} a sufficiently large ball. Such a region may shrink and coarsen its boundaries, but will maintain a stable core which shrinks with speed $o(1)$ that preserves the local opinion. 

For the analysis of finite balls we need to include a truncation of weights both for out positive and negative result. Note that such a truncation is natural since a ball of radius $r$ in the GIRG model does not contain vertices of arbitrary weights. However, while the truncation in Theorem~\ref{thm:non_convergence} is so large that with high probability it does not exclude any vertices of the GIRG model, the truncation in Lemma~\ref{lem:speed_of_erosion} and Theorem~\ref{thm:speed_of_erosion} is of order $r^{d(1-\varepsilon)/4}$. While it is still true that \emph{locally} (at any fixed boundary point) with high probability there are no vertices of higher weight, this is not generally true globally. The largest weight of a vertices in a ball of radius $r$ is of order roughly $r^{d/(\tau-1)}$, and this is only smaller than the cut-off if $\tau > 5$. It remains an open problem to strengthen our result to a cut-off of the weight that globally does not exclude any vertices in the initial ball.

The arrested coarsening that we show in this paper is a consequence of two aspects of the considered process. Firstly, we model conformity by the \emph{majority update rule} where nodes adopt the majority opinion among their neighbours. A classical alternative model is the Voter Model, where a node updates to the opinion of a \emph{random neighbour}. The Voter Model is closely connected to the theory of coalescing random walks and generally does not lead to stable phase boundaries~\cite{cooper2010multiple,lyons2017probability, cox1989coalescing}. As our results show, the majority update rule can allow opinions to coexist. However, this is not sufficient, as previous work has also shown that coexistence does not happen in many other graph models even under the majority dynamics~\cite{gartner2018majority,mossel2014majority}. Thus, the second crucial ingredient is the choice of the underlying graph model. In this work we choose GIRGs as an established model for social networks~\cite{bringmann2019geometric,blasius2024robust}. This model combines a heavy-tailed degree distribution with an underlying geometry which induces many properties that are also known from real social networks. Those include in particular clustering, communities, and small separators~\cite{bringmann2019geometric,lengler2017existence,kaufmann2025expanders}. It is natural that those properties foster coexistence of competing opinions, since an isolated cluster of one opinion can only be stable if every vertex has at least as many edges inside of the cluster than outside, a condition closely connected to the concepts of communities and of small edge separators. 

However, our simulations show that not all regions of minority survive: while small localized domains of a "blue" opinion are quickly eroded by the surrounding "red" majority, domains that exceed a critical size do not disappear. Instead, they settle into stable, rounded configurations where the interface between opinions becomes stationary. This transition point is highly sensitive to the power-law exponent $\tau$; as $\tau$ increases and the network becomes more localized with fewer long-range hubs, survival becomes possible at smaller scales.

\subsection{Future Research Directions}

\noindent While our current analysis focuses on the zero-temperature case to maintain analytical tractability, several avenues for future research remain:

\begin{itemize}
    \item \textbf{Non-Zero Temperature:} Our current analysis focuses on the zero-temperature case of the GIRG model. I would be very interesting to extend the analysis to include a temperature parameter $T=1/\alpha$, which would allow for long-range edges (``weak ties'') that are not directly mandated by the geometry~\cite{bringmann2019geometric}. This would help determine if thermal noise eventually overcomes the geometric stability of the interface.
    \item \textbf{Alternative Dynamics:} We have already mentioned that the classical Voter Model does not lead to stable phase boundaries, while our results shows the opposite for the Majority Vote Model. However, there are other models which interpolate between both variants. An example is the 3-Majority Dynamics, where a node updates with the majority opinion of three uniformly sampled neighbours, and the 2-Choice Dynamics, where one of the three Opinions of the 3-Majority Model is replaced by the node's own opinion. As the Majority Vote Model, coexistence quickly vanishes in some graphs~\cite{berenbrink2017ignore,ghaffari2018nearly}, but may be stable in social network models like GIRGs. The related process of bootstrap percolation has been studied for GIRGs, but only in the context of a single opinion spreading through the graph~\cite{Koch2016}. Note that for a sufficiently connected graph such processes degenerate eventually, so coexistence in this context means survival of both opinions for a super-polynomial time. 
    \item \textbf{Navigation and Spreading:} Given that GIRGs are known to be navigable and efficient for rumour spreading \cite{bringmann2017greedy, kaufmann2026rumour}, exploring the competition between a "fast" spreading rumour and a "stable" majority-vote opinion could provide insights into how misinformation persists alongside established consensus. A similar process has been analysed  in~\cite{candellero2021coexistence} for Hyperbolic Random Graphs, which is a special case of the GIRG model~\cite{bringmann2019geometric,bringmann2025average}. Alternatively, one could study two competing spreading processes, such as Competing First-Passage Percolation~\cite{antunovic2017competing}. 
\end{itemize}


\vspace{6pt} 






\section*{Author Contributions}
Conceptualization, M.B. and J.L.; methodology, M.B. and J.L.; software, M.B.; formal analysis, M.B. and J.L.; data curation, M.B.; writing---original draft preparation, M.B.; writing---review and editing, M.B. and J.L.; visualization, M.B.; supervision, J.L..; funding acquisition, J.L. All authors have read and agreed to the published version of the manuscript.

\section*{Funding}
This research was funded by the Swiss National Science Foundation, grant number 200021-232060. The first author was supported by the German Academic Scholarship Foundation.

\section*{Data Availability}
A link to the source code used to produce the experimental data can be found in this \href{https://doi.org/10.5281/zenodo.18800883}{repository}.

\section*{Acknowledgments}
We thank Konstantinos Lakis for his insightful feedback on this manuscript.

\section*{Conflicts of Interest}
The authors declare no conflicts of interest.

\section*{Abbreviations}
The following abbreviations are used in this manuscript:
\\

\noindent 
\begin{tabular}{@{}ll}
GIRG & Geometric Inhomogeneous Random Graph
\end{tabular}

\bibliography{references}

@inproceedings{Bringmann2017,
  author    = {Bringmann, Karl and Keusch, Ralph and Lengler, Johannes},
  title     = {Sampling Geometric Inhomogeneous Random Graphs in Linear Time},
  booktitle = {25th Annual European Symposium on Algorithms (ESA 2017)},
  pages     = {20:1--20:15},
  volume    = {87},
  year      = {2017}}

@article{blasius2024external,
  title={On the external validity of average-case analyses of graph algorithms},
  author={Bl{\"a}sius, Thomas and Fischbeck, Philipp},
  journal={ACM Transactions on Algorithms},
  volume={20},
  number={1},
  pages={1--42},
  year={2024},
  publisher={ACM New York, NY}
}

@article{bringmann2019geometric,
  title={Geometric inhomogeneous random graphs},
  author={Bringmann, Karl and Keusch, Ralph and Lengler, Johannes},
  journal={Theoretical Computer Science},
  volume={760},
  pages={35--54},
  year={2019},
  publisher={Elsevier}
}

@inproceedings{dayan2024expressivity,
  title={Expressivity of geometric inhomogeneous random graphs—metric and non-metric},
  author={Dayan, Benjamin and Kaufmann, Marc and Schaller, Ulysse},
  booktitle={International Conference on Complex Networks},
  pages={85--100},
  year={2024},
  organization={Springer}
}

@article{bringmann2025average,
  title={Average distance in a general class of scale-free networks},
  author={Bringmann, Karl and Keusch, Ralph and Lengler, Johannes},
  journal={Advances in Applied Probability},
  volume={57},
  number={2},
  pages={371--406},
  year={2025},
  publisher={Cambridge University Press}
}

@article{Krioukov2010,
  author  = {Krioukov, Dmitri and Papadopoulos, Fragkiskos and Kitsak, Maksim and Vahdat, Amin and Bogu{\~n}{\'a}, Mari{\'a}n},
  title   = {Hyperbolic geometry of complex networks},
  journal = {Physical Review E},
  volume  = {82},
  number  = {3},
  pages   = {036106},
  year    = {2010}
}

@article{boguna2010sustaining,
  title={Sustaining the internet with hyperbolic mapping},
  author={Bogun{\'a}, Mari{\'a}n and Papadopoulos, Fragkiskos and Krioukov, Dmitri},
  journal={Nature communications},
  volume={1},
  number={1},
  pages={62},
  year={2010},
  publisher={Nature Publishing Group UK London}
}

@article{komjathy2021penalising,
  title={Penalising transmission to hubs in scale-free spatial random graphs},
  author={Komj{\'a}thy, J{\'u}lia and Lapinskas, John and Lengler, Johannes},
  journal={Annales de l'Institut Henri Poincare (B) Probabilites et statistiques},
  volume={57},
  number={4},
  pages={1968--2016},
  year={2021},
  publisher={Institut Henri Poincar{\'e}}
}

@article{komjathy2020explosion,
  title={Explosion in weighted hyperbolic random graphs and geometric inhomogeneous random graphs},
  author={Komj{\'a}thy, J{\'u}lia and Lodewijks, Bas},
  journal={Stochastic Processes and their Applications},
  volume={130},
  number={3},
  pages={1309--1367},
  year={2020},
  publisher={Elsevier}
}

@article{cerf2024balanced,
  title={Balanced Bidirectional Breadth-First Search on Scale-Free Networks},
  author={Cerf, Sacha and Dayan, Benjamin and De Ambroggio, Umberto and Kaufmann, Marc and Lengler, Johannes and Schaller, Ulysse},
  journal={arXiv preprint arXiv:2410.22186},
  year={2024}
}

@inproceedings{kaufmann2026rumour,
  title={Rumour spreading depends on the latent geometry and degree distribution in social network models},
  author={Kaufmann, Marc and Lakis, Kostas and Lengler, Johannes and Ravi, Raghu Raman and Schaller, Ulysse and Sturm, Konstantin},
  booktitle={Proceedings of the 2026 Annual ACM-SIAM Symposium on Discrete Algorithms (SODA)},
  pages={6264--6307},
  year={2026},
  organization={SIAM}
}

@inproceedings{kaufmann2025expanders,
  title={Expanders in Models of Social Networks},
  author={Kaufmann, Marc and Lengler, Johannes and Schaller, Ulysse and Sturm, Konstantin},
  booktitle={International Workshop on Graph-Theoretic Concepts in Computer Science},
  pages={302--315},
  year={2025},
  organization={Springer}
}

@inproceedings{bringmann2017greedy,
  title={Greedy routing and the algorithmic small-world phenomenon},
  author={Bringmann, Karl and Keusch, Ralph and Lengler, Johannes and Maus, Yannic and Molla, Anisur Rahaman},
  booktitle={Proceedings of the ACM Symposium on Principles of Distributed Computing},
  pages={371--380},
  year={2017}
}

@inproceedings{kaufmann2024sublinear,
  title={Sublinear cuts are the exception in bdf-girgs},
  author={Kaufmann, Marc and Ravi, Raghu Raman and Schaller, Ulysse},
  booktitle={International Conference on Complex Networks and Their Applications},
  pages={366--377},
  year={2024},
  organization={Springer}
}

@article{lengler2017existence,
  title={Existence of small separators depends on geometry for geometric inhomogeneous random graphs},
  author={Lengler, Johannes and Todorovic, Lazar},
  journal={arXiv preprint arXiv:1711.03814},
  year={2017}
}

@article{antunovic2017competing,
  title={Competing first passage percolation on random regular graphs},
  author={Antunovi{\'c}, Ton{\'c}i and Dekel, Yael and Mossel, Elchanan and Peres, Yuval},
  journal={Random Structures \& Algorithms},
  volume={50},
  number={4},
  pages={534--583},
  year={2017},
  publisher={Wiley Online Library}
}

@article{candellero2021coexistence,
  title={Coexistence of competing first passage percolation on hyperbolic graphs},
  author={Candellero, Elisabetta and Stauffer, Alexandre},
  journal={Annales de l'Institut Henri Poincare (B) Probabilites et statistiques},
  volume={57},
  number={4},
  pages={2128--2164},
  year={2021},
  organization={Institut Henri Poincar{\'e}}
}

@inproceedings{berenbrink2017ignore,
  title={Ignore or comply? On breaking symmetry in consensus},
  author={Berenbrink, Petra and Clementi, Andrea and Els{\"a}sser, Robert and Kling, Peter and Mallmann-Trenn, Frederik and Natale, Emanuele},
  booktitle={Proceedings of the ACM Symposium on Principles of Distributed Computing},
  pages={335--344},
  year={2017}
}

@inproceedings{ghaffari2018nearly,
  title={Nearly-tight analysis for 2-choice and 3-majority consensus dynamics},
  author={Ghaffari, Mohsen and Lengler, Johannes},
  booktitle={Proceedings of the 2018 ACM Symposium on Principles of Distributed Computing},
  pages={305--313},
  year={2018}
}

@inproceedings{bierwirth2025stable,
  title={Stable Boundaries of Opinion Dynamics in Heterogeneous Spatial Complex Networks},
  author={Bierwirth, Mats and Lengler, Johannes},
  booktitle={International Conference on Complex Networks and Their Applications},
  pages={to appear},
  year={2025},
  organization={Springer}
}

@article{blasius2024robust,
  title={Robust Parameter Fitting to Realistic Network Models via Iterative Stochastic Approximation},
  author={Bl{\"a}sius, Thomas and Cohen, Sarel and Fischbeck, Philipp and Friedrich, Tobias and Krejca, Martin S},
  journal={arXiv preprint arXiv:2402.05534},
  year={2024}
}

@article{komjathy2024polynomial,
  title={Polynomial growth in degree-dependent first passage percolation on spatial random graphs},
  author={Komj{\'a}thy, J{\'u}lia and Lapinskas, John and Lengler, Johannes and Schaller, Ulysse},
  journal={Electronic Journal of Probability},
  volume={29},
  pages={1--48},
  year={2024},
  publisher={The Institute of Mathematical Statistics and the Bernoulli Society}
}

@article{kaufmann2025assortativity,
  title={Assortativity in geometric and scale-free networks},
  author={Kaufmann, Marc and Schaller, Ulysse and Bl{\"a}sius, Thomas and Lengler, Johannes},
  journal={arXiv preprint arXiv:2508.04608},
  year={2025}
}

@article{komjathy2023four,
  title={Four universal growth regimes in degree-dependent first passage percolation on spatial random graphs I},
  author={Komj{\'a}thy, J{\'u}lia and Lapinskas, John and Lengler, Johannes and Schaller, Ulysse},
  journal={arXiv preprint arXiv:2309.11840},
  year={2023}
}

@article{jorritsma2020not,
  title={Not all interventions are equal for the height of the second peak},
  author={Jorritsma, Joost and Hulshof, Tim and Komj{\'a}thy, J{\'u}lia},
  journal={Chaos, Solitons \& Fractals},
  volume={139},
  pages={109965},
  year={2020},
  publisher={Elsevier}
}

@article{Krapivsky2003,
  author  = {Krapivsky, Paul L. and Redner, Sidney},
  title   = {Dynamics of Majority Rule in Two-State Interacting Spin Systems},
  journal = {Physical Review Letters},
  volume  = {90},
  number  = {23},
  pages   = {238701},
  year    = {2003}
}

@article{Galam2002,
  author    = {Galam, Serge},
  title     = {Minority opinion spreading in random geometry},
  journal   = {The European Physical Journal B},
  volume    = {25},
  number    = {4},
  pages     = {403--406},
  year      = {2002}
}

@article{Koch2016,
  author    = {Koch, Christoph and Lengler, Johannes},
  title     = {Bootstrap percolation on geometric inhomogeneous random graphs},
  journal   = {Internet Mathematics},
  year      = {2021}
}

@article{Castellano2009,
  author  = {Castellano, Claudio and Fortunato, Santo and Loreto, Vittorio},
  title   = {Statistical physics of social dynamics},
  journal = {Reviews of Modern Physics},
  volume  = {81},
  number  = {2},
  pages   = {591--646},
  year    = {2009}
}

@article{Flache2017,
  author  = {Flache, Andreas and Mäs, Michael and Feliciani, Thomas and Chattoe-Brown, Edmund and Deffuant, Guillaume and Huet, Sylvie and Lorenz, Jan},
  title   = {Models of Social Influence: Towards the Next Frontiers},
  journal = {Journal of Artificial Societies and Social Simulation},
  volume  = {20},
  number  = {4},
  pages   = {2},
  year    = {2017}
}

@article{bray2002theory,
  title={Theory of phase-ordering kinetics},
  author={Bray, Alan J},
  journal={Advances in Physics},
  volume={51},
  number={2},
  pages={481--587},
  year={2002},
  publisher={Taylor \& Francis}
}

@article{Dornic2001,
  author  = {Dornic, I. and Chaté, H. and Chave, J. and Hinrichsen, H.},
  title   = {Critical Coarsening without Surface Tension: The Universality Class of the Voter Model},
  journal = {Physical Review Letters},
  volume  = {87},
  number  = {4},
  pages   = {045701},
  year    = {2001}
}

@article{cooper2010multiple,
  title={Multiple random walks in random regular graphs},
  author={Cooper, Colin and Frieze, Alan and Radzik, Tomasz},
  journal={SIAM Journal on Discrete Mathematics},
  volume={23},
  number={4},
  pages={1738--1761},
  year={2010},
  publisher={SIAM}
}

@article{mossel2014majority,
  title={Majority dynamics and aggregation of information in social networks},
  author={Mossel, Elchanan and Neeman, Joe and Tamuz, Omer},
  journal={Autonomous Agents and Multi-Agent Systems},
  volume={28},
  number={3},
  pages={408--429},
  year={2014},
  publisher={Springer}
}

@inproceedings{gartner2018majority,
  title={Majority model on random regular graphs},
  author={G{\"a}rtner, Bernd and Zehmakan, Ahad N},
  booktitle={Latin American symposium on theoretical informatics},
  pages={572--583},
  year={2018},
  organization={Springer}
}

@article{cox1989coalescing,
  title={Coalescing random walks and voter model consensus times on the torus in $Z^d$},
  author={Cox, J Theodore},
  journal={The Annals of Probability},
  pages={1333--1366},
  year={1989},
  publisher={JSTOR}
}

@book{lyons2017probability,
  title={Probability on trees and networks},
  author={Lyons, Russell and Peres, Yuval},
  volume={42},
  year={2017},
  publisher={Cambridge University Press}
}

@article{GeneratingGirgs, title={Efficiently generating geometric inhomogeneous and hyperbolic random graphs}, volume={10}, number={4}, journal={Network Science}, author={Bläsius, Thomas and Friedrich, Tobias and Katzmann, Maximilian and Meyer, Ulrich and Penschuck, Manuel and Weyand, Christopher}, year={2022}, pages={361–380}}
\end{document}